\pgfplotsset{width=3.5cm, compat=1.8, grid style={dashed}}
\definecolor{darkgreen}{rgb}{0.0, 0.5, 0.0}
\definecolor{darkred}{rgb}{0.9, 0.0, 0.0}
\definecolor{darkblue}{rgb}{0.0, 0.0, 0.9}
\tikzset{
	state/.style={rounded rectangle,draw=black,inner sep=1.5mm,minimum width=9mm,minimum height=5mm},
	rstate/.style={rectangle,draw=black,inner sep=1ex,minimum width=9mm,minimum height=5mm},
	istate/.style={minimum width=5mm, minimum height=6mm},
	bullet/.style={circle,draw=black,fill=black,inner sep=0cm, minimum size=0.7mm},
	initial text={},
	every initial by arrow/.style={->,>=stealth'},
	ptran/.style={rounded corners, ->,>=stealth',auto},
	ntran/.style={rounded corners, -,auto}
}
\spnewtheorem{notation}[theorem]{Notation}{\bfseries}{\upshape}
\spnewtheorem{setting}[theorem]{Setting}{\bfseries}{\upshape}
\DeclarePairedDelimiter\ceil{\lceil}{\rceil}
\numberwithin{equation}{section}
\renewcommand\subsubsection{\@startsection{subsubsection}{3}{\z@}%
	{-6\p@ \@plus -6\p@ \@minus -6\p@}
	{-0.5em \@plus -0.22em \@minus -0.1em}%
	{\normalfont\normalsize\bfseries\boldmath}}
\newcommand{\R}{\mathbb{R}}
\renewcommand{\P}{\mathcal{P}}
\newcommand{\prb}{\mathbf{Pr}}
\newcommand{\Ab}{\mathbf{A}}
\newcommand{\Pb}{\mathbf{P}}
\newcommand{\yb}{\mathbf{y}}
\newcommand{\bb}{\mathbf{b}}
\newcommand{\M}{\mathcal{M}}
\newcommand{\Scal}{\mathcal{S}}
\newcommand{\vb}{\mathbf{v}}
\newcommand{\C}{\mathcal{C}}
\newcommand{\T}{\mathcal{T}}
\newcommand{\Val}{\operatorname{Val}}
\newcommand{\CC}{\operatorname{CC}}
\newcommand{\Loc}{\operatorname{Loc}}
\newcommand{\loc}{\operatorname{loc}}
\newcommand{\inv}{\operatorname{inv}}
\newcommand{\true}{\operatorname{true}}
\newcommand{\false}{\operatorname{false}}
\newcommand{\nb}{\boldsymbol{n}}
\newcommand{\eb}{\boldsymbol{e}}
\newcommand{\zb}{\mathbf{z}}
\newcommand{\all}{{\operatorname{all}}}
\newcommand{\Act}{\operatorname{Act}}
\newcommand{\supp}{\operatorname{supp}}
\newcommand{\last}{\operatorname{last}}
\newcommand{\Dist}{\operatorname{Dist}}
\newcommand{\goal}{\operatorname{goal}}
\newcommand{\fail}{\operatorname{fail}}
\renewcommand{\S}{\mathfrak{S}}
\renewcommand{\Pr}{\mathrm{Pr}}
\newcommand{\fin}{{\operatorname{fin}}}
\newcommand{\Paths}{\operatorname{Paths}}
\newcommand{\vol}{\operatorname{vol}}
\newcommand{\sem}{{\operatorname{sem}}}
\newcommand{\zerocl}{c_0}
\newcommand{\poly}{\operatorname{poly}}
\newcommand{\timeup}{\mathlarger{\uparrow}}
\newcommand{\hashP}{\texttt{\#P}}
\newcommand{\PP}{\texttt{PP}}
\newcommand{\PSPACE}{\texttt{PSPACE}}
\newcommand{\EXPTIME}{\texttt{EXPTIME}}
\newcommand{\pointsize}{0.5} 
\newcommand{\Implies}{\Vdash}
\newif\iflongversion
\newcommand{\citeapp}[1]{\Cref{#1}}
\newcommand{\citeapp}[1]{the technical report~\cite[\Cref{#1}]{}}
\newcommand{\referapp}{the appendix}
\newcommand{\referapp}{the technical report~\cite{}}
\begin{document}

\title{Minimal witnesses for \\probabilistic timed automata}
\author{Simon Jantsch \and Florian Funke \and Christel Baier}
\authorrunning{Simon Jantsch, Florian Funke, Christel Baier}
\titlerunning{Minimal witnesses for probabilistic timed automata}
%
\institute{Technische Universität Dresden\thanks{This work was funded by DFG grant 389792660 as part of \href{https://perspicuous-computing.science}{TRR~248}, the Cluster of Excellence EXC 2050/1 (CeTI, project ID 390696704, as part of Germany’s Excellence Strategy), DFG-projects BA-1679/11-1 and BA-1679/12-1, and the Research Training Group QuantLA (GRK 1763).}\\
\email{\{simon.jantsch, florian.funke, christel.baier\}@tu-dresden.de}}

\maketitle

\begin{abstract}
  Witnessing subsystems have proven to be a useful concept in the analysis of probabilistic systems, for example as diagnostic information on why a given property holds or as input to refinement algorithms.
  This paper introduces witnessing subsystems for reachability problems in probabilistic timed automata (PTA).
  Using a new operation on difference bounds matrices, it is shown how Farkas certificates of finite-state bisimulation quotients of a PTA can be translated into witnessing subsystems.
  We present algorithms for the computation of minimal witnessing subsystems under three notions of minimality, which capture the timed behavior from different perspectives, and discuss their complexity.
\end{abstract}

\section{Introduction}
\label{sec:intro}

A \emph{witnessing subsystem} is a part of a probabilistic system that by itself carries enough probability to satisfy a given constraint.
Hence, it provides insight into which components of the system are sufficient for the desired behavior, and on the other hand, which can be disabled without interfering with it.
The concept of witnessing subsystems (sometimes, dually, refered to as \emph{critical subsystems}) for discrete-time Markov chains (DTMC) and Markov decision processes (MDP) has received considerable attention\cite{JansenAKWKB11,JansenWAZKBS14, WimmerJAKB14, FunkeJB20}.
Apart from providing diagnostic information on why a property holds, witnessing subsystems have been used for automated refinement and synthesis algorithms\cite{HermannsWZ2008,CeskaHJK19}.

In this paper we introduce witnessing subsystems for reachability constraints in probabilistic timed automata (PTA) \cite{KwiatkowskaNSS02, Beauquier03}. PTAs combine real-time, non-deterministic, and probabilistic behavior and are a widely used formalism for the modeling and verification of reactive systems such as communication protocols and scheduler optimization tasks \cite{KwiatkowskaNS03, NormanPS13}. However, as the state space of PTAs is inherently uncountable, the theory of witnessing subsystems in finite-state probabilistic systems is not applicable. Our generalization applies to both maximal and minimal reachability probabilities, where particularly the latter needs to be treated with special care in the timed setting.

A continuous algebraic counterpart to witnessing subsystems in MDPs are \emph{Farkas certificates}, which are vectors certifying threshold properties of the form $\prb_{\M}^{\min}(\lozenge\goal)\geq\lambda$ or $\prb_{\M}^{\max}(\lozenge\goal)\geq\lambda$~\cite{FunkeJB20}. We pave a two-way street between witnessing subsystems in a PTA and Farkas certificates of finite-state bisimulation quotients by giving explicit procedures how one can be obtained from the other. It is noteworthy that this translation makes finite-state methods available for the certification of threshold properties in infinite-state models.

Relevant information from a subsystem can only be expected after optimization along suitable minimality criteria, the most prevalent of which for MDPs is state-minimality. In the timed setting, however, the usefulness of a minimality criterion is more volatile under changing the specific practical problem. For this reason, we introduce three notions of minimality aimed at finding witnessing subsystems with few locations, strong invariants, or small invariant volume.

In all three cases, we present single-exponential algorithms for the computation of minimal witnessing subsystems. They heavily rely on the connection between PTA subsystems and Farkas certificates of bisimulation quotients and can also be adapted to faster heuristic approaches.
Furthermore, we observe that while comparing two subsystems according to their location number or invariance strength is not difficult, it is inherently harder (\PP-hard) to compare their invariance volume. All omitted proofs can be found in~\referapp{}.

\subsubsection{Contributions.}
The notion of (strong) subsystem for PTAs is introduced (\Cref{def:subsystem}) and justified by proving that reachability probabilities do not increase under passage to a subsystem (\Cref{prop:monotonicity}).
It is shown that subsystems of a PTA induce Farkas certificates in time-abstracting bisimulation quotients (\Cref{prop:subsysandfarkascert}). Vice versa, a conceptual construction of PTA subsystems from Farkas certificates of such quotients is given (\Cref{def:indsubsys} and \Cref{prop:witsubsysandfarkcert}), which relies on a new operation on difference bounds matrices (DBMs), see \Cref{def:zone closure}.
 Three notions of minimality for PTA subsystems are introduced and compared. We present mixed integer linear programs for computing location- and invariance-minimal subsystems. Volume-minimal subsystems can be computed with the aid of a \emph{multi-objective} mixed integer linear program (\Cref{sec:minimal}).
 Regarding volume-minimality, we establish \PP-hardness of comparing two witnessing subsystems according to their volume (\Cref{prop:PP}).

\subsubsection{Related work.}
Exact and heuristic approaches for computing minimal and small witnessing subsystems in DTMCs have been proposed in~\cite{JansenAKWKB11,JansenWAZKBS14}, and generalizations to MDPs have been considered in~\cite{AndresDR08, WimmerJAKB14, FunkeJB20}.
The approach in~\cite{WimmerJAK15} is most closely related to our work as it finds counterexamples for a high-level description (a guarded command language for MDPs).
Model checking PTAs against PTCTL specifications has first been described in \cite{KwiatkowskaNSS02}. 
Subsequent approaches use digital clocks \cite{KwiatkowskaNPS07}, symbolic model checking techniques \cite{KwiatkowskaNSW07}, or the boundary region graph \cite{JurdzinskiKNT09}. The work \cite{BerendsenJK06} presents an algorithm for price-bounded reachability in PTAs. The complexity of model checking PTAs was studied in \cite{JurdzinskiLS07, LaroussinieS07}. The notion of bisimulation that we use was introduced in \cite{ChenHK08} and used for verification techniques in \cite{Sproston11}.  The computation and analysis of counterexamples in (non-probabilistic) timed automata was studied in~\cite{KolblLW2019,DierksKL2007}. Certification of unreachability was recently examined for timed automata \cite{WimmerM20}.
DBMs are a widely used data structure for timed systems (see~\cite{Tripakis98, KwiatkowskaNSW07}) that were first analyzed in \cite{Dill1990} and most notably used in the model checker UPPAAL\cite{Behrmann_etal06}.

\section{Preliminaries}\label{sec:prelims}

For any set $S$ we denote by $\Dist(S)$ the set of probability distributions on $S$ (seen as a discrete measurable space). Given $s\in S$, we let $\delta_s\in \Dist(S)$ denote the Dirac distribution on $s$, i.e. $\delta_s(t) = 0$ for all $t\neq s$ and $\delta_s(s) = 1$.

\subsubsection{Markov decision processes.}

A \emph{Markov decision process} (MDP) is a tuple $\M = (S, \Act, T, s_0)$, where $S$ is a set of \emph{states}, $\Act$ is a finite set of \emph{actions}, $T\colon S\to 2^{\Act\times \Dist(S)}$ is a \emph{transition function}, and $s_0\in S$ is the \emph{initial state}. We assume that $T(s)$ is non-empty and finite for all $s\in S$. A finite path is a sequence $\pi = s_0(\alpha_0, \mu_0)s_1(\alpha_1, \mu_1)... s_n$ such that for all $0\leq i\leq n-1$ we have $(\alpha_i, \mu_i)\in T(s_i)$ and $\mu_i(s_{i+1})>0$. A \emph{scheduler} $\S$ selects for each such finite path $\pi$ in $\M$ an element of $T(s_n)$. Infinite paths are defined accordingly. For $s\in S$ and $G\subseteq S$ the supremum $\prb_{\M, s}^{\max}(\lozenge G) := \sup_{\S} \; \Pr_{\M, s}^\S(\lozenge G)$ and infimum $\prb_{\M, s}^{\min}(\lozenge G):= \inf_{\S} \; \Pr_{\M, s}^\S(\lozenge G)$, ranging for all schedulers $\S$ over the probability of those $\S$-paths starting in $s$ and eventually reaching $G$, are attained (see, for example, \cite[Lemmata 10.102 and 10.113]{BaierK2008}).
We define $\prb^{*}_{\M}(\lozenge G) = \prb^{*}_{\M,s_0}(\lozenge G)$ for $* \in \{\min,\max\}$.
Let $\M = (S_\all, \Act, T, s_0)$ be an MDP with two distinguished absorbing states $\goal$ and $\fail$. A \emph{(weak) subsystem} $\mathcal{M}' $ of $\mathcal{M}$, denoted $\mathcal{M}'\subseteq\mathcal{M}$, is an MDP $\M' = (S'_\all,\Act,T', s_0)$ with $\goal, \fail \in S'_\all\subseteq S_\all$, and for each $(\alpha,\mu')\in T'(s)$ there exists $(\alpha, \mu) \in T(s)$ such that for $v\neq\fail$ we have $\mu'(v) \in \{0, \mu(v) \}$. Intuitively, in a subsystem some states and actions of $\M$ are deleted and some edges are redirected to $\fail$. A subsystem is \emph{strong} if, vice versa, for each $(\alpha, \mu) \in T(s)$ there exists $(\alpha,\mu')\in T'(s)$ with $\mu'(v) \in \{0, \mu(v)\}$. \footnote{This is a slight deviation from \cite{FunkeJB20}, where only strong subsystems were considered. Here we distinguish between weak and strong subsystems since it will reflect the corresponding notions for PTAs established in \Cref{sec:subsystem}.}

\subsubsection{Farkas certificates.} Let us assume that for all $s\in S := S_\all\setminus\{\goal, \fail\}$ we have $\prb^{\min}_s(\lozenge (\goal\lor\fail)) > 0$. In the following we write $\R^{\M}$ for the real vector space indexed by $ \bigcup_{s\in S} \{ s\}\times T(s)$. To each of the threshold properties $\prb_{s_0}^*(\lozenge\goal) \sim \lambda$ for $*\in\{\min, \max\}$ and $\sim \; \in \{ \leq, <,\geq, >\}$, one can associate a polytope (possibly with non-closed faces) sitting either in $\R^S$ or $\R^{\M}$ that is non-empty if and only if the threshold is satisfied. Elements in this polytope are called \emph{Farkas certificates} for the respective threshold property. The polytope of Farkas certificates for lower-bound thresholds $\prb_{s_0}^*(\lozenge\goal) \geq \lambda$ are of the form
\begin{gather*}
	\P^{\min}_{\M}(\lambda) = \{ \zb\in \R^{S}_{\geq 0} \mid \Ab\zb\leq \bb \land \zb(s_0) \geq \lambda \}, \; \text{ for } * = \min\\
	\P^{\max}_{\M}(\lambda) = \{ \yb\in \R^{\M}_{\geq 0}\mid \yb\Ab\leq \delta_{s_0} \land \yb\bb \geq\lambda \}, \; \text{ for } * = \max,
\end{gather*}
where $\Ab\in \R^{\M\times S}$ and $\bb\in\R^S$ can be taken as a black box in this paper. The main result of \cite{FunkeJB20} states that to any Farkas certificate $\zb\in\P_\M^{\min}(\lambda)$ one can associate a strong subsystem $\M'\subseteq\M$ whose states are contained in $\supp(\zb) = \{ s\in S\mid \zb(s) > 0\}$ and which satisfies $\prb_{\M', s_0}^{\min}(\lozenge\goal) \geq \lambda$. The corresponding statement holds for $\yb\in\P_\M^{\max}(\lambda)$ and subsystems with states contained in $\supp_S(\yb) = \{ s\in S\mid \exists \alpha\in T(s).\; \yb(s, \alpha) >0 \}$. 

\subsubsection{Clock constraints and difference bounds matrices.}

We fix a finite number of \emph{clocks} $\C = \{ c_0, c_1, ..., c_n\}$, where by convention $c_0$ is a designated clock always representing $0$ so that absolute and relative time bounds can be written in a uniform manner. A \emph{valuation} on $\C$ is a map $v\colon \C\to\R_{\geq 0}$ such that $v(c_0) = 0$. The set of all valuations on $\C$ is denoted by $\Val(\C)$. For a valuation $v$ and $t\in\R_{\geq 0}$ we denote by $v+t$ the valuation with $(v+t)(c) = v(c) + t$ for all $c\in \C\setminus\{c_0\}$. Given $C\subseteq\C$ we let $v[C:=0]$ be the \emph{reset} valuation with $v[C:=0](c) = 0$ for $c\in C$ and $v[C:=0](c) = v(c)$ for $c\notin C$. 
The set of \emph{clock constraints} $\CC(\C)$ is formed according to the following grammar: $g ::= \true \;|\; \false \;|\; c-c' \sim x \;|\; g\land g$,
where $c,c'\in \C$, $x\in \mathbb{Z} \cup \{\infty, -\infty\}$, and $\sim \; \in \{ \leq, <,\geq, >\}$. A valuation $v$ satisfies a clock constraint $g$, written as $v\models g$, if replacing every clock variable $c$ in $g$ with the value $v(c)$ leads to a true formula. We set $\Val(g) = \{ v \in \Val(\C)\mid v\models g\}$ and define $g_1 \Implies g_2$ if $\Val(g_1) \subseteq \Val(g_2)$.
A subset $Z\subseteq\Val(\C)$ is a \emph{zone} if $Z = \Val(g)$ for some clock constraint $g$. We commonly represent a clock constraint by a \emph{difference bounds matrix} (DBM), which is a $\C\times\C$-matrix $M$ over $(\mathbb{Z} \cup \{\infty,- \infty\}) \times \{<,\leq\}$. The intended meaning of an entry $M_{ij} = (a,\triangleleft)$ is the constraint $c_i - c_j \triangleleft a$. To each DBM $M$ one can associate a DBM $M^*$ containing constraints that are as tight as possible while still satisfying $\Val(M^*)= \Val(M)$ (see \cite[Theorem 2]{Dill1990}). We make use of the operations $ \sqcap $ from \cite{Dill1990} (corresponding to logical conjunction of the associated clock constraints) and the \emph{time closure} operation $\timeup$ of \cite{BengtssonW04} (there called \texttt{up}), which removes all absolute time bounds from the DBM, see also \citeapp{lem:timeup}.

\subsubsection{Probabilistic timed automata.}

A \emph{probabilistic timed automaton} (PTA) is a tuple $\T = (\Loc, \C, \Act, \inv, T, l_0)$, where $\Loc$ is a finite set of \emph{locations}, $\C$ is a finite set of \emph{clocks}, $\Act$ is a finite set of \emph{actions}, $\inv\colon\Loc\to \CC(\C)$ is the \emph{invariance condition}, $T\colon \Loc \to 2^{\CC(\C)\times\Act\times\Dist(2^\C\times\Loc)}$ is the transition function with $T(l)$ non-empty and finite for every $l\in\Loc$, and $l_0\in \Loc$ is the \emph{initial location}, for which we assume that $0\models\inv(l_0)$. A transition $(g, \alpha, \mu)\in T(l)$ is written as $l\overset{g:\alpha}{\longrightarrow}\mu$ and the element $g$ is called the \emph{guard}.
The intended meaning of $T(l)$ is that from location $l$ one first chooses non-deterministically a transition $l\overset{g:\alpha}{\longrightarrow}\mu$, provided that the guard $g$ is satisfied by the current clock valuation. Then an element $(C, l') \in 2^\C\times\Loc$ is picked according to the distribution $\mu$, the clocks in $C$ are reset and the next location is set to $l'$.

A \emph{timed probabilistic system} (TPS) is a tuple $\Scal = (S, \Act', T, s_0)$, where $S$ is a set of states, $\Act'=\Act\uplus\: \R_+$ is a set of actions ($\Act$ is assumed to be finite), $T: S\to 2^{\Act'\times\Dist(S)}$ is the transition function, and $s_0$ the \emph{initial state}. For a pair $(\alpha, \mu)\in T(s)$ (or $s\overset{\alpha}{\longrightarrow}\mu$) we assume that $\mu$ has finite support. Transitions indexed by $\R_+$ are called \emph{time delays} and transitions indexed by $\Act$ are \emph{discrete actions}. Schedulers are defined as for MDPs, and a scheduler $\S$ is \emph{time-divergent} if for almost every path compatible with $\S$ the series of time delays is divergent. Reachability probabilities $\prb_{\Scal, s}^*(\lozenge T)$ for $*\in\{\min, \max\}$ are defined as for MDPs, but only taking time-divergent schedulers into account.

A \emph{pointed} PTA $(\T, \goal, \fail)$ consists of a PTA $\T= (\Loc, \C, \Act, \inv, T, l_0)$ and two distinguished absorbing locations $\goal, \fail\in\Loc$. The \emph{semantics} of a pointed PTA is the TPS $\Scal(\T) = (S, \Act', T_{\sem}, s_0)$ with $S = \{ (l, v) \in \Loc\times\Val(\C)\mid v\models \inv(l)\}$, $\Act' = \Act\uplus\:\R_+$, $s_0 = (l_0, 0)$, and $T_{\sem}$ is the smallest function satisfying the inference rules
\[ 	\infer[\quad\text{and}]
{(l,v)\overset{t}{\longrightarrow}{\delta_{(l, v+t)}} \in T_{\sem}}
{t\in \R_+, \forall t'\leq t.\;\; v+t'\models\inv(l)}\quad\quad\;
	\infer[\text{, where}]
{(l,v)\overset{\alpha}{\longrightarrow}{\mu_\sem} \in T_{\sem}}
{l\overset{g:\alpha}{\longrightarrow}\mu \in T,\; v\models g}
\]
\begin{align}
	\mu_\sem(l',v') &= \sum_{\substack{(C,l') \\ v' = v[C:=0]}}  \;\mu(C, l') \quad \text{ for } l'\neq\fail\text{ and } v'  \models \inv(l')\\
	\mu_\sem(\fail, v') &= \sum_{\substack{(C, \fail)\\ v' = v[C:=0]}} \;\mu(C, \fail) +  \sum_{\substack{(C,l'), \;l'\neq\fail \\v' = v[C:=0]\not\models\inv(l')}} \;\mu(C, l') \label{eq:fail weight}
\end{align}

We define the goal set of $\Scal(\T)$ to be $\goal_{\Scal(\T)} = \{ (l,v)\in S\mid l = \goal\}$. For $*\in\{\min,\max\}$ the probability to reach $\goal$ in $\T$ is defined as
\[ \prb^*_{\T, l_0}(\lozenge\goal) := \prb^*_{\Scal(\T), s_0}(\lozenge\goal_{\Scal(\T)}) \]

\begin{remark}
	Typically, the semantics is only defined if the PTA is \emph{well-formed}. This means that no transition leads to a violation of the invariance condition of the target. We relax this condition and, in the case that $v' = v[C:=0]\not\models \inv(l')$, add the probability of $(C, l')$ to the edge  $(l,v)\overset{\alpha}{\longrightarrow} (\fail, v')$ (this is the second sum in \Cref{eq:fail weight}). This generalization will facilitate our translation from Farkas certificates of quotients of $\Scal(\T)$ to PTA subsystems.
\end{remark}

\subsubsection{Probabilistic time-abstracting bisimulation.} As in~\cite{ChenHK08}, we define
 a \emph{probabilistic time-abstracting bisimulation} (PTAB) on a TPS $\Scal = (S, \Act\uplus\:\R_{+}, T, s_0)$ to be an equivalence relation $\sim$ on $S$ such that if $s\sim s'$ we have:
\begin{enumerate}
	\item for any time delay $s\overset{t}{\to} u$ there exists a time delay $s'\overset{t'}{\to} u'$ such that $u\sim u'$;
	\item for any discrete action $s\overset{\alpha}{\to}\mu$, there exists a discrete action $s'\overset{\alpha}{\to} \mu'$ such that for all $E \in S/_{\sim}$ we have $\sum_{s\in E} \mu(s) = \sum_{s\in E} \mu'(s) $.
\end{enumerate}
If $\Scal$ has distinguished sets $\goal, \fail\subseteq S$, we say that a PTAB $\sim$ \emph{respects} $\goal$ and $\fail$ if whenever $(l,v) \sim (\goal,v')$, then $l = \goal$, and likewise for $\fail$. The \emph{quotient} of $\Scal$ by $\sim$ is the MDP $\M(\Scal/_{\sim}) = (S/_{\sim},\Act \cup \{\tau\},T', [s_0])$ with
  \begin{equation*}
    T'([s]) = \{(\tau,\delta_{[s']}) \mid \exists (t, \delta_{s'}) \in T(s)\} \;\cup\; \{(\alpha,\mu/_{\sim}) \mid \exists (\alpha,\mu) \in T(s)\}
  \end{equation*}
  with $\mu/_{\sim}(E') = \sum_{s' \in E'} \mu(s')$. As we could not find a formal proof for the following lemma in the literature, we included one in \referapp.
\begin{restatable}{lemma}{bisimreach}
  \label{lem:bisimreach}
  Let $\Scal$ be a TPS and $\sim$ a PTAB on $\Scal$ that respects $\goal$ and $\fail$. Then for all $s\in S$ and $*\in\{\min, \max\}$ we have
  \[\prb^{*}_{\Scal,s}(\lozenge \goal) = \prb^{*}_{\M(\Scal/_{\sim}),[s]}(\lozenge \goal).\]
\end{restatable}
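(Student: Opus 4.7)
The plan is to translate schedulers between $\Scal$ and $\M(\Scal/_{\sim})$ in both directions while preserving the probability of reaching $\goal$. Because $\sim$ respects $\goal$ and $\fail$, a concrete path reaches $\goal_{\Scal(\T)}$ iff its class-projection reaches $[\goal] := \{[s] \mid s \in \goal_{\Scal(\T)}\}$, so it suffices to match the measures induced on finite path cylinders.

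For the projection direction, given a time-divergent scheduler $\S$ on $\Scal$ starting in $s$, I would define a quotient scheduler $\S^{\sim}$ recursively on a quotient history $\pi'$: having already lifted the strict prefix of $\pi'$ to a concrete $\S$-history $\pi$ with $\pi/_{\sim}$ equal to that prefix, consult $\S(\pi) = (a,\mu) \in T(\last(\pi))$ and set $\S^{\sim}(\pi') := (\alpha, \mu/_{\sim})$ when $a = \alpha \in \Act$, or $\S^{\sim}(\pi') := (\tau, \delta_{[s']})$ when $a \in \R_+$ with $\mu = \delta_{s'}$. An induction on cylinder length, using conditions (1) and (2) of the PTAB to match one-step distributions on equivalence classes, shows that the pushforward of $\Pr^{\S}_{\Scal,s}$ under the quotient map equals $\Pr^{\S^{\sim}}_{\M(\Scal/_{\sim}),[s]}$, and hence their $\lozenge\goal$-probabilities agree.

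For the lift direction, given a scheduler $\Sigma'$ on the quotient, I would build $\Sigma$ on $\Scal$ by consulting $\Sigma'$ on the projected history $\pi/_{\sim}$: a choice $(\alpha, \mu/_{\sim}) \in T'([s_n])$ generated by some $(\alpha,\mu) \in T(s_n)$ is matched at $\last(\pi) \sim s_n$ via condition (2) by a concrete $(\alpha, \mu') \in T(\last(\pi))$ with $\mu'/_{\sim} = \mu/_{\sim}$, while a choice $(\tau, \delta_{[s']})$ generated by a time delay $s_n \to s'$ is matched at $\last(\pi)$ via condition (1) by a concrete time delay $\last(\pi) \to u'$ with $u' \sim s'$. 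The same induction then yields equality of $\lozenge\goal$-probabilities.

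The main obstacle I anticipate is the asymmetry between the two scheduler classes: reachability on $\Scal$ quantifies only over time-divergent schedulers, whereas reachability on the quotient MDP quantifies over all schedulers, so the lift $\Sigma$ need not be time-divergent. To bridge this gap I would exploit that $\goal$ and $\fail$ are absorbing: the event $\lozenge\goal$ is determined, up to arbitrary $\varepsilon > 0$, by cylinder prefixes of some bounded length $k$; beyond $k$, the lift may be overridden on histories that have not yet hit $\goal \cup \fail$ by a time-divergent tail, altering $\Pr^{\Sigma}_{\Scal,s}(\lozenge\goal)$ by at most $\varepsilon$. Letting $\varepsilon \to 0$ then delivers the two inequalities needed.
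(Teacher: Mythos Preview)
Your lift direction (from the quotient to $\Scal$) is sound and matches what the paper does in its Step~1: given a scheduler on $\M(\Scal/_{\sim})$, you mimic it at concrete states using the bisimulation conditions, and the pushforward identity on cylinders goes through because the one-step quotient distribution is \emph{the same} for every concrete history with a given projection. Your handling of time-divergence via $\varepsilon$-truncation is also reasonable.

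The projection direction, however, has a genuine gap. Your claim that ``the pushforward of $\Pr^{\S}_{\Scal,s}$ under the quotient map equals $\Pr^{\S^{\sim}}_{\M(\Scal/_{\sim}),[s]}$'' fails in general, because a single quotient history $\pi'$ may be the projection of \emph{many} concrete $\S$-histories on which $\S$ makes \emph{different} choices. Your recursion fixes one concrete lift $\pi$ and lets $\S^{\sim}(\pi')$ copy $\S(\pi)$, but the other concrete histories projecting to $\pi'$ still contribute to the pushforward measure with their own (possibly non-bisimilar) one-step distributions. Concretely: take $s_0$ moving with probability $1/2$ each to $s_1 \sim s_2$, with $\S$ choosing $\alpha$ (to $\goal$) after $s_0 s_1$ and $\beta$ (to $\fail$) after $s_0 s_2$. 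Then $\Pr^{\S}_{s_0}(\lozenge\goal)=1/2$, but any deterministic $\S^{\sim}$ built by your recipe yields probability $0$ or $1$. The bisimulation conditions (1) and (2) only guarantee that the \emph{sets} of available quotient distributions coincide at bisimilar states; they say nothing about what $\S$ actually selects.

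The paper closes this gap differently. It first proves (as an auxiliary lemma) that bisimilar states have equal optimal step-bounded reachability, and then, given $\S$ on $\Scal$, builds a sequence $\S_0,\S_1,\ldots$ where at level $i{+}1$ all concrete histories with the same projection of length $i{+}1$ are \emph{re-aligned} to copy the subtree of the one among them with maximal continuation value; this only increases $\Pr(\lozenge\goal)$. The limit scheduler then makes bisimilar choices on all histories with the same projection, so a quotient scheduler $\overline{\S}$ is well-defined and matches it. To repair your argument along your own lines, you could instead define $\S^{\sim}$ as a \emph{randomized} scheduler, taking $\S^{\sim}(\pi')$ to be the $\Pr^{\S}_s$-conditional distribution of $\S(\pi)/_{\sim}$ given $\pi/_{\sim}=\pi'$; with that definition the pushforward identity does hold.
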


\section{Witnessing subsystems for reachability in PTAs}\label{sec:subsystem}

In this chapter we generalize the notion of subsystems formalized first for Markov chains in \cite{JansenAKWKB11} and MDPs in \cite{WimmerJAKB14} to PTAs. From now on we assume for all pointed PTAs $(\T, \goal, \fail)$ that the probability to eventually reach $\goal$ or $\fail$ is $1$ for each time-divergent scheduler over the semantics $\Scal(\T)$.
This is necessary to apply the results of~\cite{FunkeJB20}.
An important application that justifies this assumption is \emph{time-bounded} reachability, where $\goal$ needs to be reached before an absolute time-bound $K$.
This can be encoded in our setting by adding a clock $c^*$ that is never reset, and adding $c^*\leq K$ to the invariance of every location.

\subsection{Subsystems for PTAs}

\begin{definition}[Subsystem]\label{def:subsystem}
	Let $(\T, \goal, \fail)$ be a pointed PTA with $\T= (\Loc, \C, \Act, \inv, T, l_0)$.
  A PTA $\T'= (\Loc', \C, \Act, \inv', T', l_0)$ is a \emph{(weak) subsystem} of $\T$ if the following three conditions hold:
  \begin{enumerate}[align=left, leftmargin=1.2cm, labelwidth = 3.5ex]
  \item\label{item:loccontainment} $\goal, \fail\in \Loc'\subseteq\Loc$;
  \item\label{item:invcontainment} for all locations $l \in \Loc'$ we have $\inv'(l) \Implies \inv(l)$;
  \item\label{item:weak transition} for all $l\in\Loc'$ there is an injective map $\Phi\colon T'(l) \to T(l)$ such that for $\Phi(l\overset{g':\alpha'}{\longrightarrow}\mu') = l\overset{g:\alpha}{\longrightarrow}\mu$ we have (3a) $g' \Implies g$, (3b) $\alpha'=\alpha$, and (3c) for all $(C, l')\in 2^{\C}\times \Loc'$ with $l' \neq \fail$ we have  $\mu'(C,l') \in \{0, \mu(C, l')\}$.    
  \end{enumerate}
 We call $\T'$ a \emph{strong subsystem} if, additionally, the following two conditions hold for all $l\in \Loc'$:
\begin{enumerate}[leftmargin=1.2cm, labelwidth = 3.5ex, topsep = 2mm]
  {\setlength\itemindent{4.5pt}\item[(3$\,^*\!$)] there is a left-inverse $\Psi\colon T(l) \to T'(l)$ of $\Phi$ such that for $\Psi(l\overset{g:\alpha}{\longrightarrow}\mu) = l\overset{g':\alpha'}{\longrightarrow}\mu'$ we have (3a$\,^*\!$) $g' \equiv g \land \inv'(l)$, and (3b) and (3c) as above}\label{item:strong transition};
   \item[(4)] if $v \in \Val(\C)$ and $t \in \R_+$ satisfy $v\models \inv'(l)$ and $v+t\models\inv(l)$, then also $v+t\models\inv'(l)$.
  \end{enumerate}
\end{definition}

In other words, in the passage from $\T$ to a subsystem, it is allowed to discard locations and elements in $T(l)$, redirect individual transitions to $\fail$, and shrink invariants and guards. This will be sufficient for witnessing lower bounds on $\prb^{\max}$ (see \Cref{prop:monotonicity} below). For witnessing lower bounds on $\prb^{\min}$ we need the extra assumptions that elements in $T(l)$ must not be deleted, guards can only shrink as much as the invariance and that $\inv'(l)$ is closed under time successors. On the level of quotients of the semantics of $\T$, this reflects the difference between weak and strong subsystems for MDPs (see \Cref{sec:prelims}). We demand $\Psi$ to be a left-inverse of $\Phi$ instead of requiring that both are bijections since two different elements of $T(l)$ might coincidentally be shrunk to the same element of $T'(l)$.

\begin{example}
  \label{ex:subsystem}
	Consider the PTA $\T$ displayed in \Cref{fig:example_pta}. A scheduler $\S$ in $\T$ principally has to choose between $\alpha$ and $\beta$ whenever in $l_1$ (and letting time pass accordingly). Action $\alpha$ in state $(l_1, (x,y))\in \Scal(\T)$ leads to a higher probability to reach $\goal$ exactly when $y\leq 2$, the reason being that then the right-hand branch of $\T$ contributes towards $\Pr^\S(\lozenge \goal)$ upon leaving $l_0$ the next time. Thus choosing $\beta$ upon leaving $l_1$ for the first time leads to a scheduler attaining $\prb^{\min}_{\T}(\lozenge\goal)$ (cf. \Cref{ex:subsystem extended} in \referapp). An example of a weak subsystem $\T'\subseteq\T$ is portrayed in \Cref{fig:example_subsystem}, with differences to $\T$ indicated in red. Even though $\T'$ fails to be a strong subsystem (e.g. the guard of $\alpha$ is shrunk more than allowed), we have $\prb^{\min}_\T(\lozenge\goal) \geq \prb^{\min}_{\T'}(\lozenge\goal)$. However, this is not true for all weak subsystems: Take $\T''$ obtained from $\T$ by changing only the guard of the action $\beta$ at $l_1$ from $x\leq 1$ to $x\leq 1 \land y\geq 2$. Then any scheduler is forced to take $\alpha$ at least once, resulting in $\prb^{\min}_\T(\lozenge\goal) < \prb^{\min}_{\T''}(\lozenge\goal)$. Removing action $\beta$ and location $l_3$ altogether has the same effect. This example illustrates that strong subsystems are indeed needed in order to deal with $\prb^{\min}$ (cf. \Cref{prop:monotonicity}).
  More details can be found in \referapp.
		\begin{figure}[tbp]
	\captionsetup{labelfont={up}}
	\begin{subfigure}[b]{0.5\columnwidth}
		\centering
		\scalebox{1.0}{
	\begin{tikzpicture}[->,>=stealth',shorten >=1pt,auto,node distance=0.5cm, semithick, x=18mm,y=15mm,font=\scriptsize]
	  \node[state] (l0) {$l_0,x=0$};
	  \node[state] (l1) [below left = 0.8cm and 0.5cm of l0]  {$l_1, x \leq 2$};
	  \node[state] (l2) [below right = 0.8cm and 0.5cm of l0]  {$l_2, y \leq 2$};
	  \node[state] (l3) [below left = 2.2cm and 0.5cm of l0]  {$l_3,x \leq 1$};
	  \node[state] (goal) [below left = 4.2cm and 0.6cm of l0]  {$\:\goal\:$};
	  \node[state] (fail) [below right = 4.25cm and 0.6cm of l0]  {$\:\:\fail\:\:$};
	   
	    \node[bullet] (l0bullet) [below = 0.5cm of l0] {};
	    \node[bullet] (l1bullet) [left = 0.7cm of l1] {};
	    \node[bullet] (l2bullet) [below left = 2.2cm and 0.1cm of l2] {};
	    \node[bullet] (l3bullet) [below right = 0.6cm and 0.2cm of l3] {};
	    \node [below left = -0.3cm and 1.4cm of l0] {$x:=0$};
	    \node [below left = 0.75cm and 1.7cm of l0] {$\alpha$};
	    \node [below left = 1.36cm and 0.9cm of l0] {$\beta$};
	    
	     \draw (l1) -- node[right=0.0,pos=.3]{$x\leq 1$} (l3);
	    
	    { \color{black}
	    	\draw[-] (l0) to (l0bullet);
	    	\draw (l0bullet) edge[bend left] node[right, above = 0., pos=0.7]  {$\frac{2}{5}$} (l1);
	    	\draw (l0bullet) edge[bend right] node[left, above = 0., pos=0.7]  {$\frac{3}{5}$} (l2);
	    }
	
		{ \color{black}
			\draw[-] (l1) -- node[below=0.02,pos=.5]{$x\geq 1$} (l1bullet);
			\draw (l1bullet) edge[bend left] node[right, above = 0.0, pos=0.8]  {$\frac{1}{2}$} (l0);
			\draw (l1bullet) edge[bend right] node[left, above = 0.0, pos=0.9]  {$\frac{1}{2}$} (goal);
		}
		
		{ \color{black}
			\draw[-] (l2) -- node[right=0.02,pos=0.15]{$y\geq 1$} (l2bullet);
			\draw (l2bullet) edge[bend left] node[right, above = 0.0, pos=0.4]  {$\frac{2}{5}$} (goal);
			\draw (l2bullet) edge[bend right] node[left, right = 0.0, pos=0.5]  {$\frac{3}{5}$} (fail);
		}
	
		 { \color{black}
			\draw[-] (l3) -- node[right=0.03,pos=.3]{$x\geq 1$} (l3bullet);
			\draw (l3bullet) edge[bend left] node[right, left = 0.0, pos=0.5]  {$\frac{3}{4}$} (goal);
			\draw (l3bullet) edge[bend right] node[left, above = 0.0, pos=0.4]  {$\frac{1}{4}$} (fail);
		}
	\end{tikzpicture}}
		\subcaption{}
	\label{fig:example_pta}
\end{subfigure}
	\begin{subfigure}[b]{0.5\columnwidth}
		\centering
		\scalebox{1.0}{
				\begin{tikzpicture}[->,>=stealth',shorten >=1pt,auto,node distance=0.5cm, semithick, x=18mm,y=15mm,font=\scriptsize]
					\node[state] (l0) {$l_0,x=0$};
					\node[state] (l1) [below left = 0.8cm and 0.5cm of l0]  {$l_1, x \leq 2$};
					\node[state] (l2) [below right = 0.8cm and 0.5cm of l0]  {$l_2, y \leq 2$};
					\node[state] (l3) [below left = 2.2cm and 0.1cm of l0]  {$l_3,x \leq 1$, {\color{red} $y\leq 1$}};
					\node[state] (goal) [below left = 4.2cm and 0.6cm of l0]  {$\:\goal\:$};
					\node[state] (fail) [below right = 4.25cm and 0.6cm of l0]  {$\:\:\fail\:\:$};
					
					\node[bullet] (l0bullet) [below = 0.5cm of l0] {};
					\node[bullet] (l1bullet) [left = 0.7cm of l1] {};
					\node[bullet] (l2bullet) [below left = 2.2cm and 0.1cm of l2] {};
					\node [below left = -0.3cm and 1.4cm of l0] {$x:=0$};
					
					\node [below left = 0.75cm and 1.7cm of l0] {$\alpha$};
					\node [below left = 1.36cm and 0.9cm of l0] {$\beta$};
					\draw (l1) -- node[right=0.0,pos=.3]{$x\leq 1$} (l3);
					{\color{red}
					\draw (l3) to[bend right] node[above=0.1,right = 0.03, pos=.08]{{\color{black}$x\geq 1$}} (fail);
						    
				}
					
					{ \color{black}
						\draw[-] (l0) to (l0bullet);
						\draw (l0bullet) edge[bend left] node[right, above = 0., pos=0.7]  {$\frac{2}{5}$} (l1);
						\draw (l0bullet) edge[bend right] node[left, above = 0., pos=0.7]  {$\frac{3}{5}$} (l2);
					}
					
					{ \color{black}
						\draw[-] (l1) -- node[below=0.02,pos=.5]{{\color{red}$x\geq 2$}} (l1bullet);
						\draw (l1bullet) edge[bend left] node[right, above = 0.0, pos=0.8]  {$\frac{1}{2}$} (l0);
						\draw (l1bullet) edge[bend right] node[left, above = 0.0, pos=0.9]  {$\frac{1}{2}$} (goal);
					}
					
					{ \color{black}
						\draw[-] (l2) -- node[right=0.02,pos=0.15]{{\color{red}$y=2$}} (l2bullet);
						\draw (l2bullet) edge[bend left] node[right, above = 0.0, pos=0.35]  {$\frac{2}{5}$} (goal);
						\draw (l2bullet) edge[bend right] node[left, right = 0.0, pos=0.5]  {$\frac{3}{5}$} (fail);
					}
					
					{ \color{black}
					}
				\end{tikzpicture}
	}
		\subcaption{}
		\label{fig:example_subsystem}
	\end{subfigure}
\caption{A pointed PTA (left) and a weak subsystem therein (right).}
\end{figure}
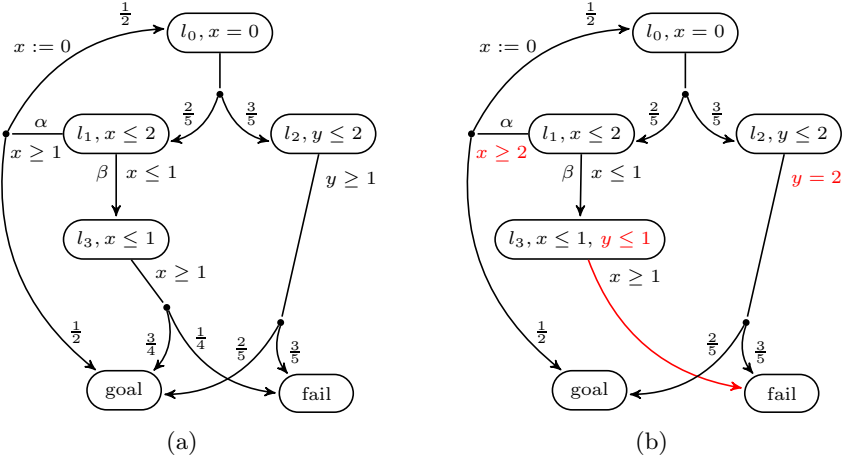
\end{example}

We show that subsystems of a PTA $\T$ induce Farkas certificates in finite-state quotients of $\Scal(\T)$, which are supported on the states induced by the subsystem. In other words, subsystems are reflected \emph{purely algebraically} on the level of Farkas certificates. This is a generalization of the forward direction of \cite[Theorem 5.4]{FunkeJB20}.

\begin{restatable}[PTA subsystems induce Farkas certificates]{theorem}{subsysandfarkascert}
  \label{prop:subsysandfarkascert}
  Let\linebreak $(\T, \goal, \fail)$ be a pointed PTA, and let $\sim$ be a PTAB on $\Scal(\T)$ that respects $\goal$ and $\fail$ and has finite index.
  Let $\M = \M(\Scal(\T)/_\sim)$ be the associated quotient MDP with states $S\cup\{\goal, \fail\}$.
  Given a subsystem $\T'\subseteq\T$, let $S' = \{[s] \in S \mid  s \text{ is a state of } \Scal(\T') \}$.

  Then there is a Farkas certificate $\yb\in\mathbb{R}^{\M}$ for $\prb^{\max}_{\M}(\lozenge \goal) \geq \prb^{\max}_{\T'}(\lozenge \goal)$ with $\supp_S(\yb)\subseteq S'$.
  If $\T'$ is a strong subsystem, then there also exists a Farkas certificate $\zb\in\mathbb{R}^S$ for $\prb^{\min}_{\M}(\lozenge \goal) \geq \prb^{\min}_{\T'}(\lozenge \goal)$ such that $\supp(\zb)\subseteq S'$.
\end{restatable}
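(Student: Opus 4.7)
The strategy is to reduce to the MDP setting: I aim to lift the PTA subsystem $\T' \subseteq \T$ to an MDP subsystem $\M' \subseteq \M$ whose states lie in $S' \cup \{\goal,\fail\}$, and then invoke the forward direction of \cite[Theorem 5.4]{FunkeJB20} to obtain Farkas certificates supported on those states. Transporting the thresholds is then immediate via \Cref{lem:bisimreach}, which relates the reachability probability in $\T'$ to one in $\M'$, together with $\prb^*_{\M'}(\lozenge\goal)\le \prb^*_{\M}(\lozenge\goal)$ following from the MDP subsystem relation.

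First I would show that $\Scal(\T')$ embeds into $\Scal(\T)$ as a subsystem-style TPS. Condition (2) of \Cref{def:subsystem} guarantees that every state $(l,v)$ of $\Scal(\T')$ is a state of $\Scal(\T)$; condition (3) gives an injection from the discrete transitions in $\Scal(\T')$ into those of $\Scal(\T)$ from the same state, with probabilities agreeing up to $\fail$-redirection (via the second sum in equation~(\ref{eq:fail weight})); and the time delays in $\Scal(\T')$ form a subset of those in $\Scal(\T)$ since $\inv'(l)\Implies\inv(l)$.

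Next I would consider the restriction $\sim'$ of $\sim$ to states of $\Scal(\T')$ and show that in the strong subsystem case $\sim'$ is a PTAB on $\Scal(\T')$ respecting $\goal$ and $\fail$. For discrete actions, a transition in $\Scal(\T')$ lifts via the injection $\Phi$ to a transition in $\Scal(\T)$, the $\sim$-matching transition in $\Scal(\T)$ pulls back via the left-inverse $\Psi$ from condition~(3$\,^*\!$) to a transition in $\Scal(\T')$, and the class marginals agree because the $\fail$-redirection respects $\sim$. For time delays, condition (4) is precisely what guarantees that the matching delay from a $\sim$-equivalent state, if valid under $\inv$, is also valid under $\inv'$. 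Hence $\M' := \M(\Scal(\T')/_{\sim'})$ is well-defined, its state set is contained in $S'\cup\{\goal,\fail\}$, and its transitions are images of $\Scal(\T')$-transitions, making it a strong MDP subsystem of $\M$ by the correspondence established in step~1. For the weak case, rather than insisting that $\sim'$ be a PTAB, I would build $\M'$ directly by choosing, for each class $[s]\in S'$, a representative $s$ in $\Scal(\T')$ and including only the discrete transitions from $s$ modulo $\sim$; the $\fail$-redirection coming from condition~(3c) then shows $\M'$ is a weak MDP subsystem of $\M$ supported on $S'\cup\{\goal,\fail\}$.

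Finally, applying the forward direction of \cite[Theorem 5.4]{FunkeJB20} to the weak (resp.\ strong) MDP subsystem $\M'\subseteq \M$ produces $\yb\in\R^\M$ with $\supp_S(\yb)\subseteq S'$ certifying $\prb^{\max}_{\M}(\lozenge\goal)\ge \prb^{\max}_{\M'}(\lozenge\goal)$ (resp.\ $\zb\in\R^S$ with $\supp(\zb)\subseteq S'$ for $\prb^{\min}$). Because $\sim'$ respects $\goal$ and $\fail$, \Cref{lem:bisimreach} applied to $\Scal(\T')$ gives $\prb^*_{\M'}(\lozenge\goal)=\prb^*_{\T'}(\lozenge\goal)$, yielding exactly the thresholds claimed. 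The main obstacle I expect is the verification that $\sim'$ is a PTAB on $\Scal(\T')$ in the strong case, since it requires combining the $\sim$-matching in $\Scal(\T)$ with the correspondences $\Phi,\Psi$ and carefully using condition~(4) for time delays; for the weak case the difficulty is sidestepped by constructing $\M'$ directly without demanding a bisimulation on $\Scal(\T')$.
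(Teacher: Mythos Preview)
Your overall strategy---reduce to an MDP subsystem supported on $S'$ and invoke \cite[Theorem 5.4]{FunkeJB20}---matches the paper's, but the bridge you build between $\T'$ and that MDP subsystem has a genuine gap. The claim that the restriction $\sim'$ of $\sim$ to states of $\Scal(\T')$ is a PTAB on $\Scal(\T')$ is not correct in general, even for strong subsystems. The problem is condition~(3c): the choice of which branches $\mu'(C,l')$ are set to $0$ is made \emph{per syntactic transition} in $\T$, with no requirement that it be compatible with $\sim$. Two $\sim$-equivalent states $(l_1,v_1)\sim(l_2,v_2)$ may fire matching actions that originate from distinct syntactic transitions (either because $l_1\neq l_2$, or because different guards are enabled at $v_1$ and $v_2$), and the subsystem may zero out different branches in each. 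Then the resulting distributions in $\Scal(\T')$ disagree on class marginals, so $\sim'$ fails condition~(2) of a PTAB. Your assertion that ``the $\fail$-redirection respects $\sim$'' is exactly what is not guaranteed. Consequently you cannot apply \Cref{lem:bisimreach} to $\Scal(\T')$, and the equality $\prb^*_{\M'}(\lozenge\goal)=\prb^*_{\T'}(\lozenge\goal)$ in your final step is unsupported---in the weak case you explicitly avoid establishing the PTAB property but still invoke the lemma.

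The paper sidesteps this by not working with $\Scal(\T')$ directly. Instead it introduces the TPS $\Scal_{S'}$ whose state set is the union of the \emph{full} $\sim$-classes in $S'$ and whose transitions are those of $\Scal(\T)$ with successors outside $S'$ redirected to $\fail$. On $\Scal_{S'}$ the restriction of $\sim$ \emph{is} trivially a PTAB (the classes are complete and the $\fail$-redirection is class-based), so \Cref{lem:bisimreach} yields $\prb^*_{\Scal_{S'}}(\lozenge\goal)=\prb^*_{\M_{S'}}(\lozenge\goal)$ for the induced MDP subsystem $\M_{S'}$. The remaining link $\prb^*_{\T'}(\lozenge\goal)\le\prb^*_{\Scal_{S'}}(\lozenge\goal)$ is established not by bisimulation but by direct scheduler-mimicking arguments: for $\max$, every scheduler on $\Scal(\T')$ is simulated in $\Scal_{S'}$ using the correspondence you call step~1; for $\min$ one additionally needs that every transition of $\Scal_{S'}$ from a state of $\Scal(\T')$ has a counterpart in $\Scal(\T')$ with no larger probability on non-$\fail$ successors, which is where conditions~(3$^*$) and~(4) enter. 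Only an inequality is needed here, which is why the failure of $\sim'$ to be a PTAB does not obstruct the argument.
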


\begin{corollary}\label{prop:monotonicity}
  Let $(\T, \goal, \fail)$ be a pointed PTA.
  \begin{enumerate}
  	\item If $\T' \subseteq \T$ is a subsystem, then $\prb^{\max}_{\T}(\lozenge \goal) \geq \prb^{\max}_{\T'}(\lozenge \goal)$.
  	\item If $\T'\subseteq\T$ is a strong subsystem, then $\prb^{\min}_{\T}(\lozenge \goal) \geq \prb^{\min}_{\T'}(\lozenge \goal)$.
  \end{enumerate} 
\end{corollary}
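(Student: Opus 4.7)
The plan is to derive \Cref{prop:monotonicity} as a direct consequence of \Cref{prop:subsysandfarkascert} combined with \Cref{lem:bisimreach} and the defining property of Farkas certificates. The idea is that a (strong) PTA subsystem furnishes an algebraic certificate in a finite-state quotient, and the certificate's existence already forces the desired inequality on reachability probabilities in that quotient; bisimilarity then transports the inequality back to the PTA level.

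First I would fix a probabilistic time-abstracting bisimulation $\sim$ on $\Scal(\T)$ of finite index that respects $\goal$ and $\fail$. A canonical choice is the region equivalence on valuations, refined by the location component so that $\goal$ and $\fail$ are separated; this is finite because all constants in invariants and guards are integers, and it is a PTAB since time successors and probabilistic transitions respect regions (this is standard and, in the setting of PTAs, goes back to~\cite{ChenHK08}). Write $\M = \M(\Scal(\T)/_{\sim})$ for the induced quotient MDP.

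Now suppose $\T' \subseteq \T$ is a weak subsystem. By \Cref{prop:subsysandfarkascert}, there exists a Farkas certificate $\yb \in \P^{\max}_{\M}\!\bigl(\prb^{\max}_{\T'}(\lozenge\goal)\bigr)$. By the defining property of Farkas certificates recalled in \Cref{sec:prelims}, the non-emptiness of $\P^{\max}_{\M}(\lambda)$ is equivalent to $\prb^{\max}_{\M}(\lozenge\goal) \geq \lambda$, so we obtain
\[
  \prb^{\max}_{\M}(\lozenge\goal) \;\geq\; \prb^{\max}_{\T'}(\lozenge\goal).
\]
Applying \Cref{lem:bisimreach} to the initial state yields $\prb^{\max}_{\Scal(\T)}(\lozenge\goal_{\Scal(\T)}) = \prb^{\max}_{\M}(\lozenge\goal)$, and by definition the left-hand side equals $\prb^{\max}_{\T}(\lozenge\goal)$. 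Chaining these identities proves part~(1).

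For part~(2), the argument is identical with $\max$ replaced by $\min$: the strong subsystem hypothesis lets us invoke the second half of \Cref{prop:subsysandfarkascert} to obtain a Farkas certificate $\zb \in \P^{\min}_{\M}\!\bigl(\prb^{\min}_{\T'}(\lozenge\goal)\bigr)$, hence $\prb^{\min}_{\M}(\lozenge\goal) \geq \prb^{\min}_{\T'}(\lozenge\goal)$, and \Cref{lem:bisimreach} transfers this inequality to $\T$. The only genuine subtlety — and the step I would double-check — is the existence of a finite-index PTAB that respects $\goal$ and $\fail$; once this is secured, the rest of the corollary is a chain of previously established facts, so there is no substantive obstacle beyond assembling them in the correct order.
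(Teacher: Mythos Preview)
Your proposal is correct and follows essentially the same route the paper intends: the corollary is stated immediately after \Cref{prop:subsysandfarkascert} precisely because existence of the Farkas certificate yields $\prb^{*}_{\M}(\lozenge\goal) \geq \prb^{*}_{\T'}(\lozenge\goal)$, and \Cref{lem:bisimreach} then identifies $\prb^{*}_{\M}(\lozenge\goal)$ with $\prb^{*}_{\T}(\lozenge\goal)$. Your choice of the region equivalence to instantiate the required finite-index PTAB is the standard one, and the remaining steps are exactly the chain of previously established facts you describe.
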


\subsection{Zone closure for DBMs}\label{sub:zone closure}

Our next aim is to show how Farkas certificates of the quotient $\M(\mathcal{S}/_{\sim})$ can be translated back into PTA subsystems.
As location invariants are described by zones, this requires to pass from states of the quotient (which represent equivalence classes of clock valuations) to zones that include these valuations and are as small as possible.
We do this using the following operation, which relies on the lexicographic order on DBMs (see also \citeapp{sub:dbms}).

\begin{definition}[Zone closure]\label{def:zone closure}
	Let $M$ and $N$ be DBMs over $\C$.
The \emph{zone closure} $M \sqcup N$ is the DBM defined by
		\[ (M \sqcup N)_{ij} = \max \{M_{ij},N_{ij}\} \;\text{ for all } i, j\in \C.\]
\end{definition}
The zone closure satisfies the following properties:
\begin{restatable}{lemma}{zoneclosure}
	\label{lem:maxunion}
	\label{lem:minpreserv}
	Let $M, N$ be DBMs such that $M = M^*$ and $N = N^*$.
	Then
  \begin{enumerate}
  \item $\Val(M \sqcup N)$ is the smallest zone in $\Val(\C)$ containing $\Val(M) \cup \Val(N)$.
  \item We have $(M \sqcup N)^* = (M \sqcup N)$.
  \end{enumerate}
 \end{restatable}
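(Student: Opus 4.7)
The plan is to verify the two claims in sequence and to reduce them to standard properties of canonical DBMs, in particular the tightness of bounds in $M^{*}$.

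For part (1), I would first verify the easy inclusion $\Val(M)\cup\Val(N)\subseteq\Val(M\sqcup N)$: entry-wise we have $M_{ij}\leq (M\sqcup N)_{ij}$ and $N_{ij}\leq (M\sqcup N)_{ij}$ in the lexicographic order on bounds, so any valuation satisfying the stricter constraint automatically satisfies the looser one. To show minimality, suppose $Z$ is any zone with $\Val(M)\cup\Val(N)\subseteq Z$, and pick a canonical DBM $P=P^{*}$ representing $Z$. The key step is then to argue $P_{ij}\geq M_{ij}$ and $P_{ij}\geq N_{ij}$ for all $i,j$, which by definition of $M\sqcup N$ yields $P_{ij}\geq (M\sqcup N)_{ij}$ and hence $\Val(M\sqcup N)\subseteq\Val(P)=Z$. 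For the former inequality I would use the tightness property of canonical DBMs (Dill's theorem, cf.\ \citeapp{sub:dbms}): since $M=M^{*}$, for every pair $(i,j)$ either $M_{ij}=(\infty,<)$, or there exist valuations $v\in\Val(M)$ with $v(c_i)-v(c_j)$ attaining (or approaching) the bound $M_{ij}$. Any such $v$ also lies in $\Val(P)$, which forces $P_{ij}\geq M_{ij}$; the argument for $N$ is identical.

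For part (2), I would verify directly that $M\sqcup N$ is already closed under the triangle inequality, which is the standard characterisation of canonicity. Fix indices $i,j,k$. Since $M=M^{*}$ we have $M_{ik}\leq M_{ij}+M_{jk}\leq (M\sqcup N)_{ij}+(M\sqcup N)_{jk}$, and analogously $N_{ik}\leq (M\sqcup N)_{ij}+(M\sqcup N)_{jk}$. Taking the maximum on the left gives
\[
 (M\sqcup N)_{ik} \;=\; \max\{M_{ik},N_{ik}\} \;\leq\; (M\sqcup N)_{ij}+(M\sqcup N)_{jk},
\]
so $M\sqcup N$ already satisfies every triangle inequality and hence is its own canonical form, $(M\sqcup N)^{*}=M\sqcup N$.

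The only non-routine ingredient is the tightness claim used in part (1), and I would invoke it as a property of $M^{*}$ rather than reprove it; everything else is a componentwise inspection of bounds and a short triangle-inequality computation. One mild subtlety to handle carefully is the lexicographic order on bounds $(a,\triangleleft)$ (where $(a,<)<(a,\leq)$), which affects both the definition of $\max$ and the addition in the triangle step, but behaves exactly as expected and causes no genuine difficulty.
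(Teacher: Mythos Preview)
Your plan is correct. For part~(1) you follow essentially the same route as the paper: both arguments rest on the tightness of bounds in canonical DBMs (the paper packages this via the ``canonical DBM'' $M_R$ of $R=\Val(M)\cup\Val(N)$ and the fact $M=M_{\Val(M)}$, while you invoke tightness directly to compare $M,N$ with an arbitrary canonical $P$ representing a containing zone). These are the same idea in slightly different dress.

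For part~(2) the approaches genuinely differ. The paper argues by contradiction and again appeals to the semantic tightness property: if $(M\sqcup N)^*_{ij}\prec(M\sqcup N)_{ij}$ and, say, the max is attained by $M_{ij}$, then $M=M_{\Val(M)}$ supplies a valuation in $\Val(M)\subseteq\Val(M\sqcup N)$ that violates the tighter bound $(M\sqcup N)^*_{ij}$. Your argument is purely algebraic: you use that $M=M^*$ and $N=N^*$ satisfy all triangle inequalities, push each through the monotone $+$, and take the componentwise maximum. This is shorter and avoids reusing tightness. One small point you should make explicit to close the loop with the paper's definition $M^*=M^0\sqcap M\sqcap M^2\sqcap\cdots$: the triangle inequality gives $(M\sqcup N)\preceq (M\sqcup N)^n$ for all $n\geq 1$, and the diagonal condition $(M\sqcup N)_{ii}\preceq(0,\leq)$ (which follows from $M_{ii},N_{ii}\preceq(0,\leq)$, a consequence of $M=M^*$, $N=N^*$) handles the $M^0$ term. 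With that one line added, your argument for (2) is complete and arguably cleaner than the paper's.
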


Given an arbitrary subset $R\subseteq \Val(\C)$ the \emph{canonical DBM} $M_R$ associated to $R$ is defined as $(M_{R})_{ij} = (\sup \{v(i) - v(j) \mid v \in R\},\triangleleft)$ for $i, j\in\C$, where $\triangleleft \; = \; \leq$ exactly if the supremum is attained, and otherwise $<$. Then $M_R = M_R^*$ and $\Val(M_R)$ is the smallest zone of $\Val(\C)$ that contains $R$ (see \citeapp{lem:canonicalDBM}). Applying~\Cref{lem:maxunion} to the canonical DBM associated to sets of clock valuations gives:
\begin{restatable}{proposition}{regioncup}
  \label{lem:regioncup}
  Let $R_1, ..., R_n\subseteq \Val(\C)$ be sets of clock valuations.
  For every $i$ let $M_{R_i}$ be the canonical DBM of $R_i$ and set $M = \bigsqcup_{i=1}^n \;M_{R_i}$.
  Then, $\Val(M)$ is the smallest zone in $\Val(\C)$ that contains all sets $R_i$.
\end{restatable}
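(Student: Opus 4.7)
The plan is to combine the two lemmas already established about the zone closure operation with the characterization of the canonical DBM of an arbitrary set of valuations. The key observations are that (i) each $M_{R_i}$ is already in canonical form ($M_{R_i} = M_{R_i}^*$) and (ii) the binary zone closure preserves canonical form. This will allow iterating Lemma~\ref{lem:maxunion} without worrying about having to re-canonicalize the intermediate DBMs.

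First, I would prove by induction on $n$ that $\Val(M)$ is the smallest zone in $\Val(\C)$ containing $\bigcup_{i=1}^n \Val(M_{R_i})$. The base case $n=1$ is immediate from $M_{R_1} = M_{R_1}^*$. For the inductive step, set $M' = \bigsqcup_{i=1}^{n-1} M_{R_i}$; by the induction hypothesis, $M'$ satisfies $M' = (M')^*$ (using part~(2) of Lemma~\ref{lem:maxunion}) and $\Val(M')$ is the smallest zone containing $\bigcup_{i=1}^{n-1} \Val(M_{R_i})$. Applying Lemma~\ref{lem:maxunion} to $M'$ and $M_{R_n}$ yields that $\Val(M' \sqcup M_{R_n}) = \Val(M)$ is the smallest zone containing $\Val(M') \cup \Val(M_{R_n})$, which is in turn the smallest zone containing $\bigcup_{i=1}^n \Val(M_{R_i})$.

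Second, I would translate this from canonical zones back to the original sets $R_i$. Since $\Val(M_{R_i})$ is itself the smallest zone containing $R_i$, any zone $Z$ with $R_i \subseteq Z$ automatically satisfies $\Val(M_{R_i}) \subseteq Z$, and conversely $R_i \subseteq \Val(M_{R_i})$. Consequently the zones containing $\bigcup_{i=1}^n R_i$ are precisely the zones containing $\bigcup_{i=1}^n \Val(M_{R_i})$, so their minimum is $\Val(M)$ by the previous paragraph.

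Since both steps are straightforward applications of results already established in \Cref{lem:maxunion} and the characterization of canonical DBMs cited just before the proposition, I do not anticipate any serious obstacle; the only delicate point is to invoke part~(2) of \Cref{lem:maxunion} in the induction so that $\sqcup$ can be iterated without having to take $(\cdot)^*$ at each step.
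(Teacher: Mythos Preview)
Your proposal is correct and follows essentially the same route as the paper's proof, which simply notes $R_i \subseteq \Val(M_{R_i})$ and $M_{R_i} = M_{R_i}^*$ from \Cref{lem:canonicalDBM} and then appeals to an inductive application of \Cref{lem:maxunion}. You spell out the induction more carefully, in particular flagging that part~(2) of \Cref{lem:maxunion} is what keeps the intermediate DBMs in canonical form so that part~(1) can be reapplied, and you make the final translation from $\Val(M_{R_i})$ back to $R_i$ explicit; both are exactly the steps the paper's one-line proof leaves implicit.
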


\subsection{From Farkas certificates to witnessing subsystems}

We are now in a position to outline a construction which reverses \Cref{prop:subsysandfarkascert}, i.e., which passes from Farkas certificates for threshold properties in finite-state quotients of the PTA semantics to PTA subsystems. Of course, the constructed subsystems should \emph{witness} the same threshold on the level of the PTA, as follows:

\begin{definition}[Witness]\label{def:witness}
	Let $(\T, \goal, \fail)$ be a pointed PTA and let $\lambda\in[0,1]$.
	A \emph{witnessing subsystem} or simply a \emph{witness} for $\prb^{\max}_{\T}(\lozenge \goal) \geq\lambda$ is a subsystem $\T'\subseteq \T$ such that $\prb^{\max}_{\T'}(\lozenge \goal) \geq \lambda$.
	A \emph{witnessing subsystem} or \emph{witness} for $\prb^{\min}_{\T}(\lozenge \goal) \geq\lambda$ is a \emph{strong} subsystem $\T'\subseteq \T$ such that $\prb^{\min}_{\T'}(\lozenge \goal) \geq\lambda$.
\end{definition}

By \Cref{prop:monotonicity} a witnessing subsystem is indeed a witness for the given threshold property.
The next definition shows how to construct a witness from Farkas certificates of finite-state quotients of the PTA semantics. Here and for the rest of this section we use the notation $S = S_{\all} \setminus \{\goal,\fail\}$, where $S_\all$ are the states of a PTAB quotient of $\Scal(\T)$.

\begin{definition}[Induced subsystems]
  \label{def:indsubsys}
	Let $(\T, \goal, \fail)$ be a pointed PTA, and let $\M = (S_\all, \Act, T, s_0)$ the quotient of $\Scal(\T)$ by a PTAB $\sim$ that respects $\goal$ and $\fail$ and has finite index. Given $s\in S$ and $l\in\Loc$ we put
	\[ s_{|l} = \{v\in\Val(\C)\mid (l,v) \in s\}.\]
	For a fixed $R\subseteq S$ we define subsystems $\T_R^w= (\Loc', \C, \Act, \inv^w, T^w, l_0)$ and $\T_R^s = (\Loc', \C, \Act, \inv^s, T^s, l_0)$ induced by $R$ as follows:
\begin{itemize}
\item Both have locations: $\Loc' = \{l \in \Loc \mid \exists s\in R. \; s_{|l}\neq\emptyset\}\cup \{\goal, \fail\}$
\item For each location $l \in \Loc'$ we consider the DBMs
  \[M_l^w = \bigsqcup_{s \in R} M_{s_{|l}} \;\;\;\;\text{ and }\;\;\;\; M_l^s = (\timeup M_l^w) \sqcap M_{\inv(l)}\]
  and let $\inv^w(l) = M_l^w$ and $\inv^s(l) = M_l^s$.
\item For every $l\overset{g : \alpha}{\longrightarrow}\mu$ in $T(l)$ with $l\in \Loc'$ let
\[g^w = g\sqcap\bigsqcup_{\substack{s \in R\\ \exists (l,v) \in s.\: v\models g}} M_{s_{|l}} \;\;\;\;\text{ and }\;\;\;\; g^s = g \sqcap \inv^s(l)\]

For $C\subseteq \C$ and $l'\in \Loc'\setminus\{\fail\}$ let
\[
\mu'(C, l') = \begin{cases}
\mu(C, l') & \text{ if }\:\exists s, s'\in R, (l, v)\in s.\; (l',v[C := 0]) \in s' \\
0 & \text{ otherwise }
\end{cases}
\]
and assign the remaining probability to $\mu'(\fail, \emptyset)$. Now add a transition $l\overset{g^w : \alpha}{\longrightarrow}\mu'$ to $T^w(l)$ and $l\overset{g^s : \alpha}{\longrightarrow}\mu'$ to $T^s(l)$.
  \end{itemize}
\end{definition}

The intuition behind this construction is that one completes all states of $\T$ whose equivalence class is in $R$ to a smallest (weak or strong) subsystem of $\T$ whose state space contains this set. In each location, the set of clock valuations which induce states in $R$ is turned into a viable invariance condition using the operation $\sqcup$. Guards of transitions in $\T$ are shrunk accordingly, and their support is restricted to those pairs $(C, l')$ which -- on the level of the quotient $\M$ -- induce at least one transition between two elements of $R$.

\begin{restatable}{lemma}{rtoptasubsset}
  \label{lem:rtoptasubsset}

  Let $(\T, \goal, \fail)$ be a pointed PTA and $\M = (S_\all, \Act, T, s_0)$ the quotient of $\Scal(\T)$ by a PTAB that respects $\goal$ and $\fail$. Then for any $R\subseteq S$, $\T_R^w$ is a subsystem and $\T_R^s$ is a strong subsystem of $\T$.
\end{restatable}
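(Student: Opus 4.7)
The plan is to verify directly the conditions of \Cref{def:subsystem} from the construction in \Cref{def:indsubsys}, handling first the three conditions shared by weak and strong subsystems for both $\T_R^w$ and $\T_R^s$, and then the extra conditions for $\T_R^s$.

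Condition (1) is immediate from the definition of $\Loc'$. For condition (2), the key observation is that each $s \in R \subseteq S$ arises as an equivalence class of semantic states $(l,v)$ with $v \models \inv(l)$, so $s_{|l} \subseteq \Val(\inv(l))$. Since $\Val(\inv(l))$ is itself a zone, \Cref{lem:regioncup} gives $\Val(M_l^w) \subseteq \Val(\inv(l))$, hence $\inv^w(l) \Implies \inv(l)$. For $\T_R^s$ the inclusion is immediate from $\inv^s(l) = (\timeup M_l^w) \sqcap M_{\inv(l)}$.

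For condition (3), both $T^w(l)$ and $T^s(l)$ are built by assigning to each $l \overset{g:\alpha}{\to}\mu$ in $T(l)$ a shrunk transition. Because different originals may collapse after shrinking (as anticipated in the footnote to \Cref{def:subsystem}), the injective map $\Phi$ is obtained by picking for each transition in $T'(l)$ a canonical pre-image among the originals that produced it. Then (3a) is immediate since $g^w$ and $g^s$ are constructed as $g \sqcap (\ldots)$, (3b) holds because actions are preserved verbatim, and (3c) is enforced by the case distinction in the definition of $\mu'$, with all remaining probability routed to $(\fail,\emptyset)$.

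For $\T_R^s$ to be a strong subsystem we additionally define $\Psi\colon T(l)\to T^s(l)$ by sending each original to its shrunk counterpart, so that $\Psi\circ\Phi = \mathrm{id}_{T^s(l)}$ by the canonical choice of $\Phi$. Condition (3a$^*$), $g^s \equiv g \land \inv^s(l)$, is immediate from $g^s = g\sqcap \inv^s(l)$. The main technical point is condition (4): it will be derived from the defining property of the time-closure operation, namely that $\Val(\timeup M)$ is closed under time successors (\citeapp{lem:timeup}). Concretely, if $v \models \inv^s(l)$ and $v+t \models \inv(l)$, then $v \models \timeup M_l^w$ forces $v+t \models \timeup M_l^w$ by time-closure, and $v+t \models M_{\inv(l)}$ holds by assumption, so $v+t \models \inv^s(l)$. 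The only real obstacle is this application of time-closure together with the bookkeeping needed to ensure that $\Phi$ is a well-defined injection despite possible collapses of shrunk transitions.
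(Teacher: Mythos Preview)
Your proposal is correct and follows essentially the same approach as the paper. The paper's own proof is in fact considerably terser: after verifying condition (2) via \Cref{lem:regioncup} exactly as you do, it simply states that conditions (3), (3$^*$), and (4) ``follow immediately from the construction,'' whereas you spell out the construction of $\Phi$ and $\Psi$ (including the collapse issue) and give the explicit time-closure argument for condition (4) via \Cref{lem:timeup}.
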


The following proposition states that Farkas certificates for any PTAB quotient of the PTA can be used to find witnesses for probabilistic reachability constraints. 
It is a generalization of the backward direction of \cite[Theorem 5.4]{FunkeJB20} and provides a converse of \Cref{prop:subsysandfarkascert}.

\begin{restatable}[Farkas certificates to witnesses]{proposition}{witsubsysandfarkcert}
  \label{prop:witsubsysandfarkcert}
	Let $(\T, \goal, \fail)$ be a pointed PTA and $\M = (S_\all, \Act, T, s_0)$ the quotient of $\Scal(\T)$ by a PTAB $\sim$ that respects $\goal$ and $\fail$. Fix $\lambda\in[0,1]$ and $R\subseteq S$.
	
	If there exists a Farkas certificate $\zb\in\P^{\min}_{\M}(\lambda)$ with $\supp(\zb)\subseteq R$, then $\T_R^s$ is a witness for $\prb_{\T}^{\min}(\lozenge \goal) \geq \lambda$. Likewise, if there exists a Farkas certificate $\yb\in\P^{\max}_{\M}(\lambda)$ with $\supp_S(\yb)\subseteq R$, then $\T_R^w$ is a witness for $\prb_{\T}^{\max}(\lozenge \goal) \geq \lambda$.
\end{restatable}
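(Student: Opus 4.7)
The plan is to handle the min case; the max case follows the same pattern with weak in place of strong throughout. By~\Cref{lem:rtoptasubsset}, $\T_R^s$ is already a strong subsystem of $\T$, so it remains to prove $\prb^{\min}_{\T_R^s}(\lozenge \goal) \geq \lambda$. My approach is to move this inequality back to the MDP setting in three steps: (i) restrict $\sim$ to a PTAB $\sim'$ on $\Scal(\T_R^s)$ that still respects $\goal$ and $\fail$; (ii) identify the quotient $\M' := \M(\Scal(\T_R^s)/_{\sim'})$ as a strong subsystem of $\M$ whose state space is contained in $R \cup \{\goal, \fail\}$; (iii) combine the Farkas-to-subsystem direction of~\cite[Theorem 5.4]{FunkeJB20} applied to $\zb$ with the fact that $\supp(\zb) \subseteq R$ lies inside the states of $\M'$ to obtain $\prb^{\min}_{\M'}(\lozenge \goal) \geq \lambda$, and finally apply~\Cref{lem:bisimreach} to transport this back to $\T_R^s$.

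Step (i) is largely bookkeeping. Since $\inv^s(l) \Implies \inv(l)$, the states of $\Scal(\T_R^s)$ form a subset of those of $\Scal(\T)$, and the restriction of $\sim$ remains an equivalence respecting $\goal$ and $\fail$. The PTAB matching conditions inherit from the strong subsystem conditions in~\Cref{def:subsystem}: condition (4) ensures that every time delay in $\Scal(\T)$ between states of $\Scal(\T_R^s)$ remains available, and the inverse $\Psi$ of condition (3$\,^*\!$) ensures that every discrete action of $T(l)$ has a matching action in $T^s(l)$ with the same distribution, up to redirection to $\fail$.

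The main obstacle is step (ii): making precise the correspondence between the quotient of the subsystem and the appropriate subsystem of the quotient. By~\Cref{def:indsubsys} and~\Cref{lem:regioncup}, $\inv^s(l)$ is the smallest time-closed zone inside $\inv(l)$ that contains $\bigcup_{s \in R} s_{|l}$. This has two consequences: every $\sim'$-class of $\M'$ is the image of a $\sim$-class lying in $R \cup \{\goal, \fail\}$, and no transition at a state $(l,v)$ with $[(l,v)] \in R$ is lost, since $g^s = g \sqcap \inv^s(l)$ keeps every originally enabled action enabled. The only modification to the distributions is that probability mass whose target $\sim$-class lies outside $R$ (or whose target valuation falls outside the shrunk invariant) is redirected to $\fail$ via $\mu'$, exactly as the definition of strong MDP subsystem permits. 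With $\M' \subseteq \M$ established as a strong subsystem whose state space contains $\supp(\zb)$,~\cite[Theorem 5.4]{FunkeJB20} delivers $\prb^{\min}_{\M'}(\lozenge \goal) \geq \lambda$, and~\Cref{lem:bisimreach} applied to $\sim'$ completes the argument. The max case is identical with $\T_R^w$ and $\yb$; the only structural change is that $\inv^w(l)$ need not be time-closed, so $\M(\Scal(\T_R^w)/_{\sim})$ is only a weak (not strong) subsystem of $\M$, which is exactly what is needed for the $\prb^{\max}$ threshold.
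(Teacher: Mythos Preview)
There is a genuine gap in step (ii). Your inference that the states of $\M'$ lie in $R \cup \{\goal,\fail\}$ does not follow from $\inv^s(l)$ being the \emph{smallest} time-closed zone containing $\bigcup_{s \in R} s_{|l}$: the zone closure $\bigsqcup_{s\in R} M_{s_{|l}}$ typically adds valuations whose $\sim$-class is not in $R$ (e.g.\ with one clock $c$ and $R$ consisting of the point-regions $c{=}0$ and $c{=}2$, the closure is $0\leq c\leq 2$, which contains the region $c{=}1\notin R$), and the time closure adds still more. Hence $\Scal(\T_R^s)$ has states with $\sim$-class outside $R$, so $\M'$ has states outside $R$ and is not the induced subsystem $\M_R$; your appeal to~\cite[Theorem~5.4]{FunkeJB20} therefore does not directly yield $\prb^{\min}_{\M'}(\lozenge\goal)\geq\lambda$. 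Step (i) also glosses over the discrete-action clause of the PTAB definition: what you cite (the map $\Psi$) relates $T(l)$ to $T^s(l)$ at a \emph{single} location, but the PTAB condition requires matching a transition at $s$ by one at a $\sim'$-equivalent state $s'$ with the same quotient distribution over $\sim'$-classes, which is a different statement and not obviously inherited from $\sim$ once invariants are shrunk and $\mu'$ globally redirects edges.

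The paper avoids both issues by not taking the quotient of $\Scal(\T_R^s)$ at all. Instead it introduces the TPS $\Scal_R$ consisting of exactly those states $(l,v)$ of $\Scal(\T)$ with $[(l,v)]\in R$ (transitions inherited from $\Scal(\T)$, with successors outside $\Scal_R$ redirected to $\fail$). Then $\sim$ restricts to a PTAB on $\Scal_R$ whose quotient \emph{is} $\M_R$, so~\Cref{lem:bisimreach} and~\cite[Theorem~5.4]{FunkeJB20} give $\prb^{\min}_{\Scal_R}(\lozenge\goal)=\prb^{\min}_{\M_R}(\lozenge\goal)\geq\lambda$. The remaining link $\prb^{\min}_{\Scal(\T_R^s)}(\lozenge\goal)\geq\prb^{\min}_{\Scal_R}(\lozenge\goal)$ is established by a direct scheduler argument: every scheduler on $\Scal(\T_R^s)$ induces one on $\Scal_R$ (by mimicking on paths in $\Scal_R$) with no larger probability of reaching $\goal$, because transitions of $\Scal(\T_R^s)$ between states of $\Scal_R$ agree with those of $\Scal_R$ up to extra redirections to $\fail$. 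This comparison absorbs the ``extra'' states of $\Scal(\T_R^s)$ without ever needing $\sim$ to restrict to a PTAB on $\Scal(\T_R^s)$.
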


\section{Computing minimal witnessing subsystems}\label{sec:minimal}

We now introduce three notions of minimality for subsystems of PTAs and show how minimal (or small) subsystems can be computed.
Henceforth let $\M$ be the quotient (with states $S_\all$) of the semantics of a pointed PTA ($\T, \goal, \fail)$ by a PTAB $\sim$ that has finite index and let $S = S_{\all} \setminus \{\goal,\fail\}$.

As the threshold problem for $\min$ and $\max$-reachability constraints of PTAs is directly reducible to the \emph{existence} of a witness for the same property, computing (minimal) witnessing subsystems is at least as hard as this problem.
Deciding $\prb^{\max}_{\T}(\lozenge \goal) \geq 1$ is \EXPTIME-hard~\cite[Theorem 3.1]{LaroussinieS07} for PTAs, which holds already for time-bounded reachability.
\PSPACE-hardness of $\prb^{\min}_{\T}(\lozenge \goal) \geq 1$ (which is equivalent to $\prb^{\max}_{\T}(\lozenge \goal) > 0$ in the time-bounded setting) follows from \PSPACE-hardness of non-probabilistic reachability~\cite[Theorem 4.17]{AlurD1994}.

\subsection{Notions of minimality for PTA subsystems}

For a set of valuations $R\subseteq \Val(\C)$ we denote by $\vol(R)$ the Lebesgue volume of $R$ considered as a subset of $\mathbb{R}^{\C\setminus\{c_0\}}$. The \emph{volume} of a PTA $\T$ is defined as 
\[\vol(\T) = \sum_{l\in \Loc(\T)} \; \vol \big(\Val(\inv(l))\big) \in \R_{\geq 0} \cup \{\infty\}.\]
\vspace{-2mm}
\begin{definition}[Notions of minimality]
	We define three partial orders on subsystems $\T_1,\T_2$ of a PTA $\T$ as follows:
	\begin{enumerate}
		\item $\T_1\leq_{\loc}\T_2$ if $|\Loc(\T_1)| \leq |\Loc(\T_2)|$;
		\item $\T_1\leq_{\inv}\T_2$ if $\Loc(\T_1)\subseteq\Loc(\T_2)$ and for all $l \in \Loc(\T_1): \; \inv_{\T_1}(\l) \Implies \inv_{\T_2}(l)$;
		\item $\T_1\leq_{\vol}\T_2$ if $\vol(\T_1) \leq \vol(\T_2)$.
	\end{enumerate}
	We say that a witness $\T'\subseteq\T$ for some threshold property as defined in \Cref{def:witness} is \emph{loc-minimal} (respectively, \emph{inv-minimal} or \emph{vol-minimal}) if $\T'$ is a $\leq_{\loc}$-minimal element (respectively, $\leq_{\inv}$-minimal or $\leq_{\vol}$-minimal element) among all witnesses of $\T$ for the same threshold property.
\end{definition}

When considering inv- and vol-minimality, we will assume that $\Val(\inv(l))$ is bounded for every location $l\in \Loc$, or, equivalently, that a finite upper bound on all clocks exists.
This will guarantee that the set of witnesses that we have to consider is finite, and, for vol-minimality, that their volume is finite. 

The rationale for considering vol-minimal witnesses is that they have -- in a precise measure-theoretic sense -- a minimal number of states. Note that in contrast to $\leq_{\loc}$ and $\leq_{\vol}$, the partial order $\leq_{\inv}$ is not a total order and thus results in general in many incomparable inv-minimal witnesses.

\renewcommand{\arraystretch}{1.2}
\renewcommand{\tabcolsep}{2mm}
\setlength{\textfloatsep}{0.3cm}

\begin{example}
  \label{ex:minimal}
	Consider the PTA of \Cref{ex:subsystem}. \Cref{tab:overview} lists minimal witnesses for $\lambda = 6/25$ for all three notions of minimality.
    The inv-minimal witnesses for $\prb^{\max}$ also encode corresponding schedulers with probability of at least $6/25$ (e.g. the first one encodes waiting in $l_1$ for one time unit, choosing $\alpha$, and on the branch going through $l_0$ repeating this once more).
    For $\prb^{\min}$, the inv-minimal witnesses ensure that whatever choice the scheduler makes the induced probability will be at least $6/25$. See \Cref{ex:minimal extended} in \referapp{} for more details.
	\begin{table}[h]
		\centering
		\captionsetup{labelfont={up}}
		\caption{Every indent describes a minimal witness for the PTA $\T$ in \Cref{fig:example_pta}. For inv-minimal ones, invariants are highlighted in blue after colons of the corresponding location, where the clock $x$ is drawn on the horizontal axis, $y$ on the vertical axis, and gridlines have unit $1$.}		
		\begin{tabular}{|p{5mm}|p{5.0cm}|p{5.2cm}|}
			\hline
			& \multicolumn{1}{c|}{$\prb^{\max}_\T(\lozenge\goal)\geq 6/25$}
			& \multicolumn{1}{c|}{$\prb^{\min}_\T(\lozenge\goal)\geq 6/25$}
			\\ \hline
			\vspace{1mm}  loc
			&
			\begin{itemize}[topsep=-5ex, itemsep=0.5ex]
				\item keeping $l_0$ and $l_1$;
				\item keeping $l_0$ and $l_2$;\vspace*{-0.6\baselineskip}
			\end{itemize} 
			& 
			\begin{itemize}[topsep=-15ex]
				\item keeping $l_0$ and $l_2$;
			\end{itemize} 
			\\\hline
			\vspace{15mm}\centering inv 
			& \begin{itemize}[topsep=-5ex, itemsep=1ex,  leftmargin = 2.5ex]
					\item $l_0$:    \begin{tikzpicture}[baseline=(O.base)] {
	\node (O)  {};}
      \begin{axis}[
          width=2.1cm, height=2.1cm,
          xmin=0, xmax=1, xmajorticks=false,
          ymin=0, ymax=1, ymajorticks=false
         ]
        \path[name path=xaxis] (axis cs:0,0) -- (axis cs:1,0);
        \path[name path=yaxis] (axis cs:0,0) -- (axis cs:0,1);

        \addplot[name path=line1,ultra thick,color=blue,mark=*, mark options={scale=\pointsize}]
        coordinates
        {(0,0) (0,1)};
      \end{axis}
    \end{tikzpicture}, $l_1$:    \begin{tikzpicture}[baseline=(O.base)] {
	\node (O)  {};}
		\begin{axis}[
		width=2.1cm, height=2.65cm,
		xmin=0, xmax=1, xmajorticks=false,
		ymin=0, ymax=2, ymajorticks=false
		]
		\path[name path=xaxis] (axis cs:0,0) -- (axis cs:1,0);
		\path[name path=yaxis] (axis cs:0,0) -- (axis cs:0,2);

        \addplot[name path=line1, ultra thick,color=blue,mark=*, mark options={scale=\pointsize}]
        coordinates
        {(0,0) (1,1)};

		\addplot[name path=line2,ultra thick,color=blue,mark=*, mark options={scale=\pointsize}]
		coordinates
		{(0,0) (0,1)};
		
		\addplot[name path=line3,ultra thick,color=blue,mark=*, mark options={scale=\pointsize}]
		coordinates
		{(0,1) (1,2)};
		
		\addplot[name path=line4,ultra thick,color=blue,mark=*, mark options={scale=\pointsize}]
		coordinates
		{(1,1) (1,2)};
		
		\addplot[thick,
		color=blue,
		fill=blue, 
		fill opacity=0.15]
		fill between[
		of = line1 and line3
		];
		
		\addplot[name path=line3, thin,color=gray,mark=none]
		coordinates
		{(0,1) (1,1)};
      \end{axis}
    \end{tikzpicture}
					\item $l_0$:    \begin{tikzpicture}[baseline=(O.base)] {
	\node (O)  {};}
      \begin{axis}[
          width=2.1cm, height=2.1cm,
          xmin=0, xmax=1, xmajorticks=false, 
          ymin=0, ymax=1, ymajorticks=false
         ]
        \path[name path=xaxis] (axis cs:0,0) -- (axis cs:1,0);
        \path[name path=yaxis] (axis cs:0,0) -- (axis cs:0,1);

        \addplot[name path=line1,ultra thick,color=blue,mark=*, mark options={scale=\pointsize}]
        coordinates
        {(0,0) (0,0)};
      \end{axis}
    \end{tikzpicture}, $l_2$:    \begin{tikzpicture}[baseline=(O.base)] {
	\node (O)  {};}
      \begin{axis}[
          width=2.1cm, height=2.1cm,
          xmin=0, xmax=1, xmajorticks=false,
          ymin=0, ymax=1, ymajorticks=false
         ]
        \path[name path=xaxis] (axis cs:0,0) -- (axis cs:1,0);
        \path[name path=yaxis] (axis cs:0,0) -- (axis cs:0,1);

        \addplot[name path=line1,ultra thick,color=blue, mark=*, mark options={scale=\pointsize}]
        coordinates
        {(0,0) (1,1)};
      \end{axis}
    \end{tikzpicture}
					\item $l_0$:    \begin{tikzpicture}[baseline=(O.base)] {
	\node (O)  {};}
      \begin{axis}[
          width=2.1cm, height=2.1cm,
          xmin=0, xmax=1, xmajorticks=false, 
          ymin=0, ymax=1, ymajorticks=false
         ]
        \path[name path=xaxis] (axis cs:0,0) -- (axis cs:1,0);
        \path[name path=yaxis] (axis cs:0,0) -- (axis cs:0,1);

        \addplot[name path=line1,ultra thick,color=blue,mark=*, mark options={scale=\pointsize}]
        coordinates
        {(0,0) (0,0)};
      \end{axis}
    \end{tikzpicture}, $l_1$:    \begin{tikzpicture}[baseline=(O.base)] {
	\node (O)  {};}
      \begin{axis}[
          width=2.1cm, height=2.1cm,
          xmin=0, xmax=1, xmajorticks=false, 
          ymin=0, ymax=1, ymajorticks=false
         ]
        \path[name path=xaxis] (axis cs:0,0) -- (axis cs:1,0);
        \path[name path=yaxis] (axis cs:0,0) -- (axis cs:0,1);

        \addplot[name path=line1,ultra thick,color=blue,mark=*, mark options={scale=\pointsize}]
        coordinates
        {(0,0) (0,0)};
      \end{axis}
    \end{tikzpicture}, $l_3$:    \begin{tikzpicture}[baseline=(O.base)] {
	\node (O)  {};}
      \begin{axis}[
          width=2.1cm, height=2.1cm,
          xmin=0, xmax=1, xmajorticks=false,
          ymin=0, ymax=1, ymajorticks=false
         ]
        \path[name path=xaxis] (axis cs:0,0) -- (axis cs:1,0);
        \path[name path=yaxis] (axis cs:0,0) -- (axis cs:0,1);

        \addplot[name path=line1,ultra thick,color=blue, mark=*, mark options={scale=\pointsize}]
        coordinates
        {(0,0) (1,1)};
      \end{axis}
    \end{tikzpicture}
					\item $l_0$:    \begin{tikzpicture}[baseline=(O.base)] {
	\node (O)  {};}
      \begin{axis}[
          width=2.1cm, height=2.1cm,
          xmin=0, xmax=1, xmajorticks=false, 
          ymin=0, ymax=1, ymajorticks=false
         ]
        \path[name path=xaxis] (axis cs:0,0) -- (axis cs:1,0);
        \path[name path=yaxis] (axis cs:0,0) -- (axis cs:0,1);

        \addplot[name path=line1,ultra thick,color=blue,mark=*, mark options={scale=\pointsize}]
        coordinates
        {(0,0) (0,0)};
      \end{axis}
    \end{tikzpicture}, $l_1$:    \begin{tikzpicture}[baseline=(O.base)] {
	\node (O)  {};}
      \begin{axis}[
          width=2.1cm, height=2.1cm,
          xmin=0, xmax=1, xmajorticks=false,
          ymin=0, ymax=1, ymajorticks=false
         ]
        \path[name path=xaxis] (axis cs:0,0) -- (axis cs:1,0);
        \path[name path=yaxis] (axis cs:0,0) -- (axis cs:0,1);

        \addplot[name path=line1,ultra thick,color=blue, mark=*, mark options={scale=\pointsize}]
        coordinates
        {(0,0) (1,1)};
      \end{axis}
    \end{tikzpicture}, $l_3$:    \begin{tikzpicture}[baseline=(O.base)] {
	\node (O)  {};}
      \begin{axis}[
          width=2.1cm, height=2.1cm,
          xmin=0, xmax=1, xmajorticks=false,
          ymin=0, ymax=1, ymajorticks=false
          ]
        \path[name path=xaxis] (axis cs:0,0) -- (axis cs:1,0);
        \path[name path=yaxis] (axis cs:0,0) -- (axis cs:0,1);

        \addplot[name path=line1,ultra thick,color=blue,mark=*, mark options={scale=\pointsize}]
        coordinates
        {(1,1) (1,1)};
      \end{axis}
    \end{tikzpicture}\vspace*{-0.7\baselineskip}
				\end{itemize} 
			&  
				\begin{itemize}[topsep=-5ex, itemsep=1.2ex,   leftmargin = 2.5ex]
					\item $l_0$:    \begin{tikzpicture}[baseline=(O.base)] {
	\node (O)  {};}
      \begin{axis}[
          width=2.1cm, height=2.1cm,
          xmin=0, xmax=1, xmajorticks=false, 
          ymin=0, ymax=1, ymajorticks=false
         ]
        \path[name path=xaxis] (axis cs:0,0) -- (axis cs:1,0);
        \path[name path=yaxis] (axis cs:0,0) -- (axis cs:0,1);

        \addplot[name path=line1,ultra thick,color=blue,mark=*, mark options={scale=\pointsize}]
        coordinates
        {(0,0) (0,0)};
      \end{axis}
    \end{tikzpicture}, $l_2$:    \begin{tikzpicture}[baseline=(O.base)] {
	\node (O)  {};}
      \begin{axis}[
          width=2.65cm, height=2.65cm,
          xmin=0, xmax=2, xmajorticks=false,
          ymin=0, ymax=2, ymajorticks=false
          ]
        \path[name path=xaxis] (axis cs:0,0) -- (axis cs:2,0);
        \path[name path=yaxis] (axis cs:0,0) -- (axis cs:0,2);
        
        \addplot[name path=line1,ultra thick,color=blue,mark=*, mark options={scale=\pointsize}]
        coordinates
        {(0,0) (2,2)};
        
        \addplot[name path=line2, thin,color=gray,mark=none]
        coordinates
        {(1,0) (1,2)};
        
        \addplot[name path=line3, thin,color=gray,mark=none]
        coordinates
        {(0,1) (2,1)};
      \end{axis}
    \end{tikzpicture}
					\item $l_0$:    \begin{tikzpicture}[baseline=(O.base)] {
	\node (O)  {};}
      \begin{axis}[
          width=2.1cm, height=2.65cm,
          xmin=0, xmax=1, xmajorticks=false,
          ymin=0, ymax=2, ymajorticks=false
         ]
        \path[name path=xaxis] (axis cs:0,0) -- (axis cs:1,0);
        \path[name path=yaxis] (axis cs:0,0) -- (axis cs:0,2);

        \addplot[name path=line1,ultra thick,color=blue,mark=*, mark options={scale=\pointsize}]
        coordinates
        {(0,0) (0,2)};
        
        \addplot[name path=line3, thin,color=gray,mark=none]
        coordinates
        {(0,1) (1,1)};
      \end{axis}
    \end{tikzpicture}, $l_1$:    \begin{tikzpicture}[baseline=(O.base)] {
	\node (O)  {};}
		\begin{axis}[
		width=2.65cm, height=3.75cm,
		xmin=0, xmax=2, xmajorticks=false,
		ymin=0, ymax=4, ymajorticks=false
		]
		\path[name path=xaxis] (axis cs:0,0) -- (axis cs:2,0);
		\path[name path=yaxis] (axis cs:0,0) -- (axis cs:0,4);

        \addplot[name path=line1, ultra thick,color=blue,mark=*, mark options={scale=\pointsize}]
        coordinates
        {(0,0) (2,2)};

		\addplot[name path=line2,ultra thick,color=blue,mark=*, mark options={scale=\pointsize}]
		coordinates
		{(0,0) (0,2)};
		
		\addplot[name path=line3,ultra thick,color=blue,mark=*, mark options={scale=\pointsize}]
		coordinates
		{(0,2) (2,4)};
		
		\addplot[name path=line4,ultra thick,color=blue,mark=*, mark options={scale=\pointsize}]
		coordinates
		{(2,2) (2,4)};
		
		\addplot[thick,
		color=blue,
		fill=blue, 
		fill opacity=0.15]
		fill between[
		of = line1 and line3
		];
		
		\addplot[name path=line3, thin,color=gray,mark=none]
		coordinates
		{(1,0) (1,4)};
		
		\addplot[name path=line3, thin,color=gray,mark=none]
		coordinates
		{(0,1) (2,1)};
		
		\addplot[name path=line3, thin,color=gray,mark=none]
		coordinates
		{(0,2) (2,2)};
		
		\addplot[name path=line3, thin,color=gray,mark=none]
		coordinates
		{(0,3) (2,3)};

      \end{axis}
    \end{tikzpicture}, $l_3$:    \begin{tikzpicture}[baseline=(O.base)] {
	\node (O)  {};}
		\begin{axis}[
		width=2.1cm, height=3.75cm,
		xmin=0, xmax=1, xmajorticks=false,
		ymin=0, ymax=4, ymajorticks=false
		]
		\path[name path=xaxis] (axis cs:0,0) -- (axis cs:1,0);
		\path[name path=yaxis] (axis cs:0,0) -- (axis cs:0,4);

        \addplot[name path=line1, ultra thick,color=blue,mark=*, mark options={scale=\pointsize}]
        coordinates
        {(0,0) (1,1)};

		\addplot[name path=line2,ultra thick,color=blue,mark=*, mark options={scale=\pointsize}]
		coordinates
		{(0,0) (0,2)};
		
		\addplot[name path=line3,ultra thick,color=blue,mark=*, mark options={scale=\pointsize}]
		coordinates
		{(0,2) (1,3)};
		
		\addplot[name path=line4,ultra thick,color=blue,mark=*, mark options={scale=\pointsize}]
		coordinates
		{(1,1) (1,3)};
		
		\addplot[thick,
		color=blue,
		fill=blue, 
		fill opacity=0.15]
		fill between[
		of = line1 and line3
		];
		
		\addplot[name path=line3, thin,color=gray,mark=none]
		coordinates
		{(1,0) (1,4)};
		
		\addplot[name path=line3, thin,color=gray,mark=none]
		coordinates
		{(0,1) (1,1)};
		
		\addplot[name path=line3, thin,color=gray,mark=none]
		coordinates
		{(0,2) (1,2)};
		
		\addplot[name path=line3, thin,color=gray,mark=none]
		coordinates
		{(0,3) (1,3)};

      \end{axis}
    \end{tikzpicture}\vspace*{-0.6\baselineskip}
				\end{itemize}
			\\\hline
			\vspace{1mm} vol
			& 	
				\begin{itemize}[topsep=-5ex, leftmargin = 2.5ex]
					\item the bottom three inv-minimal witnesses from above ($\vol = 0$)\vspace*{-0.55\baselineskip}
				\end{itemize}
			& 
				\begin{itemize}[topsep=-5ex,  leftmargin = 2.5ex]
					\item the top inv-minimal witness from above ($\vol=0$)\vspace*{-0.55\baselineskip}
				\end{itemize}
			\\ \hline
		\end{tabular}
	\label{tab:overview}
	\end{table}
\end{example}

\begin{restatable}{lemma}{partialorders}\label{lem:orders}
  We have $\leq_{\inv} \; \subseteq \; \leq_{\loc} \cap \leq_{\vol}$. Moreover, $\leq_{\vol}$ and $\leq_{\loc}$ are incomparable in general.
\end{restatable}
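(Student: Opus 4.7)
The plan is to verify the inclusion $\leq_{\inv} \subseteq \leq_{\loc} \cap \leq_{\vol}$ directly from the definitions, and then exhibit two subsystems whose relative orderings in $\leq_{\loc}$ and $\leq_{\vol}$ are opposite, establishing the incomparability claim.

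For the inclusion, suppose $\T_1 \leq_{\inv} \T_2$. The containment $\Loc(\T_1) \subseteq \Loc(\T_2)$ is built into the definition of $\leq_{\inv}$, so $|\Loc(\T_1)| \leq |\Loc(\T_2)|$ and hence $\T_1 \leq_{\loc} \T_2$ follows immediately. For volume, the condition $\inv_{\T_1}(l) \Implies \inv_{\T_2}(l)$ yields $\Val(\inv_{\T_1}(l)) \subseteq \Val(\inv_{\T_2}(l))$ for every $l \in \Loc(\T_1)$, and monotonicity of Lebesgue measure gives the corresponding inequality for each per-location summand. Summing over $\Loc(\T_1)$ and then extending the sum to all of $\Loc(\T_2)$ using nonnegativity of volumes produces $\vol(\T_1) \leq \vol(\T_2)$. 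This step is a routine calculation; I expect no obstacle here.

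The substantive part is the incomparability of $\leq_{\loc}$ and $\leq_{\vol}$. Here I would pick two witnesses $\T_A, \T_B$ of a common PTA such that $\T_A$ has strictly fewer locations but at least one of them carries a full-dimensional invariant of strictly positive volume, while $\T_B$ has strictly more locations but each carries a lower-dimensional (and hence Lebesgue-null) invariant. Then $\T_A \leq_{\loc} \T_B$ yet $\T_A \not\leq_{\vol} \T_B$, and conversely $\T_B \leq_{\vol} \T_A$ yet $\T_B \not\leq_{\loc} \T_A$, so neither partial order refines the other. A configuration of this shape is already available in the running example: as recorded in Table \ref{tab:overview}, for the property $\prb^{\max}_\T(\lozenge\goal)\geq 6/25$ the loc-minimal witness on $\{l_0, l_1\}$ with the original invariant $x\leq 2$ at $l_1$ has strictly positive volume, while several inv-minimal witnesses on three locations in the same table carry tightened invariants of volume zero. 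The only care required is to pin down an explicit pair and check that the volume inequalities indeed go in the opposite direction to the location counts.
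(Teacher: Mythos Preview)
Your proposal is correct and follows essentially the same approach as the paper: the inclusion is derived directly from the definitions via $\Loc(\T_1)\subseteq\Loc(\T_2)$ and monotonicity of Lebesgue measure, and incomparability is shown by exhibiting subsystems where location count and volume move in opposite directions. The only cosmetic difference is that the paper sketches two separate abstract configurations (same location count with different invariants; more locations with less total volume), whereas you extract a single concrete pair from the running example that witnesses both non-inclusions at once.
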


Note that \Cref{lem:orders} does not imply that inv-minimal witness are loc-minimal or vol-minimal. This is because an inv-minimal witness might be $\leq_{\inv}$-incomparable to witnesses with smaller volume (see also \Cref{ex:minimal}).

\subsection{Computing loc-minimal witnesses}

In this section we will assume that whenever $(l_1,v_1) \sim (l_2,v_2)$, then $l_1 = l_2$.
To compute a loc-minimal strong subsystem of $\T$ we use a mixed integer linear program (MILP) over the inequalities defining $\P^{\min}_{\M}(\lambda)$ (see~\Cref{sec:prelims}). We first define the linear inequalities:
\begin{equation}\label{eq:loc constr}
  \tag{\footnotesize \texttt{LOC-CONSTR}}
  \zb \in \P_{\M}^{\min}(\lambda)\; \text{ and }\; \zb_{[(l,v)]} \leq \zeta_l \text{ for all } [(l,v)] \in S
\end{equation}
This adds exactly $|S|$ inequalities to the ones defining $\P^{\min}_{\M}(\lambda)$. The idea is that as the variable $\zb_{[(l,v)]}$ measures whether $[(l,v)]$ should be contained in the MDP subsystem associated with a Farkas certificate, the new variable $\zeta_l$ measures whether location $l$ is needed \emph{at all} in the corresponding PTA subsystem.

\begin{restatable}{proposition}{locminimal}
  \label{prop:locminimalws}
	There exists a witnessing subsystem for $\prb^{\min}_{\T}(\lozenge \goal) \geq\lambda$ with at most $k$ locations (excluding $\goal$ and $\fail$) if and only if there exists a pair $(\zb, \zeta)$ that satisfies \emph{(\ref{eq:loc constr})}, where $\zeta$ has at most $k$ non-zero entries.
\end{restatable}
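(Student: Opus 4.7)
The proof is a direct two-way application of the machinery developed earlier, using the assumption that $\sim$ respects locations (so each equivalence class $[(l,v)] \in S$ has a well-defined location $l$).

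For the $(\Leftarrow)$ direction, suppose $(\zb, \zeta)$ satisfies (\ref{eq:loc constr}) with $\zeta$ having at most $k$ non-zero entries. Since $\zb \in \P^{\min}_{\M}(\lambda)$, I set $R = \supp(\zb) \subseteq S$ and invoke \Cref{prop:witsubsysandfarkcert} to conclude that $\T_R^s$ is a witness for $\prb^{\min}_{\T}(\lozenge\goal) \geq \lambda$. By \Cref{def:indsubsys}, the locations of $\T_R^s$ (other than $\goal$ and $\fail$) are exactly $\{l \in \Loc \mid \exists\, [(l,v)] \in R\}$, using the location-respecting property of $\sim$. The inequality $\zb_{[(l,v)]} \leq \zeta_l$ implies that whenever $[(l,v)] \in R$ (i.e.\ $\zb_{[(l,v)]} > 0$) we must have $\zeta_l > 0$, so the number of such $l$ is bounded by the number of non-zero entries of $\zeta$, i.e.\ at most $k$.

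For the $(\Rightarrow)$ direction, suppose $\T'\subseteq\T$ is a strong subsystem with at most $k$ locations (other than $\goal$ and $\fail$) witnessing $\prb^{\min}_{\T'}(\lozenge\goal) \geq \lambda$. By \Cref{prop:subsysandfarkascert} applied to $\T'$ there exists a Farkas certificate $\zb \in \P^{\min}_{\M}(\lambda)$ with $\supp(\zb) \subseteq S' = \{[s] \in S \mid s \text{ is a state of } \Scal(\T')\}$. Since $\sim$ respects locations, every element of $S'$ has the form $[(l,v)]$ for some $l \in \Loc(\T')$. Now define $\zeta_l = \max\{\zb_{[(l,v)]} \mid [(l,v)] \in S\}$ for each $l \in \Loc$ (with the convention $\max\emptyset = 0$). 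Then $\zb_{[(l,v)]} \leq \zeta_l$ holds by construction, and $\zeta_l = 0$ whenever $l \notin \Loc(\T')\setminus\{\goal,\fail\}$, so $\zeta$ has at most $|\Loc(\T')\setminus\{\goal,\fail\}| \leq k$ non-zero entries.

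The argument is essentially bookkeeping on top of the translations in \Cref{prop:subsysandfarkascert} and \Cref{prop:witsubsysandfarkcert}; no analytically hard step is involved. The only subtle point is that the bridge between the count of MDP support elements (indexed by $[(l,v)] \in S$) and the count of PTA locations (indexed by $l$) requires the location-respecting assumption on the bisimulation, which was flagged at the start of the subsection. This is what makes the auxiliary variable $\zeta_l$ a faithful indicator for ``location $l$ is used,'' and without it the stated equivalence would fail because a single $l$ could host support on many equivalence classes while still constituting a single PTA location.
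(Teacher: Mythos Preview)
Your proof is correct and follows essentially the same route as the paper: both directions rest on \Cref{prop:subsysandfarkascert} and \Cref{prop:witsubsysandfarkcert}, together with the location-respecting assumption on $\sim$. The only difference is cosmetic: in the $(\Rightarrow)$ direction the paper sets $\zeta_l \in \{0,1\}$ as an indicator and then invokes the bound $\zb_s \leq 1$ (from~\cite[Lemma~3.1]{FunkeJB20}) to justify $\zb_{[(l,v)]} \leq \zeta_l$, whereas your choice $\zeta_l = \max_{[(l,v)] \in S} \zb_{[(l,v)]}$ makes the inequality hold by construction and sidesteps that citation.
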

Restricting $\zeta_l$ to the domain $\{0,1\}$ leads to the following MILP:
\begin{equation}\label{eq:loc milp}
\tag{\footnotesize\texttt{LOC-MILP}}
\min \sum_{l \in \Loc} \zeta_l \;\; \text{ s.t. } \;\; (\zb,\zeta) \text{ \footnotesize satisfies (\ref{eq:loc constr})}
\end{equation}
By~\Cref{prop:locminimalws}, solutions of (\ref{eq:loc milp}) correspond to loc-minimal witnesses for $\prb^{\min}_{\T}(\lozenge \goal) \geq\lambda$. Although the size of (\ref{eq:loc milp}) is exponential in the size of $\T$, it has only $|\Loc|$ many binary variables.
Hence, if the size of $\M$ is single-exponential (as is already the case for the \emph{region graph}, see~\cite{AlurCD93,KwiatkowskaNSS02}), a loc-minimal witness can be computed in single-exponential time:

\begin{restatable}{proposition}{locminimalmilp}
  \label{prop:locminimalmilp}
  A loc-minimal witness for $\prb^{\min}_{\T}(\lozenge \goal) \geq\lambda$ can be computed in time $\mathcal{O}(2^{|\Loc|} \cdot \poly(|\M|))$, if one exists.
\end{restatable}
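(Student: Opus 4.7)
The plan is to exhibit an explicit algorithm that solves (\ref{eq:loc milp}) within the claimed time bound, and then invoke \Cref{prop:locminimalws} to conclude that its output is a loc-minimal witness. Crucially, (\ref{eq:loc milp}) has continuous variables $\zb$ indexed by $S$ plus only $|\Loc|$ binary variables $\zeta_l$, and its total size is polynomial in $|\M|$ (the polytope $\P_{\M}^{\min}(\lambda)$ has a description of size polynomial in $|\M|$, and \eqref{eq:loc constr} adds $|S|$ further inequalities).

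The algorithm proceeds as follows. First, construct $\M$ from $\T$ and $\sim$ together with the matrix $\Ab$ and vector $\bb$ defining $\P^{\min}_{\M}(\lambda)$; this fits into the polynomial factor. Then enumerate all $2^{|\Loc|}$ assignments $\zeta\in\{0,1\}^{\Loc}$. For each such assignment, substitute the values into \eqref{eq:loc constr} to obtain a pure linear program in the variables $\zb$, and decide its feasibility in time polynomial in $|\M|$ using a polynomial-time LP algorithm (e.g.\ the ellipsoid method). Among all $\zeta$ for which the corresponding LP is feasible, select one minimizing $\sum_{l\in\Loc}\zeta_l$, together with a feasible witness $\zb$. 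If no assignment is feasible, no witness exists. The total running time is $\mathcal{O}(2^{|\Loc|}\cdot \poly(|\M|))$ as required.

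Once the optimal pair $(\zb,\zeta)$ has been found, set $R = \supp(\zb)\subseteq S$. By \Cref{prop:locminimalws}, the set $R$ determines a loc-minimal witness for $\prb^{\min}_{\T}(\lozenge\goal)\geq\lambda$ (with $|\Loc'|$ equal to the number of non-zero entries of $\zeta$, excluding $\goal,\fail$), and the induced strong subsystem $\T_R^s$ of \Cref{def:indsubsys} is such a witness by \Cref{prop:witsubsysandfarkcert}. The construction of $\T_R^s$ from $R$ only involves computing DBM operations $\sqcup$, $\sqcap$, $\timeup$ and canonical forms for the finitely many locations and transitions, all of which take time polynomial in $|\M|$.

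The only non-routine aspect is the observation that the number of integer variables in \eqref{eq:loc milp} is $|\Loc|$ and \emph{not} the potentially much larger $|S|$: this is precisely what makes the brute-force enumeration over binary assignments fit into the single-exponential bound in $|\Loc|$, while leaving the continuous optimization in $\zb$ to be dispatched by a polynomial-time LP solver on an instance of size $\poly(|\M|)$. Everything else (existence and correctness of feasible LP solutions, correspondence between $\zb$ and the induced subsystem) is directly supplied by \Cref{prop:locminimalws} and \Cref{prop:witsubsysandfarkcert}.
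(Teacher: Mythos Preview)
Your proposal is correct and follows essentially the same approach as the paper: enumerate all $2^{|\Loc|}$ binary assignments to $\zeta$, check feasibility of the resulting LP in $\zb$ for each (polynomial in $|\M|$), and return an assignment with fewest ones, invoking \Cref{prop:locminimalws} for correctness. You add some extra detail about explicitly constructing $\T_R^s$ via \Cref{prop:witsubsysandfarkcert} and the DBM operations, which the paper leaves implicit, but the core argument is identical.
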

One can deal with $\prb^{\max}_{\T}(\lozenge \goal) \geq\lambda$ similarly.
In~\cite{FunkeJB20} the \emph{quotient sum heuristic} was introduced as an approach for finding vectors with many zeros in a given polytope by iteratively solving LPs whose objective function is the inverse of the last optimal solution. This approach can be adapted to maximize zeros in only part of the dimensions by assigning the objective value $0$ to the rest. In the case of loc-minimal witnesses one discards all variables $\zb_{[(l,v)]}$ and optimizes only over the new variables $\zeta_l$ (which are non-binary in the LP-based QS heuristic).

\subsection{Computing inv-minimal witnesses}
We now assume that $\Val(\inv(l))$ is bounded in every location $l$, and take $K$ to be an upper bound on all clocks that must then exist. While for loc-miminality we assumed that $\sim$ distinguishes locations, now we additionally assume that if $(l_1,v_1) \sim (l_2,v_2)$, then there is no clock constraint $\gamma$ such that $v_1 \models \gamma$ and $v_2 \not\models \gamma$. So, equivalent valuations must be indistinguishable by clock constraints.
The coarsest PTAB that achieves this is the region equivalence (see~\cite{AlurCD93,KwiatkowskaNSS02}).

To encode invariance strength, we will use $n = 4K{+}1$ binary variables $\xi_{ij}^l(k)$ with $k \in \{-2K,\ldots,2K\}$ for every location $l$ and ordered pair of clocks $c_i,c_j$.
The intended meaning of $\xi_{ij}^l(k) = 1$ is that $\ceil*{k/2}$ is an upper bound for $v(i) - v(j)$ for all $v \in \Val(\inv(l))$.
We have introduced the granularity $1/2$ in order to distinguish between strict and non-strict inequalities.
For even $k$, which will represent $\leq$, the upper bound will always be met. 
Formally, we consider the following constraints, ranging over $l\in\Loc$ and $c_i, c_j\in \C$ with $j\neq 0$:
\begin{align}\label{eq:inv constr}
\begin{split}
&\zb \in \P_{\M}^{\min}(\lambda) \\
&\zb_{[(l,v)]}  \;\,\leq \quad
\begin{cases}
	\xi_{ij}^l(2a{-}1) & \text{ if } (M_{[(l,v)]})_{ij} = (a, <) \\
	\xi_{ij}^l(2a) & \text{ if } (M_{[(l,v)]})_{ij} = (a, \leq)
\end{cases} \\
  &\xi_{ij}^l(k) \;\: \leq \;\:\xi_{ij}^l(k{-}1)\quad {\footnotesize \text{ for all } k \in \{-2K{+}1,\ldots,2K\} }  
  \end{split}
  \tag{{\footnotesize \texttt{INV-CONSTR}}}
\end{align}
In the above, $M_{[(l,v)]}$ is the canonical DBM for the set of valuations $\{v' \in \Val(\C)\mid (l,v') \in [(l,v)]\}$ as defined in~\Cref{sub:zone closure}. The reason for excluding the constraints where $c_j$ is the zero clock is that for strong subsystems a stronger invariant cannot be achieved by strengthening the upper bound of a clock, cf. \Cref{def:subsystem}, (4). On top of these constraints we now define the MILP:
\begin{equation}\label{eq:inv milp}
\tag{\footnotesize \texttt{INV-MILP}} 
\min \,\sum_{l,i,j,k} \xi_{ij}^l(k) \; \text{ s.t } \; (\zb,\xi) \text{ satisfies (\ref{eq:inv constr})}. 
\end{equation}
\begin{restatable}{proposition}{invminimal}
  If $(\zb,\xi)$ is a solution of \emph{(\ref{eq:inv milp})}, then $\T_{\supp(\zb)}^s$ is an inv-minimal witness for $\prb_{\T}^{\min}(\lozenge \goal) \geq \lambda$.
\end{restatable}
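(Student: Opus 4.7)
The plan is to split the argument into two parts: showing $\T_{\supp(\zb)}^s$ is a witness, and showing it is inv-minimal among witnesses.

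For the witness part, feasibility of $(\zb,\xi)$ gives $\zb \in \P^{\min}_{\M}(\lambda)$, so \Cref{prop:witsubsysandfarkcert} applied with $R := \supp(\zb)$ yields directly that $\T_R^s$ is a witness for $\prb^{\min}_{\T}(\lozenge \goal) \geq \lambda$.

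For inv-minimality I would first translate the objective into a combinatorial quantity attached to the invariance DBMs. The monotonicity constraint in (\ref{eq:inv constr}) forces each sequence $(\xi_{ij}^l(k))_k$ to be a non-increasing $0/1$ step function, so $\sum_k \xi_{ij}^l(k) = k^*(l,i,j) + 2K + 1$, where $k^*(l,i,j)$ is the largest $k$ with $\xi_{ij}^l(k) = 1$ (with $-2K-1$ if none). The value $k^*(l,i,j)$ encodes a DBM entry, non-strict $a$ when $k^* = 2a$ and strict $a$ when $k^* = 2a-1$. The constraints $\zb_{[(l,v)]} \leq \xi_{ij}^l(\cdot)$ together with binarity force, for every $[(l,v)] \in \supp(\zb)$, the encoded entry $k^*(l,i,j)$ to be at least as loose as the $(i,j)$-entry of the canonical DBM $M_{[(l,v)]}$. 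Hence at an optimum, $k^*(l,i,j)$ coincides with the $(i,j)$-entry of the zone closure $M_l^w = \bigsqcup_{s \in R} M_{s_{|l}}$ from \Cref{def:indsubsys}. The remaining operations $\timeup$ and $\sqcap M_{\inv(l)}$ that separate $M_l^w$ from $\inv^s(l)$ modify only the $(i,0)$-entries, which are excluded from $\xi$; furthermore, condition~(4) of \Cref{def:subsystem} forbids any strong subsystem of $\T$ from tightening these $(i,0)$-entries below those of $\inv(l)$. Consequently the MILP objective is, up to an additive constant depending only on $\T$ and $K$, a strictly monotone functional of the family $(M_{\inv^s(l)})_{l \in \Loc(\T_R^s)}$ with respect to the entry-wise order on DBMs.

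Suppose now, for contradiction, that there is a witness $\T' <_{\inv} \T_R^s$. Because $\T'$ is strong, condition~(4) forces its $(i,0)$-entries to agree with those of $\T$, so strictness of the inequality either contracts $\Loc(\T')$ strictly or tightens some $(i,j)$-entry with $j \neq 0$ of some $M_{\inv_{\T'}(l)}$. Applying \Cref{prop:subsysandfarkascert} to $\T'$ provides a Farkas certificate $\zb' \in \P^{\min}_{\M}(\lambda)$ whose support lies in the set of quotient classes induced by $\T'$. Since the standing assumption on $\sim$ refines clock-constraint distinctions, every such class $[(l,v)]$ satisfies $[(l,v)] \subseteq \{l\}\times \Val(\inv_{\T'}(l))$, so the canonical DBMs $M_{[(l,v)]}$ are entry-wise dominated by $M_{\inv_{\T'}(l)}$. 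Setting $\xi'{}_{ij}^l(k) = 1$ exactly when $k$ does not exceed the $(i,j)$-entry of $M_{\inv_{\T'}(l)}$ (in the encoding above) then yields a binary monotone assignment such that $(\zb',\xi')$ satisfies (\ref{eq:inv constr}). By the strict monotonicity established earlier, its objective is strictly smaller than that of $(\zb,\xi)$, contradicting optimality. Hence $\T_R^s$ is inv-minimal.

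The main obstacle is the tight correspondence between the $\xi$-variables and the DBM entries of $\inv^s(l)$: one has to verify carefully that the zone closure $\bigsqcup$ is captured faithfully by the step functions $k^*(l,i,j)$, and that the subsequent operations $\timeup$ and $\sqcap M_{\inv(l)}$ act only on the $(i,0)$-entries, where condition~(4) of \Cref{def:subsystem} leaves no slack for any competing strong subsystem.
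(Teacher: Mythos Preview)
Your proposal is correct and follows essentially the same approach as the paper. Both establish the witness property via \Cref{prop:witsubsysandfarkcert}, then argue by contradiction: assume a witness $\T' <_{\inv} \T_{\supp(\zb)}^s$ exists, use \Cref{prop:subsysandfarkascert} to extract a Farkas certificate $\zb'$ supported on the states of $\Scal(\T')$, build a feasible binary vector (your $\xi'$, the paper's $\vb$) from the invariant DBMs of $\T'$, and show it yields a strictly smaller objective. Your framing via the cut-points $k^*(l,i,j)$ and ``strict monotonicity of the objective in the DBM entries'' is a compact way of expressing what the paper unfolds into an explicit case analysis on odd versus even $k$ and on the comparison signs $<$ and $\leq$; the underlying argument and the use of condition~(4) of \Cref{def:subsystem} to exclude the $(i,0)$-entries are identical.
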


The number of binary variables in (\ref{eq:inv milp}) is $n \cdot |\Loc| \cdot (|\C|^2-|\C|)$.
However, due to the constraints $\xi_{ij}^l(k) \leq \xi_{ij}^l(k{-}1)$, there are only $n$ possible configurations of the binary variables $\xi_{ij}^l(k)$ for every location $l$ and pair of clocks $c_i,c_j$.
Hence, the number of satisfying configurations of $\xi$ is bounded by $n^{|\Loc| \cdot (|\C|^2-|\C|)}$.
In a similar way as for~\Cref{prop:locminimalmilp} we get:
\begin{proposition}
  An inv-minimal witness for $\prb^{\min}_{\T}(\lozenge \goal) \geq \lambda$ can be computed in time $\mathcal{O}(2^{\log(n) \cdot |\Loc| \cdot |\C|^2} \cdot \poly(|\M|))$, if one exists.
\end{proposition}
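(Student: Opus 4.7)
The plan is to mirror the strategy of the loc-minimal case: reduce (\ref{eq:inv milp}) to a bounded enumeration over the binary part of the variables, solving an LP for each fixed binary pattern. By the preceding proposition, any optimal $(\zb,\xi)$ of (\ref{eq:inv milp}) yields an inv-minimal witness $\T_{\supp(\zb)}^s$, so it suffices to solve (\ref{eq:inv milp}) within the stated time bound.

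First I would enumerate the feasible configurations of $\xi$. For each triple $(l, c_i, c_j)$ with $j \neq 0$, the monotonicity constraints $\xi_{ij}^l(k) \leq \xi_{ij}^l(k{-}1)$ force $(\xi_{ij}^l(-2K), \ldots, \xi_{ij}^l(2K)) \in \{0,1\}^n$ to be a non-increasing sequence, of which there are at most $n$ (as already observed in the text preceding the statement). Therefore the total number of candidate assignments to $\xi$ is at most
\[ n^{|\Loc|\cdot(|\C|^2-|\C|)} \;\leq\; 2^{\log(n)\cdot|\Loc|\cdot|\C|^2}. \]

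Second, for each such fixed $\xi$, the remaining system (\ref{eq:inv constr}) becomes purely linear in $\zb$: the polytope $\P_{\M}^{\min}(\lambda)$ contributes $\poly(|\M|)$ rows and columns, and each constraint of the form $\zb_{[(l,v)]} \leq \xi_{ij}^l(\cdot)$ reduces to an upper bound $\zb_{[(l,v)]} \leq c$ for a known constant $c \in \{0,1\}$ (with the correct case of the branch selected from the known canonical DBM $M_{[(l,v)]}$). Feasibility of this LP can be decided in time $\poly(|\M|)$. Since the objective $\sum_{l,i,j,k} \xi_{ij}^l(k)$ depends only on $\xi$, no further optimization is necessary after fixing $\xi$: I simply compute the value of the objective for each feasible $\xi$ and retain the minimum, together with a corresponding witnessing $\zb$.

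Putting the two bounds together yields overall running time $\mathcal{O}(2^{\log(n)\cdot|\Loc|\cdot|\C|^2} \cdot \poly(|\M|))$, and extracting $\T_{\supp(\zb)}^s$ from the winning solution produces the desired inv-minimal witness (provided one exists, i.e., the enumeration produced at least one feasible $\xi$). The main thing to be careful about is that enumerating only monotone $\xi$-patterns is faithful to (\ref{eq:inv milp}): this is immediate because the monotonicity conditions are part of (\ref{eq:inv constr}) itself, so every feasible $\xi$ is covered by the enumeration. Beyond that, the argument is a direct adaptation of the loc-minimal construction, so I expect no substantial obstacle.
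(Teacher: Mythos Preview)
Your proposal is correct and takes essentially the same approach as the paper, which does not spell out a proof but simply notes that the argument proceeds ``in a similar way as for'' the loc-minimal case: enumerate the admissible binary configurations of $\xi$ (bounded by $n^{|\Loc|\cdot(|\C|^2-|\C|)}$ thanks to the monotonicity constraints) and solve a $\poly(|\M|)$-sized LP for each. One tiny quibble: a non-increasing $\{0,1\}$-sequence of length $n$ actually has $n{+}1$ (not $n$) configurations, but this off-by-one is already present in the paper's text and does not affect the asymptotic bound.
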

Again, $\prb^{\max}_{\T}$ can be treated similarly and the same idea of deriving heuristics that was outlined to loc-minimal witnesses can be used here.

\subsection{Computing vol-minimal witnesses}

As for inv-minimality, we will assume that $\sim$ distinguishes states that are distinguishable by clock constraints and that $K$ is an upper bound on all clocks.
To get a candidate set of possible vol-minimal witnesses, we use the following lemma:
\begin{restatable}{lemma}{invvolnotdisjoint}
	For $*\in \{\min, \max\}$, there is at least one witness for $\prb^{*}_{\T}(\lozenge \goal) \geq \lambda$ that is both inv- and vol-minimal.
\end{restatable}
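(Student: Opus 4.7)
The plan is to combine a volume-minimization step with an invariance-refinement step, exploiting the key containment $\leq_{\inv}\;\subseteq\;\leq_{\vol}$ from \Cref{lem:orders} together with the finiteness of the space of candidate witnesses guaranteed by the bounded-clocks assumption. I would first extract any vol-minimal witness $\T'$ and then descend from it along $\leq_{\inv}$ until no further refinement is possible; the resulting witness will be shown to be simultaneously inv- and vol-minimal.

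The first step is to argue that the set $\mathcal{W}$ of all witnesses of $\T$ for $\prb^{*}_{\T}(\lozenge\goal) \geq \lambda$ (weak for $*{=}\max$, strong for $*{=}\min$) is finite up to equivalence of invariants and guards: since all clocks are bounded by $K$, every admissible invariant or guard is represented by a DBM whose entries lie in the finite set $\{-2K,\ldots,2K\} \times \{<,\leq\} \cup \{(\infty,<)\}$, so only finitely many configurations are possible per location and per transition. Hence a vol-minimal witness $\T'\in\mathcal{W}$ exists, because $\vol$ is $\R_{\geq 0}$-valued and finite on $\mathcal{W}$ and thus attains its minimum.

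Fixing such a $\T'$, I would then consider the finite, nonempty subset
\[ \mathcal{W}_{\T'} \;=\; \{\T'' \in \mathcal{W} : \T'' \leq_{\inv} \T'\}. \]
Being finite, $\mathcal{W}_{\T'}$ contains at least one $\leq_{\inv}$-minimal element $\T^*$. It remains to verify that $\T^*$ is both vol- and inv-minimal within all of $\mathcal{W}$. Vol-minimality is immediate: since $\T^* \leq_{\inv} \T'$, \Cref{lem:orders} yields $\T^* \leq_{\vol} \T'$, and the vol-minimality of $\T'$ then forces $\vol(\T^*) = \vol(\T')$. For inv-minimality, suppose toward contradiction that some witness $\T''\in\mathcal{W}$ satisfies $\T'' <_{\inv} \T^*$; by transitivity of $\leq_{\inv}$ we obtain $\T'' \leq_{\inv} \T'$, so $\T'' \in \mathcal{W}_{\T'}$, contradicting the $\leq_{\inv}$-minimality of $\T^*$ inside $\mathcal{W}_{\T'}$.

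The main obstacle is justifying finiteness cleanly, in particular confirming that the bounded-clocks assumption genuinely restricts DBM entries appearing in any subsystem so that only finitely many equivalence classes of witnesses arise. Once this is settled, the remaining argument is elementary order-theoretic bookkeeping powered by the containment $\leq_{\inv}\,\subseteq\,\leq_{\vol}$ of \Cref{lem:orders}.
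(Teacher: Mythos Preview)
Your proposal is correct and follows essentially the same approach as the paper: both start from a vol-minimal witness, use the containment $\leq_{\inv}\subseteq\leq_{\vol}$ from \Cref{lem:orders} to argue that descending along $<_{\inv}$ preserves vol-minimality, and invoke well-foundedness of $<_{\inv}$ (equivalently, finiteness of invariant zones under the bounded-clocks assumption) to terminate. The only cosmetic difference is that the paper phrases the argument by contradiction via an infinite $<_{\inv}$-descending chain, whereas you construct the desired witness directly by taking a $\leq_{\inv}$-minimal element of the lower set $\mathcal{W}_{\T'}$; your version is arguably cleaner and makes the role of finiteness more explicit.
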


Hence, to find a vol-minimal witness it suffices to compute (1) \emph{all} inv-minimal witnesses and (2) compare their volumes. Using the results of the previous section, for (1) it is enough to solve the \emph{multi-objective} mixed integer linear program
\begin{equation}
\label{eq:inv mo}
\tag{\footnotesize \texttt{INV-MO}} \text{ for all } \; {\substack{l \in \Loc \\ c_i,c_j \in \C \\ j \neq 0}}: \; \min \sum_k \xi_{ij}^l(k) \;\text { s.t. } \;(\zb,\xi) \text{ satisfies (\ref{eq:inv constr})} 
\end{equation}

A solution of this program is a vector that satisfies (\ref{eq:inv constr}) and such that all other vectors satisfying (\ref{eq:inv constr}) evaluate worse on at least one objective function. This implies that the set of solutions of (\ref{eq:inv mo}) encodes precisely the set of inv-minimal witnesses for $\prb^{\min}_\T(\lozenge\goal)\geq \lambda$. 
Techniques for solving such programs efficiently are presented in~\cite{PetterssonO2019,OzpeynirciK2010}. 

Let $\text{\textsc{vol}}(|\C|^2, \log(K))$ be the time complexity of computing the volume of a DBM over clocks $\C$ with entries bounded from above by $K$. This factor is exponential in general, but polynomial if the number of clocks is fixed \cite{GritzmannK1994}. Then we get the following time complexity for computing vol-minimal witnesses:

\begin{proposition}
	A vol-minimal witness for $\prb^{\min}_{\T}(\lozenge \goal) \geq \lambda$ can be computed in time $\mathcal{O}(2^{\log(n) \cdot |\Loc| \cdot |\C|^2} \cdot \text{\textsc{vol}}(|\C|^2,\log(K)) \cdot \poly(|\M|))$, if one exists.
\end{proposition}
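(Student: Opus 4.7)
The plan is to enumerate, for every admissible configuration of the binary variables $\xi$ from (\ref{eq:inv constr}), the induced strong subsystem of $\T$, compute its volume, and return one of minimum total volume. Correctness rests on the preceding lemma, which ensures that at least one vol-minimal witness is also inv-minimal, so that it suffices to search among subsystems whose invariants are tight enough to be captured by a feasible $\xi$.

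I would proceed as follows. First, enumerate all $\xi$-configurations: since $\xi_{ij}^l(k) \leq \xi_{ij}^l(k-1)$, each triple $(l, c_i, c_j)$ with $j \neq 0$ admits $\mathcal{O}(n)$ monotone configurations, yielding at most $n^{|\Loc| \cdot (|\C|^2 - |\C|)} \leq 2^{\log(n) \cdot |\Loc| \cdot |\C|^2}$ candidates in total. Second, for each $\xi$, solve the residual LP on $\zb$ obtained from (\ref{eq:inv constr}) to test feasibility; this runs in $\poly(|\M|)$ time and, when feasible, produces a Farkas certificate $\zb$ whose associated strong subsystem $\T_{\supp(\zb)}^s$ is a witness by \Cref{prop:witsubsysandfarkcert}. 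Third, since $\xi$ encodes the per-location invariant DBM over $\C$ with integer bounds in $[-K, K]$, the volume $\vol(\Val(\inv(l)))$ of each location's invariant is computable in time $\text{\textsc{vol}}(|\C|^2, \log(K))$; summing over the $|\Loc|$ locations yields the total volume of the candidate subsystem. Fourth, track the feasible configuration attaining the smallest volume encountered, and report that no witness exists if none is feasible.

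Combining these costs yields $\mathcal{O}(2^{\log(n) \cdot |\Loc| \cdot |\C|^2} \cdot \text{\textsc{vol}}(|\C|^2, \log(K)) \cdot \poly(|\M|))$ overall, as required. The main subtlety lies in justifying that this search is complete: any vol-minimal witness can be chosen inv-minimal by the cited lemma, and the invariants of an inv-minimal strong subsystem are uniquely captured by some feasible $\xi$ through the $M_l^s$-construction of \Cref{def:indsubsys}; hence at least one enumerated configuration realizes the true minimum. Restricting to Pareto-optimal solutions of (\ref{eq:inv mo}) is therefore not algorithmically required during the main loop, although the same asymptotic bound could alternatively be obtained via the multi-objective techniques of \cite{PetterssonO2019, OzpeynirciK2010}.
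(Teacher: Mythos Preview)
Your proposal is correct and follows essentially the same strategy as the paper: reduce to enumerating (at most $n^{|\Loc|\cdot(|\C|^2-|\C|)}$) candidate invariants via the $\xi$-configurations of (\ref{eq:inv constr}), check feasibility by an LP in $\poly(|\M|)$, compute the volume of each candidate's invariant DBMs, and return the minimum. The paper phrases step one as ``solve (\ref{eq:inv mo}) to obtain all inv-minimal witnesses'' rather than brute-force enumeration, but the complexity bound is derived from the same configuration count, and you explicitly note that the multi-objective restriction to Pareto-optimal $\xi$ is not needed for the asymptotic bound.

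One detail you should make explicit: a configuration $\xi$ only fixes the DBM entries $(i,j)$ with $j \neq 0$; for the absolute upper bounds $(i,0)$ you must inherit the entries of $M_{\inv_{\T}(l)}$. This is precisely what condition~(4) of strong subsystems forces, so for a strong witness $\T'$ the volume read off from the $\xi$ encoding $M_{\inv_{\T'}(l)}$ coincides with $\vol(\T')$, while for looser feasible $\xi$ it is only an upper bound on $\vol(\T_{\supp(\zb)}^s)$. Both facts together give completeness of your search, exactly as you argue in the final paragraph.
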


\subsection{Hardness of deciding $\leq_{\vol}$}

Computing the volume of a polytope generally requires exponential time in the number of dimensions. However, as the invariants of PTA have a restricted form involving only linear inequalities with at most two clocks, one might hope that computing their volume is easier. We now show that this is not the case (under the standard complexity theoretic assumptions).

We recall that \hashP{} is the counting complexity class that includes the functions that can be expressed as the number of accepting runs of a polynomial time, non-deterministic Turing machine (NTM) for a given input.
Hardness for \hashP{} is typically defined using polynomial-time Turing reductions.
The analogous decision class is \PP{}, where $L \in $ \PP{} if there is a polynomial time NTM such that $x \in L$ if and only if the majority of runs of the NTM on $x$ is accepting (see~\cite[Chapter 9]{DBLP:books/daglib/0023084} for an introduction). Via a reduction from \hashP-hardness results on polytope volume computation, we obtain:

\begin{restatable}{proposition}{dbmhashp}
  \label{prop:dbmvolhashp}
  Computing $\vol(\Val(M))$ for a DBM M is \emph{\hashP}-hard.
\end{restatable}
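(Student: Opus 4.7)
The plan is to give a polynomial-time Turing reduction from the problem of counting linear extensions of a finite poset, which is \hashP-hard by Brightwell and Winkler (1991), to computing the volume of a DBM. The bridge is Stanley's \emph{order polytope} construction, for which the volume has a clean combinatorial interpretation.

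Given a finite poset $P = (\{x_1,\dots,x_n\}, \leq_P)$, I would first build a DBM $M$ over the clocks $\{c_0,c_1,\dots,c_n\}$ whose associated zone is exactly the order polytope
\[ O(P) = \{v \in [0,1]^n \mid v(i) \leq v(j) \text{ whenever } x_i \leq_P x_j\}.\]
The key observation is that $O(P)$ is cut out by constraints of two kinds: (i) the unit-box constraints $0 \leq v(i) \leq 1$, which translate to $c_i - c_0 \leq 1$ and $c_0 - c_i \leq 0$ (remember that $c_0$ represents $0$), and (ii) the comparability constraints $v(i) \leq v(j)$, which translate to $c_i - c_j \leq 0$. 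Both kinds fit directly into a DBM entry, so setting $M_{i0} = (1,\leq)$, $M_{0i} = (0,\leq)$, $M_{ij} = (0,\leq)$ for $x_i \leq_P x_j$, and all remaining entries to $(\infty, <)$ produces in polynomial time a DBM with $\Val(M) = O(P)$.

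Next, I would invoke Stanley's theorem which states that $\vol(O(P)) = e(P)/n!$, where $e(P)$ is the number of linear extensions of $P$. Consequently, any algorithm that returns $\vol(\Val(M))$ can be post-processed in polynomial time to $e(P) = n! \cdot \vol(\Val(M))$. Combined with the \hashP-hardness of computing $e(P)$, this exhibits the desired polynomial-time Turing reduction and establishes \hashP-hardness of DBM volume computation.

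The main (and really only) obstacle is verifying that the order polytope is indeed captured by a DBM in the sense of this paper; everything else — the reduction, Stanley's identity, and the \hashP-hardness of $e(P)$ — is either immediate or cited from the literature. Since each defining inequality of $O(P)$ involves at most two variables (one of them possibly being $0$), this verification is routine.
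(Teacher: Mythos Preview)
Your proposal is correct and essentially the same argument as the paper's: both encode the polytope $\{v\in[0,1]^n \mid v(i)\leq v(j)\text{ for certain pairs }(i,j)\}$ directly as a DBM and invoke the known \hashP-hardness of its volume. The only difference is bibliographic: the paper cites Gritzmann--Klee for the \hashP-hardness of this polytope class as a black box, whereas you unpack one layer further to Brightwell--Winkler plus Stanley's order-polytope volume formula; the DBM construction itself is identical (your ``remaining entries $(\infty,<)$'' versus the paper's $(1,\leq)$ are equivalent once the box constraints are in place).
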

Using this proposition we can show that deciding the $\leq_{\vol}$ relation for two PTA subsystems is substantially harder than for $\leq_{\loc}$ and $\leq_{\inv}$.
\begin{restatable}{theorem}{leqvolpp}\label{prop:PP}
   Given two subsystems $\T_1,\T_2$ in a PTA $\T$, deciding whether $\T_1 \leq_{\vol} \T_2$ holds is \emph{\PP}-hard under polynomial-time Turing reductions.
\end{restatable}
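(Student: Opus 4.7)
The plan is to leverage \Cref{prop:dbmvolhashp}, which shows that computing $\vol(\Val(M))$ for a DBM $M$ is \hashP-hard under polynomial-time Turing reductions. A standard structural-complexity fact is that a problem is \hashP-hard under polynomial-time Turing reductions iff it is \PP-hard under polynomial-time Turing reductions, since each oracle can be simulated by the other in polynomial time (via binary search on the counting side and threshold-counting on the \PP\ side). It therefore suffices to exhibit a polynomial-time Turing reduction from DBM volume computation to the decision problem ``does $\T_1 \leq_{\vol} \T_2$ hold?''.

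Given a DBM $M$ over clocks $\C$, the volume $V := \vol(\Val(M))$ is a rational whose denominator divides $(|\C|-1)!$; hence after scaling every entry of $M$ by the integer $(|\C|-1)!$ one may assume $V$ is a nonnegative integer bounded by $K^{|\C|-1}$ for some polynomially-bounded $K$. I plan to determine $V$ bit by bit via polynomially many comparison queries ``is $V \leq t$?'' for integer thresholds $t$. To realize each such query as a $\leq_\vol$ instance, I build a single ambient PTA $\T$ whose locations are: a point-invariant initial location $l_0$, absorbing $\goal$ and $\fail$, one distinguished location $l_M$ with invariant $M$, and a family of auxiliary ``box'' locations $l_B$ whose invariants are products of integer intervals, expressible as DBMs with volume equal to the product of their side-lengths. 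Every non-absorbing location has a single unguarded transition going with probability one to $\fail$, so every subset $S$ of locations containing $\{l_0,\goal,\fail\}$ yields a legitimate strong subsystem $\T|_S$ in the sense of \Cref{def:subsystem} (the $l_0$-transitions aimed at locations outside $S$ are simply redirected to $\fail$, as permitted by clause~(3)).

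The subsystems $\T_1 := \T|_{\{l_0,l_M,\goal,\fail\}}$ and $\T_2 := \T|_{\{l_0,\goal,\fail\} \cup B}$ (for a chosen collection $B$ of box locations) share the same contributions from $l_0,\goal,\fail$, so $\T_1 \leq_\vol \T_2$ holds iff $V \leq \sum_{l_B \in B} \vol(l_B)$. Using boxes of shape $[0,2^i]\times[0,1]^{|\C|-2}$ with volume $2^i$, any integer threshold $t$ in the required range can be written as such a sum via binary decomposition using $\mathcal{O}(\log t)$ boxes of polynomial DBM-size. The main technical obstacle will be to verify that (i) $\T$ together with every required $\T_1, \T_2$ satisfies \Cref{def:subsystem} throughout the reduction -- which the trivial-transition skeleton guarantees -- and (ii) the scaling and binary-decomposition steps keep all thresholds within polynomial bit-length. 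Polynomially many $\leq_\vol$ queries then recover $V$ exactly, completing the Turing reduction and establishing \PP-hardness of $\leq_\vol$.
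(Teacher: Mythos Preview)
Your proposal is correct and follows the same overall strategy as the paper: reduce from the \hashP-hardness of DBM volume computation (\Cref{prop:dbmvolhashp}), exploit that DBM volumes have denominator dividing $n!$, and encode a rational threshold as the volume of an explicitly constructed box-shaped invariant so that a single $\leq_{\vol}$ query becomes a volume-threshold comparison. The packaging differs in two respects. First, the paper takes a shortcut by directly asserting that the \emph{threshold} problem $\vol(\Val(M)) \geq k$ is \PP-hard under Turing reductions and then gives a single many-one reduction from that problem; you instead invoke the \hashP/\PP{} equivalence under Turing reductions explicitly and recover the exact volume via binary search, spending polynomially many $\leq_{\vol}$ queries. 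Second, the paper's PTA is minimal: two locations, one carrying a copy of $M$ scaled by $n!$ and the other a single elongated box of volume exactly $k \cdot n!^{n}$; you use multiple box locations via binary decomposition and add an explicit initial location so as to verify the subsystem axioms more carefully. Your extra care buys a cleaner fit with \Cref{def:subsystem} (the paper is somewhat informal about $l_0,\goal,\fail$ in its two-location construction), while the paper's version is shorter. Both routes are valid.
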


Hence, in particular, there is no polynomial time algorithm to decide $\T_1 \leq_{\vol} \T_2$, unless \texttt{P} $=$ \texttt{NP}.
This should be contrasted with the relations $\leq_{\loc}$ and $\leq_{\inv}$.
To decide $\T_1 \leq_{\loc} \T_2$ one just counts the locations, and for $\T_1 \leq_{\inv} \T_2$ one checks the inclusion of locations and inspects the entries of the canonical DBMs associated to the invariants.
In fact, these observations for $\leq_{\loc}$ and $\leq_{\inv}$ are the main ingredients for the MILP formulations (\ref{eq:loc milp}) and (\ref{eq:inv milp}).

\section{Conclusion} 

This paper introduces witnessing subsystems for PTAs.
These subsystems give insight into which (hopefully small) part of the system is sufficient for a certain property to hold.
We have studied three notions of minimality for witnessing subsystems: location number, invariant strength, and invariant volume. For all three we derive single-exponential algorithms to compute a minimal witness. Our approaches are based on Farkas certificates for quotient MDPs under probabilistic time-abstracting bisimulations.
The time complexities are relative to the sizes of these quotients, so coarse bisimulations can substantially benefit the approach. 
While comparing two subsystems with respect to their location number or invariance strength is relatively easy, comparing the volume is shown to be \PP-hard. 
This result notably extends also to non-probabilistic timed automata.

An open question is how to extend the scope of witnessing subsystems to probabilistic hybrid automata (PHA).
It is conceivable that our approach extends naturally to \emph{rectangular} PHAs, as they admit finite bisimulation quotients \cite{Sproston11}.
Exploring how PTA subsystems can be used in timed versions of refinement and synthesis algorithms \cite{HermannsWZ2008,CeskaHJK19} is another interesting line of future work.

\bibliographystyle{splncs04}
\bibliography{lit}

\begin{thebibliography}{10}
\providecommand{\url}[1]{\texttt{#1}}
\providecommand{\urlprefix}{URL }
\providecommand{\doi}[1]{https://doi.org/#1}

\bibitem{AlurCD93}
Alur, R., Courcoubetis, C., Dill, D.: Model-checking in dense real-time.
  Information and Computation  \textbf{104}(1),  2 -- 34 (1993).
  \doi{10.1006/inco.1993.1024}

\bibitem{AlurD1994}
Alur, R., Dill, D.L.: A theory of timed automata. Theoretical Computer Science
  \textbf{126}(2),  183--235 (Apr 1994). \doi{10.1016/0304-3975(94)90010-8}

\bibitem{AndresDR08}
Andr{\'{e}}s, M.E., D'Argenio, P.R., van Rossum, P.: Significant diagnostic
  counterexamples in probabilistic model checking. In: 4th International Haifa
  Verification Conference, {HVC} (2008). \doi{10.1007/978-3-642-01702-5\_15}

\bibitem{DBLP:books/daglib/0023084}
Arora, S., Barak, B.: Computational Complexity - {A} Modern Approach. Cambridge
  University Press (2009)

\bibitem{BaierK2008}
Baier, C., Katoen, J.P.: Principles of Model Checking (Representation and Mind
  Series). The MIT Press, Cambridge, MA (2008)

\bibitem{Beauquier03}
Beauquier, D.: On probabilistic timed automata. Theor. Comput. Sci.
  \textbf{292}(1),  65–84 (2003). \doi{10.1016/S0304-3975(01)00215-8}

\bibitem{Behrmann_etal06}
Behrmann, G., David, A., Larsen, K.G., H{\aa}kansson, J., Petterson, P., Yi,
  W., Hendriks, M.: Uppaal 4.0. In: Quantitative Evaluation of Systems. QEST
  (2006). \doi{10.1109/QEST.2006.59}

\bibitem{BengtssonW04}
Bengtsson, J., Yi, W.: Timed Automata: Semantics, Algorithms and Tools, pp.
  87--124. Springer Berlin Heidelberg (2004).
  \doi{10.1007/978-3-540-27755-2\_3}

\bibitem{BerendsenJK06}
Berendsen, J., Jansen, D.N., Katoen, J.: Probably on time and within budget: On
  reachability in priced probabilistic timed automata. In: Quantitative
  Evaluation of Systems {QEST} (2006). \doi{10.1109/QEST.2006.43}

\bibitem{CeskaHJK19}
Ceska, M., Hensel, C., Junges, S., Katoen, J.: Counterexample-driven synthesis
  for probabilistic program sketches. In: Formal Methods - The Next 30 Years -
  Third World Congress, {FM} 2019 (2019). \doi{10.1007/978-3-030-30942-8\_8}

\bibitem{ChenHK08}
{Chen}, T., {Han}, T., {Katoen}, J.: Time-abstracting bisimulation for
  probabilistic timed automata. In: International Symposium on Theoretical
  Aspects of Software Engineering. pp. 177--184 (2008).
  \doi{10.1109/TASE.2008.29}

\bibitem{DierksKL2007}
Dierks, H., Kupferschmid, S., Larsen, K.G.: Automatic {{Abstraction
  Refinement}} for {{Timed Automata}}. In: Formal {{Modeling}} and {{Analysis}}
  of {{Timed Systems}}. {Springer} (2007). \doi{10.1007/978-3-540-75454-1\_10}

\bibitem{Dill1990}
Dill, D.L.: Timing assumptions and verification of finite-state concurrent
  systems. In: Automatic {{Verification Methods}} for {{Finite State Systems}}.
  Lecture {{Notes}} in {{Computer Science}}, {Springer} (1990).
  \doi{10.1007/3-540-52148-8\_17}

\bibitem{FunkeJB20}
Funke, F., Jantsch, S., Baier, C.: Farkas certificates and minimal witnesses
  for probabilistic reachability constraints. In: Tools and Algorithms for the
  Construction and Analysis of Systems (TACAS). Springer (2020).
  \doi{10.1007/978-3-030-45190-5\_18}

\bibitem{GritzmannK1994}
Gritzmann, P., Klee, V.: On the {{Complexity}} of {{Some Basic Problems}} in
  {{Computational Convexity}}. In: Polytopes: {{Abstract}}, {{Convex}} and
  {{Computational}}. {Springer Netherlands} (1994).
  \doi{10.1007/978-94-011-0924-6\_17}

\bibitem{HermannsWZ2008}
Hermanns, H., Wachter, B., Zhang, L.: Probabilistic {{CEGAR}}. In: Computer
  {{Aided Verification}}. pp. 162--175. Lecture {{Notes}} in {{Computer
  Science}}, {Springer}, {Berlin, Heidelberg} (2008).
  \doi{10.1007/978-3-540-70545-1\_16}

\bibitem{JansenAKWKB11}
Jansen, N., {\'{A}}brah{\'{a}}m, E., Katelaan, J., Wimmer, R., Katoen, J.,
  Becker, B.: Hierarchical counterexamples for discrete-time {M}arkov chains.
  In: Automated Technology for Verification and Analysis, 9th International
  Symposium, {ATVA} (2011). \doi{10.1007/978-3-642-24372-1\_33}

\bibitem{JansenWAZKBS14}
Jansen, N., Wimmer, R., {\'{A}}brah{\'{a}}m, E., Zajzon, B., Katoen, J.,
  Becker, B., Schuster, J.: Symbolic counterexample generation for large
  discrete-time {M}arkov chains. Science of Computer Programming  \textbf{91},
  90--114 (2014). \doi{10.1016/j.scico.2014.02.001}

\bibitem{JurdzinskiKNT09}
Jurdzi{\'{n}}ski, M., Kwiatkowska, M., Norman, G., Trivedi, A.:
  Concavely-priced probabilistic timed automata. In: Concurrency Theory
  (CONCUR). Springer (2009). \doi{10.1007/978-3-642-04081-8\_28}

\bibitem{JurdzinskiLS07}
Jurdzi{\'{n}}ski, M., Laroussinie, F., Sproston, J.: Model checking
  probabilistic timed automata with one or two clocks. In: Tools and Algorithms
  for the Constr. and Analysis of Systems. Springer (2007).
  \doi{10.1007/978-3-540-71209-1\_15}

\bibitem{KolblLW2019}
K{\"o}lbl, M., Leue, S., Wies, T.: Clock {{Bound Repair}} for {{Timed
  Systems}}. In: Computer {{Aided Verification}}. pp. 79--96. Lecture {{Notes}}
  in {{Computer Science}}, {Springer Intern. Publishing}, {Cham} (2019).
  \doi{10.1007/978-3-030-25540-4\_5}

\bibitem{KwiatkowskaNSS02}
Kwiatkowska, M., Norman, G., Segala, R., Sproston, J.: Automatic verification
  of real-time systems with discrete probability distributions. Theoretical
  Computer Science  \textbf{282}(1),  101 -- 150 (2002).
  \doi{10.1016/S0304-3975(01)00046-9}

\bibitem{KwiatkowskaNS03}
Kwiatkowska, M., Norman, G., Sproston, J.: {Probabilistic Model Checking of
  Deadline Properties in the IEEE 1394 FireWire Root Contention Protocol}.
  Formal Aspects of Comput.  \textbf{14}(3),  295–318 (2003).
  \doi{10.1007/s001650300007}

\bibitem{KwiatkowskaNPS07}
Kwiatkowska, M.Z., Norman, G., Parker, D., Sproston, J.: Performance analysis
  of probabilistic timed automata using digital clocks. Form Method Syst Des
  \textbf{29},  33–78 (2006). \doi{10.1007/s10703-006-0005-2}

\bibitem{KwiatkowskaNSW07}
Kwiatkowska, M.Z., Norman, G., Sproston, J., Wang, F.: Symbolic model checking
  for probabilistic timed automata. Information and Computation
  \textbf{205}(7),  1027--1077 (2007). \doi{10.1016/j.ic.2007.01.004}

\bibitem{LaroussinieS07}
Laroussinie, F., Sproston, J.: State explosion in almost-sure probabilistic
  reachability. Inf. Process. Lett.  \textbf{102}(6),  236–241 (2007).
  \doi{10.1016/j.ipl.2007.01.003}

\bibitem{NormanPS13}
Norman, G., Parker, D., Sproston, J.: {Model checking for probabilistic timed
  automata}. Formal Methods in System Design  \textbf{43},  164–190 (2013).
  \doi{10.1007/s10703-012-0177-x}

\bibitem{OzpeynirciK2010}
{\"O}zpeynirci, {\"O}., K{\"o}ksalan, M.: An exact algorithm for finding
  extreme supported nondominated points of multiobjective mixed integer
  programs. Management Science  \textbf{56}(12),  2302--2315 (2010).
  \doi{10.1287/mnsc.1100.1248}

\bibitem{PetterssonO2019}
Pettersson, W., Ozlen, M.: Multi-{{Objective Mixed Integer Programming}}: {{An
  Objective Space Algorithm}}. AIP Conference Proceedings  \textbf{2070}(1),
  020039 (2019). \doi{10.1063/1.5090006}

\bibitem{Sproston11}
Sproston, J.: {Discrete-Time Verification and Control for Probabilistic
  Rectangular Hybrid Automata}. In: Eigth International Conference on
  Quantitative Evaluation of Systems, QEST 2011. pp. 79 -- 88 (2011).
  \doi{10.1109/QEST.2011.18}

\bibitem{Tripakis98}
Tripakis, S.: {L'analyse formelle des syst{\`e}mes temporis{\`e}s en pratique}.
  {Ph.D. thesis, Universit{\'e} Joseph Fourier}  (1998)

\bibitem{WimmerJAK15}
Wimmer, R., Jansen, N., {\'{A}}brah{\'a}m, Erika~Katoen, J.P.: {High-level
  Counterexamples for Probabilistic Automata}. {Logical Methods in Computer
  Science}  \textbf{11}(1) (2015). \doi{10.2168/LMCS-11(1:15)2015}

\bibitem{WimmerJAKB14}
Wimmer, R., Jansen, N., {\'{A}}brah{\'{a}}m, E., Katoen, J., Becker, B.:
  Minimal counterexamples for linear-time probabilistic verification.
  Theoretical Computer Science  \textbf{549},  61--100 (2014).
  \doi{10.1016/j.tcs.2014.06.020}

\bibitem{WimmerM20}
Wimmer, S., Mutius, J.v.: Verified certification of reachability checking for
  timed automata. In: Tools and Algorithms for the Construction and Analysis of
  Systems (TACAS). Springer (2020). \doi{10.1007/978-3-030-45190-5\_24}

\end{thebibliography}

\newpage
\appendix

\section{Supplementary material for \Cref{sec:prelims}}\label{app:prelims}

\subsection{Lemmata on DBMs}\label{sub:dbms}

In order to prove the basic properties of canonical DBMs, we need some input on the algebraic structure of DBMs. Denote by $\preceq$ the lexicographic order on $(\mathbb{Z} \cup \{\infty,- \infty\}) \times \{<,\leq\}$ in which $<$ is strictly less than $\leq$. Then $\preceq$ extends to a partial order on DBMs by entrywise comparison, and all subsequent min and max operations refer to this partial order. We define the operations $+,\sqcap,*$ on $(\mathbb{Z} \cup \{\infty,- \infty\}) \times \{<,\leq\}$ as follows~\cite{Dill1990}:
\begin{align*}
(a,\triangleleft_1) + (b,\triangleleft_2) &= (a+b,\min \{\triangleleft_1,\triangleleft_2\}) \\
(a,\triangleleft_1) \sqcap (b, \triangleleft_2) &= \min \{ (a,\triangleleft_1),(b, \triangleleft_2) \} \\
(a,\triangleleft)^* &= \begin{cases}
(0,\leq) & \text{if } (0,\leq) \preceq (a,\triangleleft) \\
(- \infty, <) & \text{otherwise}
\end{cases}
\end{align*}
It is then shown that $(\mathbb{Z} \cup \{\infty,- \infty\}) \times \{<,\leq\}$ with $\sqcap$ as addition and $+$ as multiplication together with the constants $\nb = (\infty,<)$ and $\eb = (0,\leq)$ constitute a regular algebra. Moreover, the set of DBMs $((\mathbb{Z} \cup \{\infty,- \infty\}) \times \{<,\leq\})^{\C \times\C}$ forms a regular algebra where $\sqcap$ is matrix addition and $+$ is matrix multiplication over the scalar operations $\sqcap$ and $+$. 
Then, $M^*$ is defined as $M^0 \sqcap M^1 \sqcap \ldots$, which implies $M^*\preceq M$. Two DBMs $M, N$ with $\Val(M) = \Val(N) \neq \varnothing$ satisfy $M^* = N^*$ (see \cite[Theorem 2]{Dill1990}). Hence, $M^*$ represents the strongest clock constraint with this valuation set and can be seen as the canonical representative DBM for $\Val(M)$. It is a straightforward argument from the projection property of DBMs (see \cite[Lemma 4]{Dill1990}) that for two DBMs $M, N$ with non-empty valuation sets we have $\Val(M) \subseteq \Val(N)$ if and only if $M^* \preceq N^*$.

The following lemma states some basic properties of the canonical DBM, as defined in~\Cref{sub:zone closure}.

\begin{lemma}[Basic properties of the canonical DBM]\label{lem:canonicalDBM}
	Let $R \subseteq \Val(\C)$ be any subset. Then the following hold:
	\begin{enumerate}
		\item $R \subseteq \Val(M_R)$;
		\item $M_R^* = M_R$;
		\item $\Val(M_R)$ is the smallest zone of $\Val(\C)$ that contains $R$;
		\item For any DBM $M$ with $M=M^*$ and $\Val(M)\neq \emptyset$, we have $M = M_{\Val(M)}$.
	\end{enumerate}
\end{lemma}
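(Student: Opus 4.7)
My plan is to prove the four items in order, each by an entrywise analysis of the DBM and a careful case distinction between strict ($<$) and non-strict ($\leq$) comparators. Item (1) is by unfolding: for $v \in R$ and an index pair $(i,j)$, write $(M_R)_{ij} = (a, \triangleleft)$ where $a = \sup_{v' \in R}(v'(i) - v'(j))$. If $\triangleleft$ is $\leq$, then $v(i) - v(j) \leq a$ by the definition of supremum; if $\triangleleft$ is $<$, then by definition the supremum is not attained in $R$, so $v(i) - v(j) < a$ for every $v \in R$. Hence $v \models M_R$, and so $R \subseteq \Val(M_R)$.

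For (2), I would rely on the standard characterization that $M = M^*$ iff $M_{ij} \preceq M_{ik} + M_{kj}$ holds for all triples $i,j,k$ (this follows from the regular-algebra structure recalled in \Cref{sub:dbms}). Writing $a$, $b$, $c$ for the scalar parts of $(M_R)_{ik}, (M_R)_{kj}, (M_R)_{ij}$, the pointwise identity $v(i) - v(j) = (v(i) - v(k)) + (v(k) - v(j))$ together with taking the supremum over $R$ yields $c \leq a + b$. Only the boundary case $c = a + b$ requires work: if at least one of $(M_R)_{ik}, (M_R)_{kj}$ has strict comparator, I would argue by contradiction that $c$ cannot be attained in $R$ either, since a single $v^* \in R$ realizing $c = a+b$ would force both summand suprema to be attained simultaneously at $v^*$. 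Hence $\triangleleft_{ij}$ is also strict, matching the strictness of the sum.

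For (3), items (1) and (2) already make $\Val(M_R)$ a zone containing $R$. For minimality, I would take any canonical DBM $N = N^*$ with $R \subseteq \Val(N)$ and prove $(M_R)_{ij} \preceq N_{ij}$ entrywise: every $v \in R$ satisfies the constraint encoded by $N_{ij}$, which bounds the supremum defining $(M_R)_{ij}$, and a strictness check analogous to (2) handles the case where the scalar bounds coincide (if $N_{ij}$ is strict, the bound cannot be attained by any $v \in R$, so $(M_R)_{ij}$ is strict too). The equivalence $\Val(M) \subseteq \Val(N) \iff M^* \preceq N^*$ recalled in \Cref{sub:dbms} then yields $\Val(M_R) \subseteq \Val(N)$. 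For (4), I would invoke Dill's projection property \cite[Lemma 4]{Dill1990}: in a canonical DBM $M$, the bound $M_{ij}$ is realized exactly (resp.\ arbitrarily closely approximated) by some valuation in $\Val(M)$ when $M_{ij}$ is non-strict (resp.\ strict). This immediately gives $(M_{\Val(M)})_{ij} = M_{ij}$ in both scalar and strictness components.

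The main obstacle is the uniform bookkeeping of strict versus non-strict comparators throughout (2) and (3), including the degenerate cases where an entry equals $\pm\infty$ (in which the corresponding inequalities are vacuous); the underlying arithmetic itself is routine.
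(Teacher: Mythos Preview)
Your proposal is correct and covers all four items soundly; the main divergence from the paper is in item~(2). The paper argues by contradiction: assuming $M_R^* \prec M_R$, it uses $\Val(M_R^*) = \Val(M_R)$ (Dill's Theorem~2) together with the definition of $M_R$ as a supremum to exhibit a point of $R$ violating the tightened constraint, with a separate ad~hoc treatment of the diagonal entry $(0,0)$. Your route via the triangle-inequality characterisation $M_{ij} \preceq M_{ik} + M_{kj}$ is more direct and avoids both the appeal to $\Val(M_R^*) = \Val(M_R)$ and the diagonal case split; one small point to make explicit is that the characterisation also requires $M_{ii} \preceq (0,\leq)$ (the $M^0$ term in $M^* = \bigsqcap_n M^n$), which holds automatically for $M_R$. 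For~(3) the two arguments are close in spirit, though the paper detours through the special case where $R$ is already a zone before treating general $R$, while you compare $M_R$ to an arbitrary canonical $N$ in one step. For~(4) you invoke the projection property of canonical DBMs directly to match entries, whereas the paper combines part~(3) with Dill's uniqueness of canonical forms ($\Val(M) = \Val(N) \neq \emptyset \Rightarrow M^* = N^*$); both are short and rely on essentially the same underlying fact.
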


\begin{proof}
	(1) It is clear from the definition that all points in $R$ satisfy the constraints induced by $M_R$, so we have $R \subseteq \Val(M_R)$. 
	
	(2) Suppose for contradiction that $M_R \neq M_R^*$. Since $M^*\preceq M$ holds for any DBM, we must have a strict inequality $M_R^* \prec M_R$. Hence there exists a pair of indices $i,j$ such that $(M_R^*)_{ij} \prec (M_R)_{ij}$. Suppose that $i=j = 0$ and that no other pair of indices with strict inequality exists, i.e., $(M_R)_{kl} = (M_R^*)_{kl}$ whenever $k\neq0$ or $l\neq0$. Note that 
	\[(M_R^2)_{00} = \bigsqcap_{i=0}^n (M_R)_{0i}+(M_R)_{i0} =  \min_i (M_R)_{0i}+(M_R)_{i0} = (0,\leq) = (M_R)_{00}\]
	which would imply $M_R\preceq M_R^2$ since $(M_R)_{kl} = (M_R^*)_{kl}\preceq (M_R^2)_{kl}$ whenever $k\neq0$ or $l\neq0$. However, by induction we would also have $M_R\preceq M_R^n$ for all $n\geq 2$, so $M_R\preceq M_R^*$ and therefore $M_R = M_R^*$, which is a contradiction. In summary, $M_R$ and $M_R^*$ cannot only differ on $i=j=0$.
	
	Now let $(M^*_R)_{ij} = (b_1,\triangleleft_1)< (b_2,\triangleleft_2) = (M_R)_{ij}$. We first consider the case that $b_1 < b_2$ and the subcase that $i,j \in \C$.
	Take $\epsilon > 0$ small enough such that $b_1+\epsilon < b_2$. By the definition of $M_R$ we have $b_2 = \sup \{p(i) - p(j) \mid p \in R\}$, so there exists $p \in R$ such that $p(i) - p(j) > b_2 - \epsilon = b_1$. This would entail $p\notin\Val(M_R^*) = \Val(M_R)$, which is a contradiction to the aforementioned inclusion $R \subseteq \Val(M_R)$. The subcase where one of the clocks is $0$ is completely analogous.
	
	Finally consider the case that $b_1 = b_2$, $\triangleleft_1 = \; <$ and $\triangleleft_2 = \; \leq$. If $i,j\in\C$, then there must exist $p\in R$ such that $p(i) - p(j) = b_1 = b_2$. But this point will not be contained in $\Val_{\C}(M_R^*)$ due to the strict inequality, which results once more in a contradiction. The case where one of the indices is equal to $0$ is handled similarly. This finishes the proof that $M_R = M_R^*$.
	
	(3) First consider the case that $R$ itself is a zone, so $R = \Val(g)$ for some clock constraint $g$. Let $M_g$ be the associated DBM. One proves along similar lines as in (2) that $M_R\preceq M_g$. This implies that $R \subseteq \Val(M_R) \subseteq \Val(M_g) = \Val(g) = R$, and hence $R = \Val(M_R)$.
	
	For general $R$, let $Z\subseteq\Val(\C)$ be any zone with $R\subseteq Z$. Then $Z = \Val(M_Z)$ for the canonical DBM $M_Z$ of $Z$, as shown in the previous paragraph. From $R\subseteq Z$, we clearly have $M_R\preceq M_Z$ and thus $\Val(M_R) \subseteq \Val(M_Z) = Z$. Therefore, any zone containing $R$ must also contain $\Val(M_R)$.
	
	(4) Let $Z = \Val(M)$. Since $Z$ is a zone, by part (3) we have $\Val(M_Z) = Z = \Val(M)$. It follows then from part (2) that $M_Z = M_Z^* = M^* = M$.
	\qed
\end{proof}

Recall from \cite{BengtssonW04} that the time closure on DBMs is the unary operation $\timeup$ defined by $(\timeup M)_{ij} = M_{ij}$ if $j\neq 0$ and $(\timeup M)_{i0} = (\infty, <)$ otherwise. In words, the time closure removes absolute time bounds on the clocks in $\C$. The next lemma states that the time closure operator is the syntactic analogue of the classical time closure operation on subsets $R\subseteq \Val(\C)$ defined by $\timeup R = \{v+t\in \Val(\C)\mid v\in R \text{ and } t\geq 0 \}$.

\begin{lemma}
  \label{lem:timeup}
	For any DBM $M$ with $M = M^*$ and $\Val(M)\neq \emptyset$, we have $\Val(\timeup M) = \timeup \Val(M)$.
\end{lemma}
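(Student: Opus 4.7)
The plan is to prove equality of sets by two inclusions. The essential observation is that $\timeup M$ differs from $M$ only in the entries $(i,0)$ for $i \neq 0$, which are the absolute upper bounds $c_i \triangleleft_i^U a_i$ corresponding to $M_{i0} = (a_i, \triangleleft_i^U)$; all other entries (the relative constraints for $i,j \neq 0$ and the lower bounds $M_{0j} = (b_j, \triangleleft_j^L)$) are preserved.

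For the inclusion $\timeup \Val(M) \subseteq \Val(\timeup M)$, take $v \in \Val(M)$ and $t \geq 0$. I would check entry by entry: for $j = 0$ the $(\timeup M)_{i0} = (\infty,<)$-constraint is vacuous; for $i,j \neq 0$ the difference $(v+t)(i) - (v+t)(j) = v(i) - v(j)$ satisfies $M_{ij}$; and for $i = 0$, $j \neq 0$, the lower-bound constraint $-(v(j)+t) \triangleleft_j^L b_j$ follows from $-v(j) \triangleleft_j^L b_j$ because adding time only decreases $-v(j)$.

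For the harder inclusion $\Val(\timeup M) \subseteq \timeup \Val(M)$, take $w \in \Val(\timeup M)$; the task is to find $t \geq 0$ with $v := w - t \in \Val(M)$. Since $\timeup M$ and $M$ agree outside the upper bounds, the relative and lower-bound constraints of $M$ translate into the conditions $t \triangleright_i^U w(i) - a_i$ (from each upper bound) and $t \triangleleft_j^L w(j) + b_j$ (from each lower bound), together with $t \geq 0$. I would set $t^* := \max\bigl(0,\, \max_i (w(i) - a_i)\bigr)$ and argue that the $t^*$-shift (possibly slightly perturbed) works. Feasibility of the lower-bound inequalities at $t^*$ is where the canonicality hypothesis $M = M^*$ enters decisively: from $M \preceq M^2$ one gets $M_{ij} \preceq M_{i0} + M_{0j} = (a_i + b_j,\, \min\{\triangleleft_i^U, \triangleleft_j^L\})$, and applying this to the constraint $w(i) - w(j) \triangleleft_{ij} d_{ij}$ that $w$ already satisfies in $\timeup M$ yields $w(i) - a_i \leq w(j) + b_j$.

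The main obstacle will be the bookkeeping for strict versus non-strict inequalities when $t^*$ is attained exactly. If the maximizing upper bound $M_{i^*,0}$ is strict, then $v(i^*) = a_{i^*}$ violates $<$; conversely, perturbing $t^* \to t^* + \epsilon$ risks breaking a lower bound. The way I would resolve this is to observe that the lex-comparison $M_{i^*j} \preceq M_{i^*0} + M_{0j}$ inherits the strictness of $M_{i^*,0}$, so the inequality $w(i^*) - a_{i^*} \leq w(j) + b_j$ is in fact strict whenever either $\triangleleft_{i^*}^U$ or $\triangleleft_j^L$ is $<$, giving room to take a small $\epsilon > 0$ that restores the required strictness on both sides simultaneously. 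When $\max_i (w(i) - a_i) \leq 0$, one takes $t = 0$ and $v = w$, which lies in $\Val(M)$ directly.
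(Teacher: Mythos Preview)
Your proposal is correct and takes essentially the same route as the paper: for the harder inclusion you shift $w$ back by $t^* = \max_i(w(i) - a_i)$ (plus a small $\epsilon$ when a maximizing upper bound is strict) and invoke canonicality $M = M^*$ via the triangle inequality $M_{ij} \preceq M_{i0} + M_{0j}$ to verify the lower-bound constraints---the paper phrases this step as a proof by contradiction, while you argue directly. The only slip is your last sentence: when $\max_i(w(i)-a_i) = 0$ and some maximizer carries a strict bound, $w$ itself is \emph{not} in $\Val(M)$, so you still need the $\epsilon$-perturbation you already described for the general case.
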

\begin{proof}
	We begin with the inclusion $\Val(\timeup M) \subseteq \timeup \Val(M)$. Let $M_{i0}  =(u_i, \triangleleft_i)$ and $M_{0i} = (-l_i, \triangleleft'_i)$ for $0\leq i\leq n$. Any $v\in \Val(\timeup M)$ satisfies all constraints contained in $M$ except possibly for the constraints $M_{i0}  =(u_i, \triangleleft_i)$. As $\Val(M)$ is non-empty, none of the $u_i$ is $-\infty$ and we have $(l_i, \triangleleft'_i)\preceq (u_i, \triangleleft_i)$ for all $i$. If for all $i$ we have $c_i(v)\triangleleft_i u_i$, then we already have $v\in \Val(M)$. If not, let $t = \max_i\{v(i) - u_i\}\geq 0$ and let $a$ be the index attaining this maximum. We assume that $\triangleleft_{a} = \; \leq$, otherwise we add a small $\epsilon$ to $t$. We claim that the valuation $v' = v - t$ lies in $\Val(M)$. The only constraints in $M$ potentially violated by $v'$ are the absolute lower bounds $M_{0 i} = (-l_i, \triangleleft'_i)$. If this was the case, then for some $b$ we would have $v'(b) < l_{b}$. On the other hand $v'(a) = u_{a}$, and thus
	\begin{align*} 
		M^2_{ab} &\preceq M_{a 0} + M_{0 b}\\
						& = (u_{a}, \triangleleft_{a}) + (-l_{b}, \triangleleft'_{b}) \\
						& \prec (v'(a),  \triangleleft_{a}) + (-v'(b),\triangleleft'_{b}) \\
						& =  (v(a)-v(b), \min\{\triangleleft_{a}, \triangleleft'_{b}\}) \\
						& \preceq M_{ab}
	\end{align*}
	where the last inequality follows from $v\in \Val(\timeup M)$. However, this is a contradiction to $M=M^* = M^0 \sqcap M^1 \sqcap M^2 \sqcap ...$.
	
	For the reverse inclusion $\timeup \Val(M)\subseteq\Val(\timeup M)$, let $v\in \timeup \Val(M)$, so there exists $v'\in \Val(M)$ and $t\geq 0$ such that $v = v'+t$. As $v'\in \Val(M)$, the only constraints in $M$ possibly violated by $v$ are those contained in the  column indexed by $c_0$. As these are relaxed to $(\infty, <)$ in $\timeup M$, we have $v\in \Val(\timeup M)$.
	\qed
\end{proof}

\subsection{PTAB preserve reachability probabilities}

\begin{lemma}\label{lem:bisimilar}
	Let $\Scal$ be a TPS and $\sim$ a PTAB on $\Scal$ that respects $\goal$ and $\fail$.
	If $s \sim s'$, then we have $\prb^{*}_{s}(\lozenge \goal) = \prb^{*}_{s'}(\lozenge \goal)$ for $*\in\{\min, \max\}$.
\end{lemma}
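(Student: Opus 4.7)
The plan is to establish the two inequalities $\prb^{*}_{s}(\lozenge \goal) \leq \prb^{*}_{s'}(\lozenge \goal)$ and $\prb^{*}_{s}(\lozenge \goal) \geq \prb^{*}_{s'}(\lozenge \goal)$ separately for each $* \in \{\min,\max\}$. By symmetry of $\sim$ it suffices to exhibit, for every time-divergent scheduler $\S$ based at $s$, a time-divergent scheduler $\S'$ based at $s'$ such that $\Pr^{\S}_{\Scal,s}(\lozenge\goal) = \Pr^{\S'}_{\Scal,s'}(\lozenge\goal)$; taking $\sup$ over $\S$ then yields the $\max$ case and $\inf$ the $\min$ case.

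The construction of $\S'$ is by induction on the length of finite paths. The invariant to maintain is that $\S'$ acts on a path $\pi' = s'_0\alpha'_0s'_1\ldots s'_n$ in such a way that there is a bijectively coupled path $\pi = s_0\alpha_0s_1\ldots s_n$ with the same $\Act$-labels in the discrete positions and $s_i \sim s'_i$ for every $i$. Assuming this invariant at step $n$, we use the defining clauses of a PTAB to extend it: if $\S$ after reading $\pi$ schedules a discrete action $s_n \overset{\alpha}{\to}\mu$, clause~(2) provides a discrete action $s'_n \overset{\alpha}{\to}\mu'$ with $\mu(E)=\mu'(E)$ for every class $E \in S/_\sim$, and we let $\S'$ pick this action; if $\S$ schedules a time delay $s_n \overset{t}{\to}\delta_{u}$, clause~(1) yields some $s'_n \overset{t'}{\to}\delta_{u'}$ with $u\sim u'$, and we let $\S'$ pick this delay. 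Since $\mu$ and $\mu'$ agree on $\sim$-classes, we can couple the next state via a measure on pairs whose marginals are $\mu,\mu'$ and which is supported on $\{(v,v')\mid v\sim v'\}$; this keeps the invariant at step $n+1$.

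By an induction on $n$, the invariant forces the pushforward measures on $\sim$-classes of finite prefixes to agree, and since $\sim$ respects $\goal$, the event ``the $n$-th state is in $\goal$'' has identical probability under $\Pr^{\S}_{\Scal,s}$ and $\Pr^{\S'}_{\Scal,s'}$. Passing to the limit (or directly using the cylinder representation of $\lozenge\goal$ via the states reaching $\goal$ for the first time at step $n$) gives equality of $\Pr^{\S}_{\Scal,s}(\lozenge\goal)$ and $\Pr^{\S'}_{\Scal,s'}(\lozenge\goal)$.

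The main obstacle is time-divergence of $\S'$: clause~(1) does not require $t=t'$, so the times accumulated by $\S'$ might converge even though those of $\S$ do not (and vice versa). To deal with this we argue that along any history ending in bisimilar states, any finite time delay in $\Scal$ from $s_n$ can be matched by a delay in $\Scal$ from $s'_n$ whose length is bounded below by a positive constant depending only on the $\sim$-class, by iterating clause~(1) finitely many times if necessary. This forces the total time under $\S'$ to diverge whenever it does under $\S$, so $\S'$ is time-divergent on a set of paths of full measure, completing the construction. (Equivalently, one may appeal to the standard non-zenoness assumption on $\Scal$ which guarantees this matched lower bound for every pair of $\sim$-equivalent states.)
\qed
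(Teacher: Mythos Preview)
Your scheduler-mimicking construction has a genuine gap in the time-divergence step. Clause~(1) of the PTAB definition only guarantees the \emph{existence} of some matching delay $t'$, with no lower bound on its duration; ``iterating clause~(1) finitely many times'' does not help, because each iteration may again produce an arbitrarily short matching delay, and in any case you would then be replacing a single delay step of $\S$ by several steps of $\S'$, breaking the step-wise coupling invariant that your probability-equality argument relies on. The fallback to a ``standard non-zenoness assumption'' is not an assumption the paper makes, and for an abstract TPS there is no reason a per-class positive lower bound on matched delays should exist.

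The paper sidesteps this difficulty by not constructing schedulers at all. It shows by induction on the step bound $i$ that the \emph{optimal step-bounded values} agree, i.e.\ $\prb^{\max}_{s}(\lozenge_{\leq i}\goal) = \prb^{\max}_{s'}(\lozenge_{\leq i}\goal)$ (and analogously for $\min$), using only the one-step Bellman recursion together with the PTAB transfer conditions on single transitions. The unbounded case then follows from $\prb^{\max}_{s}(\lozenge\goal) = \lim_{i\to\infty}\prb^{\max}_{s}(\lozenge_{\leq i}\goal)$, which holds because this limit already holds for every fixed time-divergent scheduler. This value-function route never needs to exhibit a time-divergent scheduler on the $s'$ side, so the obstacle you identified simply does not arise.
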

\begin{proof}
	We show by induction that for all $i \geq 0$
	\begin{equation} \label{PTAB lemma}
		\prb^{\max}_{s}(\lozenge_{\leq i} \goal) = \prb^{\max}_{s'}(\lozenge_{\leq i} \goal),
	\end{equation}
	where $\lozenge_{\leq i}\goal$ refers to paths reaching $\goal$ in at most $i$ steps (irrespective of their time duration). For $i = 0$ the claim is clear, as both $s$ and $s'$ must be in location $\goal$.
	
	So let $i = i' + 1$.
	For each $(\alpha,\gamma) \in T(s)$, we find $(\alpha,\gamma') \in T(s')$ such that for all $C \in S/_{\sim}$: $\sum_{t \in C}\gamma(t) = \sum_{t \in C}\gamma'(t)$, and vice versa.
	Hence, in particular:
	\begin{align*}
	&\sum_{t \in \supp(\gamma)} \gamma(t) \cdot \prb^{\max}_{t}(\lozenge_{\leq i'} \goal) \\
	= &\sum_{C \in \supp(\gamma/_{\sim})} \left(\sum_{t \in C} \gamma(t) \right) \cdot \prb^{\max}_{t}(\lozenge_{\leq i'} \goal) \\
	= &\sum_{C \in \supp(\gamma'/_{\sim})} \left(\sum_{t \in C} \gamma(t) \right) \cdot \prb^{\max}_{t}(\lozenge_{\leq i'} \goal) \\
	= &\sum_{t \in \supp(\gamma')} \gamma'(t) \cdot \prb^{\max}_{t}(\lozenge_{\leq i'} \goal) \qquad
	\end{align*}
	As
	\[\prb^{\max}_s(\lozenge_{\leq i} \goal) = \sup_{(\alpha,\gamma) \in T(s)} \sum_{t \in \supp(\gamma)} \gamma(t) \cdot \prb^{\max}_{t}(\lozenge_{\leq i'} \goal)\]
	the claim follows. An analogous calculation can be made for $\prb^{\min}$.
	
	To see that \Cref{PTAB lemma} is enough to conclude that $\prb^{\max}_{s}(\lozenge \goal) = \prb^{\max}_{s'}(\lozenge \goal)$ we observe that
	\[\prb^{\max}_{s}(\lozenge \goal) = \lim_{i \to \infty} \prb^{\max}_{s}(\lozenge_{\leq i} \goal)\]
	holds for all $s \in S$ since the same equation is already true for each time-divergent scheduler on $\Scal$.
	\qed
\end{proof}

\bisimreach*
\begin{proof}
	Throughout we write $\M = \M(\Scal/_{\sim})$. The image of a path $\pi$ in $\Scal$ under the quotient map will be denoted by $\overline{\pi}$. We give the prove for $* = \max$, the other case is completely analogous.

\emph{Step 1.} For every memoryless scheduler $\S$ on $\M$ we construct a memoryless scheduler $\S'$ on $\Scal$ such that for all states $s$ of $\Scal$ we have $\Pr^{\S}_{\M, [s]}(\lozenge \goal) = \Pr^{\S'}_{\Scal, s}(\lozenge \goal)$. Since the maximum $\Pr^{\max}_{\M, [s]}(\lozenge \goal)$ is attained already on memoryless schedulers \cite[Lemma 10.102]{BaierK2008}, this suffices to show $\prb^{\max}_{\M, [s]}(\lozenge \goal) \leq \prb^{\max}_{\Scal, s}(\lozenge \goal)$.

Take any scheduler $\S$ on $\M$. The idea is to lift all scheduler decisions of $\S$ along the quotient map to $\Scal$. More precisely, for a state $s$ in $\Scal$ we make a case distinction on whether $\S([s]) = (\tau,\delta_{C})$ or $\S([s]) = (\alpha,\mu/_{\sim})$.

In the first case, we know that for all states $s$ of $\Scal$ there exists $s' \in C$ and $t \in \R^+$ such that $s \overset{t}{\to} s' \in T_{\Scal}(s)$. We set $\S'(s) = (t,\delta_{s'})$. In the other case, we know that there exists $s \overset{\alpha}{\to} \mu\in T_{\Scal}(s)$ such that for all $C \in S/_{\sim}: \sum_{s \in C} \mu(s) = \mu/_{\sim}(C)$. We set $\S'(s) = (\alpha,\mu)$.

Let $\Paths^{\S'}_{s}(\lozenge_{=i} \goal)$ be the set of $\S'$-paths starting in $s$ that reach $\goal$ in exactly $i$ steps.
We show by induction on $i$ that
\[\Pr(\Paths^{\S'}_{s}(\lozenge_{=i} \goal)) = \Pr(\Paths^{\S}_{[s]}(\lozenge_{=i} \goal))\]
holds for all states $s \in S$.
If $i = 0$ then the LHS is $1$ exactly if $s = (\goal,v)$ for some $v$.
Then, by assumption, all states in $[s]$ are in location $\goal$ and hence the RHS is $1$ as well.
Otherwise, both sides of the equation are equal to $0$.

For $i = i' + 1$ we have:
\begin{align*}
\Pr(\Paths^{\S'}_{s}(\lozenge_{=i} \goal)) &= \sum_{s' \in \supp(\mu)} \mu(s') \cdot \Pr(\Paths^{\S'}_{s'}(\lozenge_{=i'} \goal)) \\
&= \sum_{s' \in \supp(\mu)} \mu(s') \cdot \Pr(\Paths^{\S}_{[s']}(\lozenge_{=i'} \goal)) \quad \text{(I.H.)}\\
&= \sum_{[s'] \in \supp(\mu/_{\sim})} \left(\sum_{u \in [s']} \mu(u)\right) \cdot \Pr(\Paths^{\S}_{[s']}(\lozenge_{=i'} \goal)) \\
&= \sum_{[s'] \in \supp(\mu/_{\sim})} \mu/_{\sim}([s']) \cdot \Pr(\Paths^{\S}_{[s']}(\lozenge_{=i'} \goal)) \\
&= \Pr(\Paths^{\S}_{[s]}(\lozenge_{=i} \goal))
\end{align*}
where we assume $\S'(s) = (\alpha,\mu)$ and $\S([s]) = (\alpha,\mu/_{\sim})$.
A similar calculation can be made in the case that $\S'(s) = (t,\delta_{s'})$ and $\S([s]) = (\tau,\delta_{[s']})$.
As $\Pr^{\S}_{[s]}(\lozenge \goal) = \sum_{i\geq 0} \Pr(\Paths^{\S}_{[s]}(\lozenge_{=i} \goal))$ and analogously for $\M$ this finishes the argument for \emph{Step 1.}
\medskip

\emph{Step 2.} We show that given a scheduler $\S$ on $\Scal$ we can find a scheduler $\overline{\S}$ on $\M$ that makes compatible choices on all paths mapping to the same path in $\M$.

As an intermediate step we define a sequence of schedulers $\S_0, \S_1, \S_2, ...$ on $\Scal$ such that: $\S_0 = \S$, $\S_{i}$ and $\S_{i+1}$ do not differ on paths of length at most $i$ and $\Pr_{\Scal, s}^{\S_{i}}(\lozenge\goal) \leq  \Pr_{\Scal, s}^{\S_{i+1}}(\lozenge\goal)$. For the induction step, assume that $\S_{i}$ has been constructed and consider the (infinite-state, finitely-branching) Markov chain $K^{\S_i} = (\Paths_\fin(\Scal), \Pb, s)$ associated to $\Scal$ and $\S_i$, based at some arbitrary state $s\in \Scal$. 
By definition we have $\Pr_{\Scal, s}^{\S_i}(\lozenge\goal) = \Pr_{K^{\S_i}, s}(\lozenge\goal)$.

Define $\S_{i+1}(\pi) = \S_{i}(\pi)$ for every path of length at most $i$. Let $\pi_1, ..., \pi_n$ be all paths of length $i+1$ based at $s$ in $K^{\S_i}$ that map to the same (fixed) path $\overline{\pi}$ in $\M$. These are finitely many as $K^{\S_i}$ is finitely branching. If we write $s_j = \last(\pi_j)$, then in particular, we have $s_j \sim s_k$ for all $1\leq j,k\leq n$. Let $k^*$ be the index of a path among $\pi_1,...,\pi_n$ that attains the maximal value $\max_{1\leq k\leq n}\: \Pr_{K^{\S_i}, \pi_k} (\lozenge\goal)$. We now emulate the subtree in $K^{\S_i}$ based at $\pi_{k^*}$ by a new subtree at $\pi_j$ for all $1\leq j\leq n$.

Formally, let $\mathfrak{A}$ be the scheduler defined on $\Paths_\fin(\Scal,s_{k^*})$ by $\mathfrak{A}(\tau) = \S_i(\pi_{k^*}\tau)$. Then by definition
\[ \Pr^{\mathfrak{A}}_{\Scal, s_{k^*}}(\lozenge\goal) = \Pr_{K^{\S_i}, \pi_{k^*}}(\lozenge\goal)\]
Since $s_j \sim s_{k^*}$ we know by \Cref{lem:bisimilar} that there exists a scheduler $\mathfrak{A}'$ on $\Paths_\fin(\Scal, s_j)$ such that 
\[ \Pr^{\mathfrak{A}}_{\Scal, s_{k^*}}(\lozenge\goal) \leq  \Pr^{\mathfrak{A'}}_{\Scal, s_{j}}(\lozenge\goal)\]
Now we define $\S_{i+1}(\pi_j\tau) = \mathfrak{A}'(\tau)$ for any finite path $\tau$ starting in $s_j$. Then we have
\[ \Pr^{\mathfrak{A'}}_{\Scal, s_{j}}(\lozenge\goal) = \Pr_{K^{\S_{i+1}}, \pi_{j}}(\lozenge\goal)\]
and taking everything together
\[  \Pr_{K^{\S_{i}}, \pi_{j}}(\lozenge\goal)\leq \Pr_{K^{\S_i}, \pi_{k^*}}(\lozenge\goal) \leq \Pr_{K^{\S_{i+1}}, \pi_{j}}(\lozenge\goal). \]
In total we get
\[ \Pr_{\Scal, s}^{\S_i}(\lozenge\goal) = \Pr_{K^{\S_i}, s}(\lozenge\goal) \leq  \Pr_{K^{\S_{i+1}}, s}(\lozenge\goal) = \Pr_{\Scal, s}^{\S_{i+1}}(\lozenge\goal)  \]
as desired. Now we let the scheduler $\S'$ be the limit of the $\S_i$, i.e. for $\pi\in \Paths_\fin(\Scal)$ of length $i$ we let $\S'(\pi) = \S_{i}(\pi)$.
\medskip

\emph{Step 3.} From the scheduler $\S'$ constructed in \emph{Step 2}, we now induce a scheduler on $\M$. By construction, on all finite paths that map to the same path in $\M$, $\S'$ chooses bisimilar actions. Thus the assignment $\overline{\S}(\overline{\pi}) = \overline{\S'(\pi)}$ is well-defined. It is easy to see that then 
\[ \Pr_{\Scal, s}^{\S}(\lozenge\goal) \overset{\emph{Step 2}}{\leq} \Pr_{\Scal, s}^{\S'}(\lozenge\goal) = \Pr_{\M, [s]}^{\overline{\S}}(\lozenge\goal). \]
This implies 
\[\prb_{\Scal, s}^{\max}(\lozenge\goal)\leq \prb_{\M, [s]}^{\max}(\lozenge\goal)\]
and thus finishes the proof.
\qed
\end{proof}

\section{Supplementary material for \Cref{sec:subsystem}}

\begin{example}[\Cref{ex:subsystem} extended]\label{ex:subsystem extended} Recall the PTA $\T$ depicted in \Cref{fig:example_pta}. Note that the principal choice a scheduler in $\T$ has to make is that between $\alpha$ and $\beta$ (and letting time pass accordingly) whenever in $l_1$. The probability to reach $\goal$ depends on the number of times $\alpha$ has been chosen in $l_1$ before the condition $y\leq 2$ appearing in the right-hand branch of $\T$ is violated. The reason for this is that $y$ is never reset and once $y\leq 2$ is violated the right-hand branch does not contribute towards $\Pr^\S_T(\lozenge\goal)$ anymore. This is a consequence of our semantics in which a transition leading to a violation of the invariance condition of the target is automatically redirected to $\fail$.

More precisely, let $\S_{m, n}$ denote any scheduler that selects $\alpha$ in $l_1$ $m$ times while $y\leq 2$ and $n$ times while $y>2$. Since $\alpha$ has guard $x\geq 1$ and reset $x:=0$, one time unit has to pass between any two occurences of $\alpha$. Therefore, the only possible values of $m$ are $0,1$ and $2$. If $m=0$, then $\beta$ is taken immediately and therefore $n=0$. The probability to reach $\goal$ from $(l_0, (x,y))$ when choosing $\alpha$ and without returning to $l_0$ is $p= \frac{3}{5} \cdot\frac{2}{5} +\frac{2}{5} \cdot \frac{1}{2} = \frac{11}{25}$ if $y\leq 2$, and $q =\frac{2}{5} \cdot \frac{1}{2} = \frac{1}{5}$ if $y> 2$. The finite path leading from $l_0$ to $l_0$ and looping through $l_1$ $n$ times has probability $q^n$. Then we can calculate:

\begin{align*}
	 \Pr^{\S_{0,0}}_\T(\lozenge\goal) &= \frac{3}{5} \cdot\frac{2}{5} +\frac{2}{5} \cdot \frac{3}{4} = \frac{27}{50} = 0.54 \\
	 \Pr^{\S_{1,0}}_\T(\lozenge\goal) &= p + q\cdot \frac{27}{50} = \frac{137}{250} = 0.548\\
	 \Pr^{\S_{2,0}}_\T(\lozenge\goal) &= p + q\cdot p + q^2\cdot\frac{27}{50} = \frac{687}{1250} = 0.5496
\end{align*}
and for $n\geq 1$
 \begin{align*}
	 \Pr^{\S_{1, n}}_\T(\lozenge\goal) &= p + q\cdot p + \sum_{k=0}^{n-1}\: q^{2+k}\cdot q + q^{2+n}\cdot \frac{2}{5}\cdot\frac{3}{4}\\
	 \Pr^{\S_{2,n}}_\T(\lozenge\goal) &=  p + q\cdot p + q^2\cdot p+\sum_{k=0}^{n-1}\: q^{3+k}\cdot q + q^{3+n}\cdot \frac{2}{5}\cdot \frac{3}{4}
 \end{align*}

From this description one can derive that $\Pr^{\S_{m,n'}}_\T(\lozenge\goal) \geq \Pr^{\S_{m,n}}_\T(\lozenge\goal)$ holds whenever $n\geq n'\geq 1$. This means that once the right-hand branch of $\T$ does no longer contribute towards the probability of reaching $\goal$, choosing $\beta$ leads to a higher probability. It also follows that $\Pr^{\S_{m,n}}_\T(\lozenge\goal)\geq \Pr^{\S_{0,0}}_\T(\lozenge\goal)$ for $n\geq 1$ and $m$ arbitrary, so $\S_{0,0}$ is a scheduler attaining $\prb^{\min}_\T(\lozenge\goal)$. In order to find a scheduler attaining $\prb^{\max}_\T(\lozenge\goal)$, we explicitly calculate
 \begin{align*}
 	\Pr^{\S_{1,1}}_\T(\lozenge\goal) &= p + q\cdot p + q^{2}\cdot q + q^3\cdot  \frac{2}{5}\cdot\frac{3}{4} = \frac{1346}{2500} = 0.5384\\
	\Pr^{\S_{2,1}}_\T(\lozenge\goal) &= p + q\cdot p + q^2\cdot p +  q^{3}\cdot q + q^{4}\cdot \frac{2}{5}\cdot \frac{3}{4} = \frac{3423}{6250} = 	0.54768
 \end{align*}
which shows that $\S_{2,0}$ is an optimal scheduler.

\end{example}

\subsysandfarkascert*
\begin{proof}
  We first establish some relations between the semantics of $\T$ and $\T'$.
  For this, we denote by $S_{\T}$ the states of $\Scal(\T)$ and by $S_{\T'}$ the states of $\Scal(\T')$.
  \begin{enumerate}[itemsep=2ex]
  \item[(a)] \emph{$\T$ and $\T'$ have the same set of actions, and $S_{\T} \subseteq S_{\T'}$.}

    \emph{Proof:} As $\T$ and $\T'$ have the same set of actions, the actions of the semantics are also the same.
    $S_{\T} \subseteq S_{\T'}$ follows from \ref{item:loccontainment} and \ref{item:invcontainment} of \Cref{def:subsystem}.
  \item[(b)] \emph{For any transition $s \to \mu_\sem'$ (discrete action, or time delay) in $\Scal(\T')$, there exists a transition $s \to \mu_\sem$ in $\Scal(\T)$ such that for all $t \in \supp(\mu_\sem)$ with $t \notin \fail$: $\mu_\sem'(t) \leq\mu_\sem(t)$.}

    \emph{Proof:}
    We first consider discrete transitions.
    Take a transition $(\alpha,\mu_{\sem}')\in T_{\Scal(\T')}(l,v)$ for some state $(l,v)$.
	  There must be $l\overset{g':\alpha}{\longrightarrow}\mu'$ in $\T'$ such that $v \models g'$ and which satisfies the equalities in the definition of the semantics of PTAs.
	  Let $\Phi$ be the injection garantueed to exist by condition \ref{item:weak transition} in~\Cref{def:subsystem}, and let $l\overset{g:\alpha}{\longrightarrow}\mu = \Phi( l\overset{g':\alpha}{\longrightarrow}\mu')$. By (3a) we get $v \models g$, and we therefore have a corresponding transition $(\alpha,\mu_{\sem}) \in T_{\Scal(\T)}(l,v)$.
	  From (3c) in \Cref{def:subsystem} we can conclude that $\mu'(C, l') \in \{\mu(C, l'),0\}$ for all $C\subseteq \C, l'\in\Loc'$ with $l'\neq\fail$. This implies that for states $t$ of $S(\T')$ with $t \neq \fail$ we have $\mu_\sem'(t)\leq\mu_\sem(t)$.
	
	  For a time delay $(t,s') \in T_{\Scal(\T')}(s)$, the same time delay must exist in $\Scal(\T)$.
  \item[(c)] \emph{If $\T'$ is a strong subsystem, then for any transition $s \to \mu_\sem$ (discrete action, or time delay) in $\Scal(\T)$ such that $s \in S_{\T'}$, there exists a transition $s \to \mu_\sem'$ in $\Scal(\T')$ such that for all $s' \in \supp(\mu_\sem)$ with $s' \notin \fail$: $\mu_\sem'(s') \leq \mu_\sem(s')$.}
    
    \emph{Proof:}
	  We again first consider discrete actions, so take $(\alpha,\mu_{\sem}) \in T_{\Scal(\T)}(l,v)$.
    Then there exists a corresponding transition $l\overset{g:\alpha}{\longrightarrow}\mu$ in $\T$, and in particular $v\models g$.
	  We use (3$^*$) of \Cref{def:subsystem} to get $l\overset{g':\alpha}{\longrightarrow}\mu' = \Psi(l\overset{g:\alpha}{\longrightarrow}\mu)$.
    This is a transition in $\T'$ such that $g' \equiv g \land \inv'(l)$ by (3a$^*$).
	  Now, from $v \models g$ and $v \models \inv'(l)$ we can derive $v \models g'$.
	  Hence, there exists a transition $(\alpha,\mu'_{\sem}) \in T_{\Scal(\T')}(l,v)$.
	The required relation between $\mu_{\sem}$ and $\mu_{\sem}'$ follows in the same way as in (b).
	
  Now take a time delay $(t,\delta_{(l,v{+}t)}) \in T_{\Scal(\T)}(l,v)$ where $(l,v) \in S_{\T'}$.
	Then we have $v\models\inv'(l)$ and since $(l, v{+}t)\in S_{\T}$ we have $v{+}t\models \inv(l)$.
  By condition (4) of~\Cref{def:subsystem} it follows that $v {+} t \models \inv'(l)$ and hence $(l,v{+}t) \in S_{\T'}$. Therefore the transition $(t,\delta_{(l,v{+}t)})$ lies also in $T_{\Scal(\T')}(l,v)$, which completes the proof.
  \end{enumerate}
  
  Now let $\M_{S'}$ be the MDP-subsystem of $\M$ induced by $S'$, as defined in~\cite[Notation 5.3]{FunkeJB20}, which essentially deletes from $\M$ all states not contained in $S'$ and redirects edges to states outside of $S'$ to $\fail$. 
  To show the main claim, we want to establish the following chain of inequalities:
  \begin{equation}\label{eq:max theorem}
  	\prb_{\T'}^{\max}(\lozenge \goal) \leq \prb_{\M_{S'}}^{\max}(\lozenge \goal) \leq \prb_{\M}^{\max}(\lozenge \goal)
  \end{equation}
  and, if $\T'$ is a strong subsystem:
  \begin{equation}\label{eq:min theorem}
  \prb_{\T'}^{\min}(\lozenge \goal) \leq \prb_{\M_{S'}}^{\min}(\lozenge \goal) \leq \prb_{\M}^{\min}(\lozenge \goal)
  \end{equation}
  In both cases, the second inequality follows from \cite[Lemma 4.4]{FunkeJB20}.
  
  For the first inequality we let $\Scal_{S'}$ be the TPS that includes exactly the states of $\Scal(\T)$ whose equivalence class lies in $S'$.
  More precisely, let
  \[\Scal_{S'} = \left(\bigcup_{[s] \in S'} [s], \Act\uplus\:\R_+ , T_{S'}, s_0\right),\] 
  where the transitions in $T_{S'}$ correspond exactly to the transitions of $\Scal(\T)$ for the given state, with the exception that successor states that are not present in $\Scal_{S'}$ are replaced by $\fail$.
  With this definition in place, we aim to show
  \[\prb_{\T'}^{*}(\lozenge \goal) \leq \prb_{\Scal_{S'}}^{*}(\lozenge \goal) = \prb_{\M_{S'}}^{*}(\lozenge \goal) \]

  As $\Scal_{S'}$ is the TPS that merges all states that are not in $S'$ with $\fail$, and elements of $S'$ are complete equivalence classes under $\sim$, the restriction of $\sim$ to $\bigcup_{[s] \in S'} [s]$ is a PTAB on $\Scal_{S'}$ that respects $\goal$ and $\fail$.
  Furthermore, the corresponding quotient is $\M_{S'}$.
  Now $\prb_{\Scal_{S'}}^{*}(\lozenge \goal) = \prb_{\M_{S'}}^{*}(\lozenge \goal)$ follows by~\Cref{lem:bisimreach} for both $\min$ and $\max$ reachabiliy probabilities.
  \medskip

  We now consider max-probabilities, and show $\prb_{\T'}^{\max}(\lozenge \goal) \leq \prb_{\Scal_{S'}}^{\max}(\lozenge \goal)$.
  It is enough to show that for every scheduler $\S$ for $\Scal(\T')$ there exists a scheduler $\S'$ for $\Scal_{S'}$ such that $\Pr^{\S}_{\Scal(\T')}(\lozenge \goal) \leq \Pr^{\S'}_{\Scal_{S'}}(\lozenge \goal)$.
 In order to prove this, take a scheduler $\S$ for $\Scal(\T')$ and define $\S'$ by mimicking $\S$ on paths that exists in $\Scal(\T')$, and arbitrarily otherwise.
  This is possible by (a) and (b), as proven above, and it also directly follows by (b) that $\Pr^{\S}_{\Scal(\T')}(\lozenge \goal) \leq \Pr^{\S'}_{\Scal_{S'}}(\lozenge \goal)$.
  
  Next, we consider min-probabilities, where we need to assume that $\T'$ is a strong subsystem and show $\prb_{\T'}^{\min}(\lozenge \goal) \leq \prb_{\Scal_{S'}}^{\min}(\lozenge \goal)$.
  Here it suffices to show that for every scheduler $\S'$ for $\Scal_{S'}$ there exists a scheduler $\S$ for $\Scal(\T')$ such that $\Pr^{\S}_{\Scal(\T')}(\lozenge \goal) \leq \Pr^{\S'}_{\Scal_{S'}}(\lozenge \goal)$.
	Let $\S'$ be such a scheduler for $\Scal_{S'}$ and define a scheduler $\S$ for $\Scal_{\T'}$ by mimicking $\S'$ on every path.
  This is possible by (a) and (c) from above, and again (c) directly implies that $\Pr^{\S}_{\Scal(\T')}(\lozenge \goal) \leq \Pr^{\S'}_{\Scal_{S'}}(\lozenge \goal)$.
  This completes the proof of \Cref{eq:max theorem} and \Cref{eq:min theorem}.
  
  It follows that $\M_{S'}$ is a witnessing MDP subsystem in the sense of \cite[Definition 4.1]{FunkeJB20}. Furthermore, by \cite[Theorem 5.4]{FunkeJB20} we can also find the corresponding Farkas certificates supported on the staates of $\M_{S'}$, i.e., on $S'$.
	\qed
\end{proof}

\zoneclosure*
\begin{proof}
	(1) $\Val(M \sqcup N)$ obviously contains $R := \Val(M) \cup \Val(N)$.
	In view of \Cref{lem:canonicalDBM}, part (4), we have $M = M_{\Val(M)}$ and $N = M_{\Val(N)}$, and thus $M\preceq M_R$ and $N\preceq M_R$. Therefore $M\sqcup N \preceq M_R$. Now the claim follows from \Cref{lem:canonicalDBM}, part (3).
	
	(2) Assume, for contradiction, that  $(M \sqcup N)^* \prec (M \sqcup N)$.
	Then, there exist $i,j$ such that $(M \sqcup N)^*_{ij} \prec (M \sqcup N)_{ij} = \max \{M_{ij}, N_{ij}\}$.
	Let $(M \sqcup N)^*_{ij} = (a,\triangleleft_1)$ and assume, w.l.o.g., that $\max \{ M_{ij}, N_{ij}\} = M_{ij} = (b,\triangleleft_2)$.
	We make the following case distinction:
	\begin{enumerate}
		\item[(i)] Assume that $a < b$ holds.
		There is no point $p \in \Val(M \sqcup N) =  \Val((M \sqcup N)^*)$ such that $p(i) - p(j) > a$.
		On the other hand, we deduce from $M = M^* = M_{\Val(M)}$ (see \Cref{lem:canonicalDBM}, part (4)) that there exist points in $\Val(M)$ such that either $p(i) - p(j) = b$ (if $\triangleleft_2 = \; \leq$) or $p(i) - p(j)$ is arbitrarily close to $b$ (if $\triangleleft_2 = \; <$).
		Both cases yield a contradiction to $\Val(M) \subseteq \Val(M \sqcup N)$.
		\item[(ii)] Assume that $a = b$, $\triangleleft_1 = \; <$ and $\triangleleft_2 = \; \leq$ hold.
		Again, as $M = M^*=M_{\Val(M)}$ there exists a points $p \in \Val(M)$ such that $p(i) - p(j) = b$, but this point is not contained in $\Val(M \sqcup N)$ due to $(M\sqcup N)^*_{ij} = (b, <)$.
	\end{enumerate}
	\qed
\end{proof}

\regioncup*
\begin{proof}
 We have $R_i \subseteq \Val(M_{R_i})$ and $M_{R_i} =M_{R_i}^*$ by~\Cref{lem:canonicalDBM}.
  The claim now follows by inductive application of~\Cref{lem:maxunion}.\qed
\end{proof}

\rtoptasubsset*
\begin{proof}
		We show that $\T_R^w$ satisfies the conditions \ref{item:loccontainment}-\ref{item:weak transition} from~\Cref{def:subsystem} and $\T_R^s$ additionally satisfies (3$^*$) and (4).
		Condition \ref{item:loccontainment} is trivially true.
		
		Condition \ref{item:invcontainment} requires that for all $l \in \Loc'$ we have $\inv'(l) \Implies \inv(l)$. We first show this for $\inv^w(l) = M_l^w = \bigsqcup_{s\in R} M_{s_{|l}}$.
		From \Cref{lem:regioncup} it follows that $\Val(\inv^w(l))$ is the smallest zone that contains $\bigcup_{s\in R}\; s_{|l}$. Since this set lies in the zone $\Val(\inv(l))$, we have $\Val(\inv^w(l))\subseteq \Val(\inv(l))$ and hence by definition $\inv^w(l)\Implies \inv(l)$. For $\inv^s(l) = M_l^s = (\timeup M_l^w) \sqcap M_{\inv(l)}$, the property $\inv^w(l)\Implies \inv(l)$ is trivial. The remaining conditions \ref{item:weak transition} for $\T^w$ and (3$^*$) and (4) for $T_R^s$ follow immediately from the construction.
		\qed
\end{proof}

\witsubsysandfarkcert*
\begin{proof}
	Consider the MDP subsystem $\M_R$ of $\M$ as defined in \cite[Notation 5.3]{FunkeJB20}, which essentially deletes from $\M$ all states not contained in $R$ and redirects edges to states outside of $R$ to $\fail$. 
  Then \cite[Theorem 5.4]{FunkeJB20} states that if there exists a Farkas certificate $\zb \in \P^{\min}_{\M}(\lambda)$ with $\supp(\zb) \subseteq R$, then $\M_R$ is a witness for $\prb_{\M, s_0}^{\min}(\lozenge\goal)\geq\lambda$, i.e. $\prb_{\M_R, s_0}^{\min}(\lozenge\goal)\geq\lambda$. 
	
	We now wish to show that $\T_R^s$ is a witness for $\prb_{\T, l_0}^{\min}(\lozenge\goal)\geq\lambda$ by establishing the chain of inequalities
	\begin{equation}\label{eq:Farkas to witness}
	  \prb_{\T_R^s}^{\min}(\lozenge\goal) =  \prb_{\Scal(\T_R^s)}^{\min}(\lozenge\goal) \geq \prb_{\Scal_R}^{\min}(\lozenge\goal) = \prb_{\M_R}^{\min}(\lozenge\goal)\geq \lambda,
	\end{equation}
	where $\Scal_R$ is the TPS that includes exactly the states of $\Scal(\T)$ whose equivalence class lies in $R$ (compare also the proof of \Cref{prop:subsysandfarkascert}). More precisely, let
	$\Scal_{R} = \left(\bigcup_{[s] \in R} [s], \Act\uplus\:\R_+ , T_{R}, s_0\right)$,
	where the transitions in $T_{R}$ correspond exactly to the transitions of $\Scal(\T)$ for the given state, with the exception that successor states that are not present in $\Scal_{R}$ are replaced by $\fail$. Then the quotient of $\Scal_R$ under the restriction of $\sim$ is precisely $\M_{R}$, and as a consequence $\prb_{\Scal_R}^{\min}(\lozenge\goal) = \prb_{\M_R}^{\min}(\lozenge\goal)$ by~\Cref{lem:bisimreach} for both $\min$ and $\max$ reachabiliy probabilities. As the first equality in (\ref{eq:Farkas to witness}) follows from the definition and the final inequality in (\ref{eq:Farkas to witness}) has been derived in the first paragraph of this proof, we are left to show that 
	\begin{equation}
	\label{eq:Farkas to witness part} 
	 \prb_{\Scal(\T_R^s)}^{\min}(\lozenge\goal) \geq \prb_{\Scal_R}^{\min}(\lozenge\goal)
	 \end{equation}
	
	Take a state $(l,v)$ of $\Scal_R$. This means that $[(l,v)]\in R$ and thus $l\in\Loc(\T_R^s)$ as $[(l,v)]_{|l}\neq\emptyset$. Moreover since $\inv^w(l) = \bigsqcup_{s \in R} M_{s_{|l}}$, we have $v\models\inv^w(l)$ and therefore also $v\models \inv^s(l) = \timeup(\inv^w(l)) \sqcap M_{\inv(l)}$. Hence, $(l,v)$ is a state of $\Scal(\T_R^s)$.
	
	Next let $s = (l,v)$ be a state of $\Scal_R$ and let $s\overset{\alpha}{\longrightarrow}\mu'_\sem \in T_{\Scal(\T_R^s)}(s)$. This transition comes from a transition $l\overset{g^s:\alpha}{\longrightarrow}\mu' \in T_{\T_R^s}(l)$ satisfying the equations appearing in the definition of PTA semantics. By the definition of $\T_R^s$, there exists $l\overset{g:\alpha}{\longrightarrow}\mu \in T_{\T}(l)$ such that $\mu'(C, l') = \mu(C,l')$ whenever $[(l',v[C:=0])] \in R$. This induces a transition $s\overset{\alpha}{\longrightarrow}\mu_\sem \in T_{\Scal(\T)}(s)$ and accordingly a transition $s\overset{\alpha}{\longrightarrow}\overline{\mu}_\sem \in T_{\Scal_R}(s)$ with $\overline{\mu}_\sem(t) = \mu_\sem(t) = \mu'_\sem(t)$ for all states $t$ of $\Scal_R$.
 In summary, every transition of $\Scal(\T_R^s)$ based at a state in $\Scal_R$ is mirrored by a transition in $\Scal_R$ with the same distribution on states in $\Scal_R$ and remaining probability redirected to $\fail$. Completely analogous reasoning shows, vice versa, that every path in $\Scal_R$ is also a path in $\Scal(\T_R^s)$.
	
	In order to prove (\ref{eq:Farkas to witness part}) we need to argue that for every scheduler $\S$ on 
	$\Scal(\T_R^s)$ there exists a scheduler $\S'$ on $\Scal_R$ with  $\Pr_{\Scal(\T_R^s)}^{\S}(\lozenge\goal) \geq \Pr_{\Scal_R}^{\S'}(\lozenge\goal)$. With the notation of the previous paragraph, we define $\S'(\pi) = s\overset{\alpha}{\longrightarrow}\overline{\mu}_\sem$ if $\S(\pi) = s\overset{\alpha}{\longrightarrow}\mu'_\sem$ for every finite path $\pi$ in $\Scal_R$. Since $\overline{\mu}_\sem$ coincides with $\mu'_\sem$ on the states of $\Scal_R$ and redirects the remaining probability to $\fail$, the desired inequality $\Pr_{\Scal(\T_R^s)}^{\S}(\lozenge\goal) \geq \Pr_{\Scal_R}^{\S'}(\lozenge\goal)$ follows.

	The statement about $\T_R^w$ is completely analogous.
	\qed
\end{proof}

\section{Supplementary material for \Cref{sec:minimal}}

\begin{example}[\Cref{ex:minimal} extended]\label{ex:minimal extended}
	In this example we elaborate on \Cref{tab:overview} which lists loc-, inv-, and vol-minimal witnesses for the PTA of \Cref{fig:example_pta}. 
	
	For loc-minimal witnesses we only list the  locations remaining in the subsystem. As the invariance does not affect loc-minimality, one may or may not shrink the invariance of these locations as long as the required probabilistic threshold is kept. The fact that keeping $l_0$ and $l_1$ induces a loc-minimal subsystem $\T_1$ for $\prb^{\max}$ follows from the computation
	\begin{equation}
		\prb^{\max}_{\T_1}(\lozenge\goal) = \sum_{k\geq 1}^\infty \: \left(\frac{2}{5}\cdot\frac{1}{2}\right)^k = \frac{1}{4} \geq \frac{6}{25}
	\end{equation}
	On the other hand, $\T_1$ does not provide a witness for $\prb^{\min}_{\T}(\lozenge\goal) \geq \frac{6}{25}$ since the entire action $\beta$ is redirected to $\fail$, and hence $\prb^{\min}_{\T_1}(\lozenge\goal) = 0$. The fact that keeping $l_0$ and $l_2$ induces a loc-minimal subsystem $\T_2$ for both $\prb^{\max}_\T(\lozenge\goal) \geq \frac{6}{25}$ and $\prb^{\min}_\T(\lozenge\goal) \geq \frac{6}{25}$ is easy to see; in both cases for $*\in \{\min,\max\}$ we have $\prb^{*}_{\T_2}(\lozenge\goal) = \frac{2}{5}\cdot\frac{3}{5}=\frac{6}{25}$. 
	
	We now turn to the inv-minimal subsystems displayed in \Cref{tab:overview} and begin with $\prb^{\max}$. The first one (containing $l_0$ and $l_1$ with shrunk invariants), say $\T'_1$, encodes the first two runs through the subsystem $\T_1$ from above, and thus 
	\[\prb^{\max}_{\T'_1}(\lozenge\goal) =  \frac{2}{5}\cdot\frac{1}{2} + \left(\frac{2}{5}\cdot\frac{1}{2}\right)^2 = \frac{6}{25}\]
	It is inv-minimal precisely because the first run (i.e. the first summand) does not suffice to satisfy the threshold $\frac{6}{25}$. The second subsystem (containing $l_0$ and $l_2$ with shrunk invariants), say $\T'_2$, encodes the fact that one has to wait for one time unit in $l_2$ before the guard $y\geq 1$ of the only action at $l_2$ is satisfied. The third and fourth subsystem, say $\T_{13}$ and $\T_{13}'$, have -- in contrast to $\T'_1$ -- smaller invariants of $l_0$ and $l_1$ which prevent visiting $l_0$ a second time. However, by adding $l_3$, the action $\beta$ becomes available, thus leading to 
	\[ \prb^{\max}_{\T_{13}}(\lozenge\goal) = \prb^{\max}_{\T'_{13}}(\lozenge\goal) = \frac{2}{5}\cdot \frac{3}{4} = \frac{3}{10}\geq \frac{6}{25}.\]
	The only difference between $\T_{13}$ and $\T_{13}'$ is the location in which the time is spent that one needs to wait before the guard of $\beta$ is satisfied. Since strong subsystems must be closed under time successors lying in the original invariance condition (see (4) of \Cref{def:subsystem}), none of the inv-minimal subsystems for $\prb^{\max}$ are strong.
	
	We now turn to the two mentioned inv-minimal subsystems for  $\prb^{\min}$. The first one depicted (consisting of $l_0$ and $l_2$ with shrunk invariants) represents the right-hand side of the PTA. The invariance shown for $l_2$ is the smallest possible invariance which contains $(0,0)$ and is closed under time successors in $\inv_\T(l_2)$.
  No matter how much time is spent in $l_2$ before taking the only action in that location, the probability to reach $\goal$ is $\frac{6}{25}$.
	
	As to the second inv-minimal subsystem $\prb^{\min}$, the above discussion shows that if $l_1$ is in the subsystem, keeping $l_3$ is necessary to avoid $\prb^{\min}(\lozenge \goal) = 0$. The invariances of this subsystem are chosen in a way such that (a) no matter how much time one stays in $l_1$ in the first run of the system, one can pass through $l_0$ again so that probability from the second run is added in order to reach the threshold $\frac{6}{25}$, and (b) whenever the guard $x\leq 1$ of $\beta$ is satisfied, then taking $\beta$ actually leads to $l_3$ (that is, the new invariant of $l_3$ is satisfied after taking the transition).
\end{example}

\begin{lemma}
  Deciding $\prb^{\max}_{\T}(\lozenge \goal) \geq 1$ ($\prb^{\min}_{\T}(\lozenge \goal) \geq 1$) stays \EXPTIME-hard (\PSPACE-hard) under the assumption that all time-divergent schedulers reach $\goal$ or $\fail$ with probability one.
\end{lemma}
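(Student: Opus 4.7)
The plan is to revisit the two hardness reductions cited in the paragraph preceding the lemma and verify that they can be recast so that the produced PTA satisfies the extra almost-sure assumption. The key observation is that, as sketched in \Cref{sec:subsystem}, time-bounded reachability can be encoded by adding a fresh clock $c^\ast$ that is never reset together with the invariant $c^\ast \leq K$ on every non-absorbing location (and the original invariant for $\goal$ and $\fail$, which are absorbing). I will argue that this encoding (i) preserves the reachability probability to $\goal$, (ii) forces every time-divergent scheduler to reach $\goal \cup \fail$ with probability one, and (iii) only enlarges the instance polynomially in the size of the original PTA and $\log K$.

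For property (ii), the crucial point is that, by construction, once $c^\ast = K$ no further time can elapse in a non-absorbing location, so the scheduler must fire a discrete transition. Any such transition either keeps the target invariant intact (in which case the new location is either $\goal$, $\fail$, or violates $c^\ast \leq K$ on the next attempt), or it violates the target invariant and is redirected to $\fail$ by the semantics defined in the remark following the PTA semantics. Hence after total elapsed time at most $K$, every maximal path has reached $\goal$ or $\fail$, and the assumption holds. Property (i) is immediate since for the original reachability we only care about the behavior up to time $K$, which is unaffected by the additional invariant.

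For \EXPTIME-hardness of $\prb^{\max}_{\T}(\lozenge \goal) \geq 1$, I will invoke Laroussinie--Sproston~\cite{LaroussinieS07}, whose proof yields \EXPTIME-hardness already in the time-bounded setting, and apply the encoding above. For \PSPACE-hardness of $\prb^{\min}_{\T}(\lozenge \goal) \geq 1$, I will reduce from time-bounded non-probabilistic reachability in timed automata, which is still \PSPACE-hard (the standard reduction from linearly bounded automata goes through with a global time bound polynomial in the input), viewing the automaton as a trivially probabilistic PTA where every discrete action has a Dirac distribution. Composing the encoding with a standard complementation trick (swap the roles of $\goal$ and $\fail$, exploiting the equivalence $\prb^{\min}(\lozenge \goal) \geq 1 \Leftrightarrow \neg(\prb^{\max}(\lozenge \fail) > 0)$ valid in the almost-sure setting) finishes the argument.

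The main subtlety will be showing that the instances produced by the original hardness reductions really can be assumed to be time-bounded without losing hardness, and that the added clock $c^\ast$ does not interfere with the scheduler's choices in a way that changes the answer. This reduces to a routine inspection of the reductions in~\cite{LaroussinieS07} and~\cite{AlurD1994}, together with the monotonicity observation that adding an invariant can only force earlier redirection to $\fail$, which is harmless for the reduction target. I do not anticipate new technical difficulties beyond this bookkeeping.
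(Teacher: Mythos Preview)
Your approach is correct and reaches the same conclusion as the paper, but via a different mechanism. The paper argues directly on the Turing-machine side of the two cited reductions: it observes that the linearly bounded (alternating) Turing machine can be assumed never to repeat a configuration by running a step counter alongside the computation, so that the PTA produced by the reduction already has the property that every path visits only finitely many configurations and hence every time-divergent run ends in $\goal$ or $\fail$. You instead treat the reductions as black boxes for \emph{time-bounded} reachability and impose the almost-sure property externally via the extra clock $c^\ast$ with invariant $c^\ast \leq K$. Both routes are valid; yours is more modular (it does not require re-opening the reductions) but depends on having the time-bounded hardness statements at hand, which for the \PSPACE{} case you only sketch. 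The complementation step for $\prb^{\min}$ is identical in both proofs.

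One point in your argument for property~(ii) should be tightened. You write that once $c^\ast = K$, a discrete transition to a non-absorbing location ``violates $c^\ast \leq K$ on the next attempt''. That is not literally true: such a transition lands in a location whose invariant $c^\ast \leq K$ is still satisfied (with equality), and from there further discrete transitions at time $K$ are possible, potentially ad infinitum. None of these runs are redirected to $\fail$ by the semantics. The correct argument is simply that any such run accumulates total delay at most $K$ and is therefore not time-divergent; hence a time-divergent scheduler must, almost surely, eventually enter a location without the bound $c^\ast \leq K$, and the only such locations are $\goal$ and $\fail$. With this fix your plan goes through.
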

\begin{proof}
    In~\cite{LaroussinieS07} is is shown that deciding $\prb_{\T}^{\max}(\lozenge \goal) \geq 1$ is \EXPTIME-hard.
  The proof goes by a direct reduction from the non-emptiness problem of a linearly bounded, alternating Turing machine (Theorem 3.1).
  It is also noted that one can assume without loss of generality that no configuration of the Turing machine is repeated in any run.
  This can always be enforced by letting a counter (encoded in binary) run along the computation, which is increased at every step until the maximal number of possible configurations.
  As the configurations of the Turing machine are encoded in the clock valuation of the PTA, the construction of Theorem 3.1 for such Turing machines yield PTA in which no state can be repeated on any path.
  Furthermore, as the number of configurations of the TM is finite, each time-divergent path will eventually reach $\goal$ (the accepting configuration) or $\fail$ (when the counter exceeds the maximum bound).
  It follows that the problem is already hard for PTA under the mentioned assumption.

  The problem $\prb_{\T}^{\max}(\lozenge \goal) > 0$ is \PSPACE-hard as it essentially asks for \emph{any} path that reaches $\goal$, and hence can be used to encode non-probabilistic reachability problem, which was shown to be \PSPACE-hard in~\cite[Theorem 4.17]{AlurD1994}.
  Again, this proof goas via a reduction from a linearly bounded Turing machine and by a similar argument as before it can be seen that one can assume that all time-divergent paths reach $\goal$ or $\fail$.
 
  Under these assumptions, $\prb_{\T}^{\max}(\lozenge \goal) > 0$ can be reduced to $\prb_{\T}^{\min}(\lozenge \goal) \geq 1$ by replacing $\goal$ and $\fail$, and hence it follows that this problem is \PSPACE-hard.
\end{proof}

\partialorders*
\begin{proof}
	Let $\T_1, \T_2$ be PTAs satisfying $\T_1 \leq_{\inv} \T_2$.
	Then $\T_1 \leq_{\loc} \T_2$ follows directly by $\Loc(\T_1) \subseteq \Loc(\T_2)$ and $\T_1 \leq_{\vol} \T_2$ follows by $\inv_{\T_1}(l) \Implies \inv_{\T_2}(l)$ for all $l \in \Loc$.
	
	By considering two PTAs with a single location and different invariants, it becomes clear that $\T_1 \leq_{\loc} \T_2$ does not imply $\T_1 \leq_{\vol} \T_2$ nor $\T_1 \leq_{\inv} \T_2$.
	To see that $\T_1 \leq_{\vol} \T_2$ does not imply $\T_1 \leq_{\loc} \T_2$ or $\T_1 \leq_{\inv} \T_2$ in general it suffices to arrange $\T_1$ to have one location more than $\T_2$, but less volume in total.
	\qed
\end{proof}

\locminimal*
\begin{proof}
	``$\implies$'': Let $\T'$ be a strong subsystem of $\T$ such that $\prb^{\min}_{\T'}(\lozenge \goal)
	\geq\lambda$ with at most $k$ locations.
  Let $S' = \{[s]\in S\mid s \:\text{ is a state in }\: \Scal(\T')\}$.
	Then, by \Cref{prop:subsysandfarkascert} there exists a Farkas certificate $\zb$ for $\prb^{\min}_{\M}(\lozenge \goal) \geq \prb^{\min}_{\T'}(\lozenge \goal)$ (and hence for $\prb^{\min}_{\M}(\lozenge \goal) \geq \lambda$) satisfying $\supp(\zb) \subseteq S'$.
	Let $\zeta$ be defined by
	\[
	\zeta_l =
	\begin{cases}
	1 & \text{ if there exists a } v \in \Val(\C) \text{ s.t. } \zb_{[(l,v)]} > 0 \\
	0 & \text{ otherwise }
	\end{cases}
	\]
	Then, $(\zb,\zeta)$ satisfies (\ref{eq:loc constr}).
  Here we use that if $\zb \in \P_{\M}^{\min}(\lambda)$, then $\zb_{s} \leq 1$ holds for all $s \in S$ (see~\cite[Lemma 3.1.]{FunkeJB20}).
	Also, $\zeta$ has at most $k$ non-trivial entries as $S'$ contains states from at most $k$ different locations (this uses the fact that $\sim$ distinguishes locations) and $\supp(\zb) \subseteq S'$.
	
	``$\Longleftarrow$'': Let $(\zb,\zeta)$ be a solution of (\ref{eq:loc constr}) such that $\zeta$ has at most $k$ non-trivial entries.
	By~\Cref{prop:witsubsysandfarkcert} it follows that $\T_{\supp(\zb)}^s$ is a witness for $\prb^{\min}_\T(\lozenge \goal) \geq \lambda$.
	The locations of $\T_{\supp(\zb)}^s$ are $\Loc' = \{l \in \Loc \mid \exists v. \; [(l,v)] \in  \supp(\zb) \} \cup \{\goal,\fail\}$, and as $\zeta$ is non-trivial in at most $k$ entries, it follows that $|\Loc' \setminus \{\goal, \fail\}| \leq k$.
	\qed
\end{proof}

\locminimalmilp*
\begin{proof}
  It is enough to show that (\ref{eq:loc milp}) can be solved in time $2^{|\Loc|} \cdot \poly(|\M|)$.
  This can be done by enumerating the vectors $\vb \in \{0,1\}^{|\Loc|}$ and checking for each of them whether a $\zb$ exists such that $(\zb,\vb)$ satisfies (\ref{eq:loc constr}).
  This check amounts to solving a linear program of size $|\Loc| + |\M|$.
  Finally, a vector $\vb$ with a maximal amount of zeros is returned, and it encodes a loc-minimal witness by \Cref{prop:locminimalws}.
  \qed
\end{proof}

\invminimal*
\begin{proof}	
	From~\Cref{prop:witsubsysandfarkcert} it follows that $\T_{\supp(\zb)}^s$ is a witness for $\prb^{\min}_{\T}(\lozenge \goal) \geq \lambda$.
	Assume that it is not $\leq_{\inv}$-minimal, that is, there exists a witness $\T'$ for $\prb^{\min}_{\T}(\lozenge \goal) \geq \lambda$ such that $\T' <_{\inv} \T_{\supp(\zb)}^s$.
	Let $\M$ be the $\sim\,$-quotient of of $\T$ with states $S \cup \{\goal,\fail\}$ and let $S' = \{[s] \in S \mid s \text{ is a state of } \Scal(\T')\}$.
	By~\Cref{prop:subsysandfarkascert} there is a Farkas certificate $\zb' \in \P^{\min}_{\M}(\lambda)$ with $\supp(\zb') \subseteq S'$. We now define a vector $\vb$ that will be the second component in a solution of (\ref{eq:inv milp}).
	First, entries of $\vb$ that refer to locations not in $\T'$ are set to $0$.
	For all other locations $l$, $c_i,c_j \in \C$ with $j \neq 0$, and $k \in \{-2K,\ldots,2K\}$, let $(M_{\inv_{\T'}(l)})_{ij} = (a,\triangleleft)$. We define
	\[ {\vb}_{ij}^{l}(k) = 
	\begin{cases}
	1 & \text{if } a > \ceil*{k/2} \\
	1 & \text{if } a = \ceil*{k/2} \text{ and } \triangleleft = \, \leq \\
	1 & \text{if } a = \ceil*{k/2}, k \text{ is odd, and } \triangleleft = \, < \\
	0 & \text{otherwise}
	\end{cases}
	\]
	
	We now argue that $(\zb',\vb)$ satisfies (\ref{eq:inv constr}). The condition $\zb' \in \P^{\min}_{\M}(\lambda)$ therein holds by assumption, and the condition $\vb_{ij}^l(n) \leq \vb_{ij}^l(n{-}1)$ is immediate. Now take $[(l,v)] \in S$ with $\zb'_{[(l,v)]} > 0$ (for the other states in $\M$ there is nothing to show). From $\supp(\zb')\subseteq S'$, it follows that $[(l,v)] \in S'$ and hence there exists a $(l,v') \sim (l,v)$ (using the assumption that $\sim$ distinguishes locations) such that $v' \models \inv_{\T'}(l)$. As, by assumption, $\sim$ distinguishes in each location valuations which are distinguishable by clock constraints, we have $v'' \models \inv_{\T'}(l)$ for all $(l,v'') \in [(l,v)]$.\footnote{This argument uses the fact that there is an upper-bound $K$ on all clocks, as the standard region construction would not differentiate between valuations exceeding the greatest appearing integer in any clock constraint.}
 As a consequence, we get $(M_{[(l,v)]})_{ij} \preceq (M_{\inv_{\T'}(l)})_{ij} = (a,\triangleleft)$ for all $c_i,c_j \in \C$\footnote{$M_{[(l,v)]}$ is defined as $M_{s|_l}$ in \Cref{def:indsubsys}. As $\sim\,$ distinguishes locations, we omit the $|_l$ subscripts.}.
	
	Now we distinguish the following three cases corresponding to the case distinction in (\ref{eq:inv constr}):
	\begin{enumerate}
		\item If $(M_{([l,v])})_{ij} = (b,<)$ for some $b\in \mathbb{Z}$, we need to check that $\vb_{ij}^l(2b{-}1) = 1$.
		As $(b,<) \preceq (a,\triangleleft)$ we have either $b < a$, or $b = a$ and $\triangleleft$ might be $<$ or $\leq$.
		In the first case we have $a > b= \ceil*{(2b{-}1)/2}$ and hence $\vb_{ij}^l(2b{-}1) = 1$.
		In the second, we have $a = b= \ceil*{(2b{-}1)/2}$. By inspecting the definition of $\vb$ on odd values, we see that $\vb_{ij}^l(2b{-}1) = 1$, irrespective of the value of $\triangleleft$.
		
		\item If $(M_{([l,v])})_{ij} = (b,\leq)$ for some $b\in \mathbb{Z}$, we need to check that $\vb_{ij}^l(2b) = 1$.
		As $(b,\leq) \preceq (a,\triangleleft)$ we have either $b <  a$ or $b=a$ and $\triangleleft = \leq$. By inspecting the definition of $\vb$ one sees that in both cases we have $\vb_{ij}^l(2b) = 1$.
	\end{enumerate}
	
	We conclude that $(\zb',\vb)$ satisfies (\ref{eq:inv constr}). Now we argue that 
	\begin{equation}\label{eq:inv minimal} 
		\sum_{l,i,j,k}\vb_{ij}^l(k) < \sum_{l,i,j,k}\xi_{ij}^l(k)
	\end{equation}
	which would contradict the fact that $(\zb,\xi)$ is optimal.
	We first show that the LHS in \cref{eq:inv minimal} is less or equal than the RHS. We establish this fact summand-wise.
	
	Let us assume first that $k$ is odd and fix some $l \in \Loc$, $c_i,c_j \in \C$ such that $c_j \neq \zerocl$.
	Then, if $\vb_{ij}^l(k) = 1$, we have $(\ceil*{k/2},<) \preceq (M_{\inv_{\T'}(l)})_{ij}$.
	As $\T' <_{\inv} \T_{\supp(\zb)}^s$ we have $(M_{\inv_{\T'}(l)})_{ij} \preceq (M_l^s)_{ij}$, where $M_l^s$ is the invariant DBM for $\T_{\supp(\zb)}^s$.
	By the construction of $M_l^s$ (see~\Cref{def:indsubsys}) there exists some $[(l,v)] \in S$ such that $\zb_{[(l,v)]} > 0$ and $(M_{[(l,v)]})_{ij} = (M_l^s)_{ij}$.
	So, $(\ceil*{k/2},<) \preceq (M_{[(l,v)]})_{ij} = (b, \triangleleft)$ holds.
	As $k$ is odd, $b \geq \ceil*{k/2}$ implies $2b{-}1 \geq k$.
	Also, from $\zb_{[(l,v)]} > 0$ it follows that $\xi_{ij}^l(2b{-}1) = 1$.
	This, with $\xi_{ij}^l(n-1) \geq \xi_{ij}^l(n)$, yields $\xi_{ij}^l(k) = 1$.
	The case where $k$ is even is similar.
	
	It remains to show that the LHS in \cref{eq:inv minimal} is strictly smaller than the RHS.
	As $\T' <_{\inv} \T_{\supp(\zb)}^s$, either the locations of $\T'$ are strictly included in the locations of $\T_{\supp(\zb)}^s$, or, for some location $l$ the invariant in $\T'$ is strictly stronger than the invariant of $\T_{\supp(\zb)}^s$.
	In the first case there is some location $l$ such that some $\xi_{ij}^l(k)$ is $1$, whereas no $\vb_{ij}^l(k)$ is $1$, which yields the claim.
	
	In the other case, there is some location $l$, and $c_i,c_j \in \C$ with $c_j \neq \zerocl$ such that $(M_{\inv_{\T'}(l)})_{ij} \prec (M_l^s)_{ij}$.
	The reason that we can exclude $c_j = \zerocl$ is that both $\T'$ and $\T_{\supp(\zb)}^s$ are strong subsystems of $\T$ and hence need to agree with $\T$ on all time-upper bounds of individual clocks (see condition (4) of~\Cref{def:subsystem}).
	Let $(M_{\inv_{\T'}(l)})_{ij} = (a,\triangleleft_1)$ and $(M_l^s)_{ij} = (b,\triangleleft_2)$.
	Again, there is some $[(l,v)] \in S$ such that $\zb_{[(l,v)]} > 0$ and $(M_{[(l,v)]})_{ij} = (b,\triangleleft_2)$.
	
	First, consider the case $a < b$.
	We have $\xi_{ij}^l(2b{-}1) = 1$.
	As $a < b = \ceil*{(2b{-}1)/2}$ we get $\vb_{ij}^l(2b{-}1) = 0$. Secondly, assume that $a = b$, $\triangleleft_1 = \, <$ and $\triangleleft_2 = \, \leq$.
	We have $\xi_{ij}^l(2a) = 1$ but as $2a$ is even and $\triangleleft_1 = \, <$, $\vb_{ij}^l(2a) = 0$.
	\qed
\end{proof}

\invvolnotdisjoint*
\begin{proof}
	Assume first that there exists a vol-minimal witness with finite volume.	Suppose that the sets of vol- and inv-minimal witnesses were disjoint.
	Then for each vol-minimal witness $\T_1$ there must exist another witness $\T_2$ such that $\T_2 <_{\inv} \T_1$, as otherwise $\T_1$ would be inv-minimal.
	By definition of $\leq_{\inv}$ it follows that $\vol(\T_2) \leq \vol(\T_1)$ and as $\T_1$ is vol-minimal, we get $\vol(\T_2) = \vol(\T_1)$.
	Iterating this argument yields an infinitely descending chain of finite-volume subsystems that are all strictly smaller in the $\leq_{\inv}$ order.
	But this cannot exist, as the relation $<_{\inv}$ over finite-volume subsystems of $\T$ is well-founded.
	
	Now suppose that a vol-minimal witness for $\prb^*_\T(\lozenge\goal)\geq \lambda$ has infinite volume. Then, trivially, any witness for $\prb^*_\T(\lozenge\goal)\geq \lambda$ is vol-minimal since they all have infinite volume. In particular, every inv-minimal witness is also vol-minimal.
	\qed
\end{proof}

\dbmhashp*
\begin{proof}
  From the proof of~\cite[Theorem 5.1.4]{GritzmannK1994} it follows that volume computation is \hashP-hard already for polytopes of the form
  \[ \P^I = \{x \in [0,1]^n \mid \forall (i,j) \in I. \; x(i) \leq x(j) \}\]
  for a given $I \subseteq \{1,\ldots,n\}^2$.
  On the other hand, such a polytope can be defined using a DBM over clocks $\C = \{c_0,\ldots,c_n\}$ as follows:
  \[M^I_{ij} = \begin{cases}
    (1, \leq) & \text{ if } i\geq 1, j = 0 \\
    (0, \leq) & \text{ if } i= 0, j \geq 0 \\
    (0, \leq) &  \text{ if } (i,j) \in I \\
    (1, \leq) & \text{ otherwise}
  \end{cases} \]
  The first two cases represent the constraint $0 \leq c_i \leq 1$ for all clocks.
  The third case formalizes that $c_i - c_j \leq 0$ should hold whenever $(i,j) \in I$. Given that $0\leq c_i\leq 1$, the fourth condition does not impose any further restriction on the polytope. Then $\P^I$ equals $\Val(M^I)$ considered as a subset of $\mathbb{R}^{\C\setminus\{c_0\}} \cong \mathbb{R}^n$, and hence $\vol(\P^I) = \vol(\Val(M^I))$.
  \qed
 \end{proof}

\leqvolpp*
\begin{proof}
		As the problem of computing $\vol(\Val(M))$ is \hashP-hard by~\Cref{prop:dbmvolhashp}, it follows that the corresponding threshold problem $\vol(\Val(M)) \geq k$ for a given $k \in \mathbb{Q}$ is \PP-hard under polynomial-time Turing reductions.
		By the proof of~\Cref{prop:dbmvolhashp} it follows that it is hard already for DBM that have $1$ as an upper bound for each variable, and only use $\leq$ comparisons.
		We show that computing the volume-threshold problem for such DBM can be reduced to deciding whether $\T_1 \leq_{\vol} \T_2$ holds given a PTA $\T$ and two subsystems $\T_1,\T_2$ of $\T$.
		
		Let $M$ be such a DBM over $n$ clocks.
		We let $\T$ be the PTA that has two locations $l_1, l_2$ with invariants $M_1$ and $M_2$, respectively, defined as follows.
		$M_1$ inherits all its entries from $M$, apart from the upper bounds (that is, comparisons with the zero-clock) which are set to $n!$.
		As $n! = \mathcal{O}(2^{n \, \log n})$, we can express $n!$ in $\poly(n)$ bits.
		We have: $\vol(\Val(M_1)) = n!^n \cdot \vol(\Val(M))$.
		Hence $\vol(\Val(M)) \geq k$ is equivalent to $\vol(\Val(M_1)) \geq k \cdot n!^{n}$.
		
		As $\vol(\Val(M))$ is a multiple of $1/n!$ \footnote{This follows from the fact that $n!$ different full-dimensional ``regions'' in the 1-cube with the same size can be distinguished by a DBM. They correspond to the possible relative values of each pair of clocks, which in turn corresponds to the possible permutations of $1,\ldots,n$. } we can assume that so is $k$ (or we round up to the nearest rational with this property). 
		We let $M_2$ be the DBM that describes a row of $k \cdot n!^n$ (which is an integer) 1-cubes in $n$ dimensions.
		This is achieved by letting all variables have upper bound $1$ apart from a single variable with upper bound $k \cdot n!^{n}$ (note that $k \cdot n!^{n} = \mathcal{O}(k \cdot (2^{n \log n})^{n}) = \mathcal{O}(k \cdot (2^{n^2 \cdot \log n}))$ and hence expressible with $\poly(n) + \log(k)$ many bits).
		We have $\vol(\Val(M_2)) = k \cdot n!^{n}$.
		
		Now let $\T_1$ be the subsystem that includes only location $l_1$, and $\T_2$ be the subsystem that includes only location $l_2$.
		Then we have $\vol(\Val(M)) \geq k$ iff $\T_2 \leq_{\vol} \T_1$, which completes the reduction of the threshold problem for the volume of valuation sets of DBMs to deciding $\leq_{\vol}$.
		\qed
	\end{proof}

\end{document}